\newtheorem{definition}{Definition}[section]
\newtheorem{remark}{Remark}[section]
\newtheorem{lemma}{Lemma}[section]
\newtheorem{theorem}{Theorem}[section]
\newtheorem{assumption}{Assumption}[section]
\def\QED{~\rule[-1pt]{5pt}{5pt}\par\medskip}
\newenvironment{proof}{{\it  Proof: \ }}{ \hfill \QED}
\newenvironment{proofof}{{\it \quad Proof of  }}{ \hfill \QED}
\newcommand{\sign}{\operatornamewithlimits{sign}}
\newcommand{\diag}{\operatornamewithlimits{diag}}
\newcommand{\blkdiag}{\operatornamewithlimits{blkdiag}}
\newcommand{\argmin}{\operatornamewithlimits{argmin}}
\newcommand{\spand}{\operatornamewithlimits{span}}
\newcommand{\Ker}{\operatornamewithlimits{Ker}}
\title{\LARGE Consensus with Output Saturations
}
\author{\normalsize Young-Hun Lim and Hyo-Sung Ahn
\thanks{The authors are with School of Mechatronics,
Gwangju Institute of Science and Technology, Gwangju, Korea.
E-mail: {hoonnim@gist.ac.kr}; {hyosung@gist.ac.kr.}}%
}
\begin{document}

\maketitle 
\thispagestyle{empty}
\pagestyle{empty}


\begin{abstract}
This paper consider a standard consensus algorithm under output saturations.
In the presence of output saturations, global consensus can not be realized due to the existence of stable, unachievable equilibrium points for the consensus.
Therefore, this paper investigates necessary and sufficient initial conditions for the achievement of consensus, that is an exact domain of attraction.
Specifically, this paper considers singe-integrator agents with both fixed and time-varying undirected graphs, as well as double-integrator agents with fixed undirected graph.
Then, we derive that the consensus will be achieved if and only if the average of the initial states (only velocities for double-integrator agents with homogeneous saturation levels for the outputs) is within the minimum saturation level.
An extension to the case of fixed directed graph is also provided in which an weighted average is required to be within the minimum saturation limit.
\end{abstract}

\section{INTRODUCTION}
For the last one decade, the consensus problem has been attracted a lot of attention due to wide applications such as flocking, sensor networks, unmanned air vehicle (UAV) formations, etc (see, e.g., \cite{Olfati-Saber:2004,Olfati-Saber:2007,Ren:2007} and the references therein). 
In this problem setup, each agent measures its own state, and exchanges this information with its neighbors such that the states of all agents converge to a certain value.
Consider a group of $N$ single-integrator modeled agents, and let $x_{i}, y_{i} \in \mathbf{R}$ be the state and the measured output of agent $i$.
Then, a standard consensus algorithm takes the following form \cite{Olfati-Saber:2004}:
\begin{align} \label{int_sys}
\dot{x}_{i} = \sum_{j=1}^{N} \alpha_{ij} (t) ( y_{j} - y_{i} ), \,~~ i \in \mathcal{V} := \{1,...,N\},
\end{align}
and then, the overall networked agent has the form
\begin{align} \label{int_sys1}
\dot{x} = - L (t) y.
\end{align}
For an undirected graph, the null space of the Laplacian matrix $L(t)$ is $\spand \{ \mathbf{1} \}$. Therefore, an equilibrium of (\ref{int_sys1}) is the state in the form $y^{*} = C \mathbf{1}$, $C \in \mathbf{R}$.
If each agent can measure the exact state, i.e., $y_{i} = x_{i}$, $\forall i \in \mathcal{V}$,
then, when the fixed graph is connected \cite{Olfati-Saber:2004,Olfati-Saber:2007,Ren:2007} or the time-varying graph is integrally connected over $[0, \infty)$ \cite{Cao:2011}, $y^{*} = x^{*} = C \mathbf{1}$ is an unique equilibrium that implies the agents (\ref{int_sys}) achieve the consensus.

Meanwhile, in real applications, the usage of measurement units may lead to nonlinearities over the network.
For example, due to digital communication channels or digital sensors, the consensus problems under quantization effects have been studied for a fixed graph in \cite{Ceragioli:2011} and for a time-varying graph in \cite{Persis:2012}.
The consensus has been derived by utilizing some properties of the Laplacian \cite{Ceragioli:2011} and the integral graph \cite{Persis:2012}.
In \cite{Hui:2008,Liu:2009,Nosrati:2012}, the consensus problems have been studied for more general nonlinearities with (strictly) increasing or decreasing conditions.
In the above results, the nonlinearities were assumed to be unbounded.

On the other hands, there exists bounded nonlinearity, called output saturation, due to range limitations of the measurement units.
Therefore, the control problem of systems subject to output saturation has been widely studied by several authors.
Global and semi-global stabilization problems have been studied in \cite{Kreisselmeier:1996,Grip:2010,Kaliora:2004} and \cite{Lin:2001}, respectively.
In \cite{Turner:2009}, a dynamic anti-windup strategy has been discussed.
While the stabilization under output saturations has been addressed in much detail, the consensus problem has received fewer results \cite{Liu:2012,Lim:2014}.
Note that, in \cite{Liu:2009,Lim:2014}, it was pointed out that for the bounded nonlinearities, the consensus may not be realized due to the existence of  stable, unachievable equilibrium points for consensus (see \textit{Remark 3} in \cite{Liu:2009}).
Let us consider a simple example when $y_{i} = \mbox{sat} ( x_{i} )$ with $\mbox{sat} ( \cdot ) = \sign ( \cdot ) \max \{ | \cdot |, 1 \}$.
In this case, an equilibrium of (\ref{int_sys1}) is the state in the form $\mbox{sat} ( x^{*} ) = C \mathbf{1}$,
Thus, the set of equilibria of (\ref{int_sys1}) can be divided into two groups as
$\Omega_{a} := \{ x \in \mathbf{R}^{N} : x = C \mathbf{1}, | x | \le \mathbf{1} \}$ and $\Omega_{u} = \Omega_{u^{+}} \cup \Omega_{u^{-}}$, where  $\Omega_{u^{+}} := \{ x \in \mathbf{R}^{N} : x \ge  \mathbf{1} ,  x  \neq  \mathbf{1} \}$ and $\Omega_{u^{-}} := \{ x \in \mathbf{R}^{N} : x \le - \mathbf{1} ,  x  \neq  - \mathbf{1} \}$.
It is clear that $\Omega_{a}$ is the set of achievable equilibrium, i.e., $x^{*} \in \Omega_{a}$ implies that the consensus is achieved, but $\Omega_{u}$ may not (see, Section~\ref{simulation}).
Therefore, \cite{Liu:2012,Lim:2014} have developed the consensus algorithms under the bounded constraints.
Specifically, in \cite{Liu:2012}, the discarded consensus algorithm, which discards the state of a neighbor if the state is outside its constraint, was proposed.
In \cite{Lim:2014}, the output feedback based leader-following consensus algorithm was studied.

Note that, under the standard consensus algorithm, the agents converge to the average value.
However, as mentioned above, the consensus with the standard setup may not be achieved under output saturations.
Although some results have been available for the consensus problem under output saturations, an analytic result has not been achieved.
Therefore, this paper investigates conditions for achieving consensus under output saturations.   

We consider the dynamics of each agent as a single-integrator, and both fixed and time-varying undirected graphs.
Moreover, we consider homogeneous and heterogeneous saturation levels, in which the agents have identical and different saturation levels, respectively.
Then, we first analyze the consensus under the fixed and connected graph.
By utilizing an integral Lyapunov function, we investigate necessary and sufficient conditions for achieving the consensus, that is an exact domain of attraction.
We next consider the consensus under the time-varying graph topology with an integrally connected condition, which is the necessary and sufficient graph condition for achieving the consensus.
We analyze an attractivity of equilibrium, and then by investigating conditions for the achievable equilibrium, we derive the necessary and sufficient conditions.
Moreover, we extend the results to the cases of  double-integrator modeled agents as well as the fixed, directed graph.

Sequentially, the main contributions of this paper are as follows.
First, under the standard consensus algorithm, we prove an asymptotic convergence of agents with output saturations.
We consider general saturation levels and graph topology.
The analysis techniques of this paper rely on the strictly increasing property of the saturation function within its bounds.
Thus, the analysis can be easily extended to any bounded nonlinearities, which are strictly increasing within its bounds.
Second, we investigate some properties of the set of equilibria.
Then, necessary and sufficient initial conditions for achieving the consensus are obtained, that is an exact domain of attraction.
Third, the analytic results are extended to the cases of double-integrator modeled agents as well as fixed and directed graph cases.

The remainder of this paper is organized as follows. In Section~\ref{preliminaries}, some basic definitions and notations are reviewed, and the problem statement and preliminaries are presented.
In Section~\ref{sec_fixed} and Section~\ref{sec_time-varying}, necessary and sufficient conditions are derived for fixed and time-varying graphs, respectively.
In Section~\ref{extension}, we further consider the case of double-integrator modeled agents, and fixed and directed graph.
In Section~\ref{simulation}, numerical examples are presented.
Then, the conclusions and suggestions for future work are presented in Section~\ref{conclusion}, and  some of the proofs are given in Appendix.

\section{Problem statement and preliminaries}	\label{preliminaries}


\subsection{Graph Theory}

A (fixed) graph $\mathcal{G}$ is defined as three-tuple $(\mathcal{V}, \mathcal{E}, \mathcal{A} )$, where $\mathcal{V}$ denotes the set of nodes, $\mathcal{E} \subseteq \mathcal{V} \times \mathcal{V}$ denotes the set of edges, and $\mathcal{A} = [ \alpha_{ij} ] \in \mathbf{R}^{N \times N}$, where $\alpha_{ij}$ is the weight assigned to edge $(i,j)$, denotes the underlying weighted adjacency matrix defined as $\alpha_{ij} > 0$ if $(i,j) \in \mathcal{E}$, and $\alpha_{ij} = 0$ otherwise.
The Laplacian matrix of the graph is defined as $L = \mathcal{D} - \mathcal{A}$, where $\mathcal{D} = \diag (\mathcal{A} \mathbf{1}_{N} ) \in \mathbf{R}^{N \times N}$

A graph $\mathcal{G} = (\mathcal{V}, \mathcal{E}, \mathcal{A})$ is said to be undirected if $(i,j) \in \mathcal{E}$, then $(j, i) \in \mathcal{E}$, that is $\alpha_{ij} = \alpha_{ji}$, $\forall i,j \in \mathcal{V}$,
otherwise it is termed a directed graph.
For the undirected graph, the adjacency matrix is symmetric, i.e., $\mathcal{A}^{T} = \mathcal{A}$, and thus $L$ is positive semidefinite real symmetric matrix, so all eigenvalues of $L$ are non-negative real.
For the directed graph, $L$ needs no longer to be symmetric, but the eigenvalues of $L$ have non-negative real part.
A directed path is a sequence of edges in the directed graph of the form $(i_{1}, i_{2}), (i_{2}, i_{3}), ...$.
An undirected path in the undirected graph is defined analogously.

\begin{definition} (Connectivity of fixed graph) \label{def_connec_fixed}
A directed graph $\mathcal{G}$ is said to be strongly connected if there exists a directed path between any two distinct nodes.
An undirected graph $\mathcal{G}$ is said to be connected if there exists an undirected path between any two distinct nodes.
\end{definition}

For an undirected graph, $0$ is a simple eigenvalue of $L$ if and only if the undirected graph is connected.
For a directed graph, $0$ is a simple eigenvalue of $L$ if the directed graph is strongly connected.

A graph is said to be time-varying if it changes over time $t$, and denoted by $\mathcal{G} (t) = ( \mathcal{V}, \mathcal{E} (t) , \mathcal{A} (t) )$.

\begin{definition} \cite{Cao:2011} (Integral graph) \label{def_integral_graph}
Given a time-varying graph $\mathcal{G} (t) = ( \mathcal{V}, \mathcal{E} (t), \mathcal{A} (t) )$, the integral graph of $\mathcal{G}(t)$ on $[0, \infty)$ is a constant graph $\bar{\mathcal{G}}_{[0,\infty)} := ( \mathcal{V}, \bar{\mathcal{E}}, \bar{\mathcal{A}})$, where $\mathcal{V}$ is the same node set of $\mathcal{G} (t)$, and $\bar{\mathcal{A}} = [ \bar{\alpha}_{ij}] \in \mathbf{R}^{N \times N}$ is defined by $\bar{\alpha}_{ij} = 1$ if $\int_{0}^{\infty} \alpha_{ij} (t) dt = \infty$, and $\bar{\alpha}_{ij} = 0$ otherwise.
\end{definition}

\begin{definition} \cite{Cao:2011} (Integral connectivity of time-varying undirected graph) \label{def_integral_connec}
A time-varying undirected graph $\mathcal{G} (t)$ is said to be integrally connected over $[0, \infty)$ if its integral graph $\bar{\mathcal{G}}_{[0,\infty)}$ is connected.
\end{definition}

\begin{remark} \cite{Cao:2011}
If a time-varying undirected graph $\mathcal{G} (t)$ is integrally connected over $[0, \infty)$, then there exists a time interval $0 = t_{0} < t_{1} < \cdots < t_{k} < \cdots$ such that $\int_{t_{k-1}}^{t_{k}} L (t) dt$ is connected $\forall k \ge 0$.
\end{remark}

Analogous criterion referred as the ``$\delta$-connected graph'' was studied in \cite{Shi:2013}.
An edge $(i,j)$ is said to be a $\delta$-edge of $\mathcal{G} (t)$ on time interval $[t_{k-1}, t_{k} )$ if $\int_{t_{k-1}}^{t_{k}} \alpha_{ij} (t) \ge \delta$.

\begin{definition} \cite{Shi:2013} ($\delta$-connected graph) \label{def_delta_connec}
A time-varying graph $\mathcal{G} (t)$ is said to be uniformly $\delta$-connected if there exists a constant $T >0$ such that for any $t \ge 0$, the $\delta$-edges of $\mathcal{G} (t)$ on time interval $[t, t+T)$ form a connected graph.
If the $\delta$-edges on time interval $[t, \infty)$ form a connected graph, then the graph $\mathcal{G} (t)$ is said to be infinitely $\delta$-connected.
\end{definition}

\subsection{Problem Formulation}

We consider a group of $N$ single-integrator modeled agents under output saturations, and the following standard consensus algorithm:
\begin{align} \label{ch4_sys}
\dot{x}_{i} =& \sum_{j=1}^{N} \alpha_{ij} (t) ( y_{j} - y_{i} ), \,~~ i \in \mathcal{V} := \{ 1,...,N \}, \nonumber\\
y_{i} =& \mbox{sat}_{i} ( x_{i} ),
\end{align}
where $x_{i} , y_{i} \in \mathbf{R}$ are the state and the measured output of the agent $i$, and the saturation function is defined as
\begin{align} \label{ch4_saturation}
\mbox{sat}_{i} ( x_{i} ) = \sign ( x_{i} ) \max \{ | x_{i} | , s_{i} \}, \,~~ s_{i} > 0,
\end{align}
where $s_{i}$ represents the saturation level, and we use $\mbox{sat}_{i} ( x_{i} ) = \mbox{sat} (x_{i})$ for $s_{i} = s$, $\forall i \in \mathcal{V}$.
Then, we say that the agents are homogeneous if $s_{i} = s$, $\forall i \in \mathcal{V}$, and heterogeneous otherwise.

This paper studies the consensus problem for the $N$ agents with output saturations (\ref{ch4_sys}).
\begin{definition} \label{def_consensus}
The consensus is said to be achieved for the group of $N$ agents if $\lim_{t \rightarrow \infty} x_{i}  = C$, $\forall i \in \mathcal{V}$, where $C \in \mathbf{R}$ is called the group decision value.
\end{definition}

\begin{lemma} \label{ch4_lem_average_undirected}
Consider the group of $N$ agents (\ref{ch4_sys}), and suppose the graph is undirected.
Then, the average of all agent states $\frac{1}{N} \sum_{i=1}^{N} x_{i} (t)$ is invariant, $\forall t \ge t_{0}$.
\end{lemma}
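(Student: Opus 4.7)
The plan is to show that the time derivative of $\sum_{i=1}^{N} x_{i}(t)$ is identically zero, which immediately yields invariance of the average.

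First, I would sum the dynamics in (\ref{ch4_sys}) over all agents to obtain
\begin{equation*}
\frac{d}{dt} \sum_{i=1}^{N} x_{i}(t) = \sum_{i=1}^{N} \sum_{j=1}^{N} \alpha_{ij}(t) \bigl( y_{j}(t) - y_{i}(t) \bigr).
\end{equation*}
The right-hand side is a double sum over the index set $\mathcal{V} \times \mathcal{V}$. The key step is to exploit the symmetry of the undirected graph, namely $\alpha_{ij}(t) = \alpha_{ji}(t)$ for all $i,j \in \mathcal{V}$ and all $t$. Swapping the dummy indices $i \leftrightarrow j$ in the double sum gives $\sum_{i,j} \alpha_{ji}(t)(y_{i}(t) - y_{j}(t)) = -\sum_{i,j} \alpha_{ij}(t)(y_{j}(t) - y_{i}(t))$, so the double sum equals its own negation and therefore vanishes.

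Equivalently, one can argue that $\mathbf{1}^{T} L(t) = 0$ for all $t$ because the Laplacian of any undirected (in fact, any balanced) graph has $\mathbf{1}$ in the left null space, so $\mathbf{1}^{T} \dot{x} = -\mathbf{1}^{T} L(t) y = 0$ regardless of what the (possibly nonlinear) outputs $y_{i} = \mathrm{sat}_{i}(x_{i})$ are. I would present this compact linear-algebra version as the main line, since it makes clear that the saturation map plays no role and that the argument carries over verbatim to any output map.

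Finally, integrating from $t_{0}$ to $t$ yields $\sum_{i=1}^{N} x_{i}(t) = \sum_{i=1}^{N} x_{i}(t_{0})$, hence $\tfrac{1}{N}\sum_{i} x_{i}(t) = \tfrac{1}{N}\sum_{i} x_{i}(t_{0})$ for all $t \ge t_{0}$. There is essentially no obstacle here: the statement is a standard consequence of Laplacian symmetry, and the only thing to be careful about is noting that the argument does not require the outputs to be linear, so it remains valid under the saturations $y_{i} = \mathrm{sat}_{i}(x_{i})$.
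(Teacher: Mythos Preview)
Your proposal is correct and essentially identical to the paper's proof, which likewise computes $\frac{1}{N}\mathbf{1}^{T}\dot{x} = -\frac{1}{N}\mathbf{1}^{T}L(t)y = 0$ using $\mathbf{1}^{T}L(t)=0$ for undirected graphs. The index-swapping variant you offer is just an unrolled version of the same symmetry argument, so there is no meaningful difference in approach.
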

\begin{proof}
The time derivative of the average value is given by $\frac{1}{N} \sum_{i=1}^{N} \dot{x}_{i} (t) = \frac{1}{N} \mathbf{1}^{T} \dot{x} = - \frac{1}{N} \mathbf{1}^{T} L (t) y = 0$. Therefore, the average value is preserved, $\forall t \ge t_{0}$.
\end{proof}

\begin{remark} \label{ch4_remark_average}
From \textit{Lemma \ref{ch4_lem_average_undirected}}, it is clear that the group decision value $C$ satisfies $ \lim_{t \rightarrow \infty} x_{i} (t) = C = \frac{1}{N} \sum_{i=1}^{N} x_{i} (t_{0})$ $\forall i \in \mathcal{V}$ if the consensus is reached.
\end{remark}

As mentioned in the introduction section, the overall network consisting of $N$ agents (\ref{ch4_sys}) contains unachievable equilibrium points under output saturations, and thus, global consensus may not be realized.
Therefore, this paper investigates necessary and sufficient initial condition for the achievement of consensus, that is the exact domain of attraction.
Let $x = [ x_{1} ,..., x_{N}] \in \mathbf{R}^{N}$, and denote the state trajectory of agents (\ref{ch4_sys}) as $\phi (t, x)$ that starts at initial state $x$ at $t = t_{0}$.
Then, the domain of attraction of consensus, denoted by $\mathcal{X}$, is defined as the set of all points $x$ such that $\phi ( t, x )$ is defined for all $t \ge t_{0}$ and $\lim_{t \rightarrow \infty} \phi (t,x) = C \mathbf{1}$.

In this paper, we consider the following assumption to avoid the trivial solution.
\begin{assumption} \label{ch4_assumption}
Without loss of generality, we assume that the agents do not reach the consensus at $t = t_{0}$.
\end{assumption}

 \begin{remark} \label{rem_nonlinearity}
 Although this paper considers the saturation nonlinearity, the analysis of this paper can be easily applied to any bounded nonlinearities, which are strictly increasing within the bounds.
 \end{remark}
 

 

\section{Fixed Graph} \label{sec_fixed}


In this section, we deal with the consensus problem under the undirected and fixed graph.
Before we analyze the consensus, we consider the following lemma, which can be proved similar to the proof of \textit{Lemma 3.1} in \cite{Ren:2008}:
\begin{lemma} \label{lem_double_potential}
For an undirected graph and any $a_{i}, b_{i} \in \mathbf{R}$, $i = 1,...,N$, we have
\begin{align}
\sum_{i=1}^{N} \! \sum_{j=1}^{N} \alpha_{ij} ( a_{i} - a_{j} ) ( b_{i} - b_{j} ) \! = \! 2 \! \sum_{i=1}^{N} \! \sum_{j=1}^{N} \alpha_{ij} a_{i} ( b_{i} - b_{j} ).
\end{align}
\end{lemma}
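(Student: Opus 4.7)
The plan is to expand the product on the left-hand side and then exploit the symmetry $\alpha_{ij}=\alpha_{ji}$ that comes from the graph being undirected. Concretely, I would write
$$
(a_i-a_j)(b_i-b_j) = a_i b_i - a_i b_j - a_j b_i + a_j b_j,
$$
and substitute this into the double sum $\sum_{i,j}\alpha_{ij}(a_i-a_j)(b_i-b_j)$, splitting it into four separate double sums.

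Next, I would pair up those four sums using a relabeling argument. By swapping the dummy indices $i\leftrightarrow j$ and using $\alpha_{ij}=\alpha_{ji}$, the sum $\sum_{i,j}\alpha_{ij}a_j b_j$ becomes $\sum_{i,j}\alpha_{ij}a_i b_i$, and likewise $\sum_{i,j}\alpha_{ij}a_j b_i$ becomes $\sum_{i,j}\alpha_{ij}a_i b_j$. After this collapse, the four-term expression reduces to $2\sum_{i,j}\alpha_{ij}a_i b_i - 2\sum_{i,j}\alpha_{ij}a_i b_j$, which factors as $2\sum_{i,j}\alpha_{ij}a_i(b_i-b_j)$, precisely the right-hand side.

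I do not expect a genuine obstacle here: the lemma is essentially a bookkeeping identity whose only nontrivial ingredient is the symmetry of the weights. The one thing I would be careful about is writing the relabeling step explicitly, since it is easy to make a sign error or to forget that $\alpha_{ij}=\alpha_{ji}$ is precisely what licenses the swap; everything else is just expansion and collection of terms.
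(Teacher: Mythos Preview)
Your proposal is correct and essentially the same as the paper's proof: both arguments rest on the single idea of relabeling $i\leftrightarrow j$ and invoking $\alpha_{ij}=\alpha_{ji}$. The only cosmetic difference is that the paper keeps the factor $(b_i-b_j)$ intact, writing $(a_i-a_j)(b_i-b_j)=a_i(b_i-b_j)-a_j(b_i-b_j)$ and applying the swap to the second term, whereas you expand fully into four monomials before pairing; either way the computation is immediate.
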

\begin{proof}
Since the graph is undirected, $\alpha_{ij} = \alpha_{ji}$. Therefore we have
\begin{align}
\sum_{i=1}^{N} \sum_{j=1}^{N} \alpha_{ij} ( a_{i} - a_{j} ) ( b_{i} - b_{j})
&= \sum_{i=1}^{N} \sum_{j=1}^{N} \alpha_{ij} a_{i} ( b_{i} - b_{j} ) 
- \sum_{i=1}^{N} \sum_{j=1}^{N} \alpha_{ij} a_{j} ( b_{i} - b_{j} ) \nonumber\\
&= \sum_{i=1}^{N} \sum_{j=1}^{N} \alpha_{ij} a_{i} ( b_{i} - b_{j} ) 
+ \sum_{j=1}^{N} \sum_{j=1}^{N} \alpha_{ji} a_{j} ( b_{j} - b_{i} )  \nonumber\\
&= 2 \sum_{i=1}^{N} \sum_{j=1}^{N} \alpha_{ij} a_{i} ( b_{i} - b_{j} ),
\end{align}
which complete the proof.
\end{proof}

We next consider the following lemma that will be used to construct Lyapunov function:
\begin{lemma} \label{lem_integral_positive}
For any constants $a,b$ with $| a | \le s_{i}$,
\begin{align} \label{lem_int_positive}
\int_{a}^{b} ( \mbox{sat}_{i} ( \omega ) - a ) d \omega \ge 0,
\end{align}
and the equality holds only when $a = b$.
\end{lemma}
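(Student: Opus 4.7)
The plan is to read the integrand $\text{sat}_i(\omega) - a$ as a monotone function of $\omega$ that vanishes at $\omega = a$, and then argue by a straightforward sign analysis on the two sides of $a$. Concretely, I would set $g(\omega) := \text{sat}_i(\omega) - a$ and make the following observations in order.

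First, the hypothesis $|a| \le s_i$ places $a$ inside the linear region of the saturation, so $\text{sat}_i(a) = a$ and hence $g(a) = 0$. Second, $\text{sat}_i$ is monotone non-decreasing on $\mathbb{R}$ (and strictly increasing on $(-s_i, s_i)$), so $g$ is also monotone non-decreasing. Combined with $g(a) = 0$, this gives $g(\omega) \ge 0$ for $\omega \ge a$ and $g(\omega) \le 0$ for $\omega \le a$.

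Third, I would split into cases on the sign of $b - a$. If $b \ge a$, the integrand is pointwise non-negative on $[a, b]$, so $\int_a^b g(\omega)\, d\omega \ge 0$ directly. If $b < a$, I would flip the orientation: $\int_a^b g(\omega)\, d\omega = -\int_b^a g(\omega)\, d\omega$, and since the integrand is pointwise $\le 0$ on $[b, a]$, the right-hand side is $\ge 0$ again. If $b = a$, the integral is trivially $0$. This completes the inequality~\eqref{lem_int_positive}.

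For the equality-only-when-$a=b$ clause, I would use the strict monotonicity of $\text{sat}_i$ on $(-s_i, s_i)$. Given that $|a| \le s_i$, if $b \ne a$ then some subinterval adjacent to $a$ lies in the open region $(-s_i, s_i)$ where $\text{sat}_i$ strictly separates from $a$, yielding a strictly positive (resp.\ strictly negative, after flipping) contribution to the integral; this forces the integral to be strictly positive. The main obstacle, and the only nontrivial point, is the boundary case $|a| = s_i$, where $\text{sat}_i$ is flat on one side of $a$ and the ``strictly increasing'' step fails on that side; this must be handled by noting that $b \ne a$ combined with $|a| \le s_i$ still forces $b$ and a neighborhood of $a$ to lie on the \emph{same} side where sat is strictly increasing, so the contribution from the portion of $[a,b]$ (or $[b,a]$) inside $(-s_i, s_i)$ is still strictly positive.
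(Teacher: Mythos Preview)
Your argument for the inequality is correct and in fact cleaner than the paper's: where the paper computes the integral explicitly by splitting at the saturation threshold (obtaining $\tfrac{1}{2}(b-a)^2$ when $a \le b \le s_i$ and $\tfrac{1}{2}(s_i - a)^2 + (s_i - a)(b - s_i)$ when $a \le s_i < b$, then arguing symmetrically for $b \le a$), you use only monotonicity of $\mathrm{sat}_i$ together with $\mathrm{sat}_i(a) = a$, which immediately gives the sign of the integrand on each side of $a$. This is a genuinely more elementary route to the same conclusion.

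Your handling of the boundary case $|a| = s_i$ in the equality clause, however, does not work. Take $a = s_i$ and any $b > s_i$: then $[a,b] \subset [s_i,\infty)$ lies entirely in the flat region, so $\mathrm{sat}_i(\omega) - a \equiv 0$ on $[a,b]$ and the integral is exactly $0$ even though $a \ne b$. Your claim that ``$b \ne a$ combined with $|a| \le s_i$ still forces $b$ and a neighborhood of $a$ to lie on the same side where sat is strictly increasing'' is false precisely here: the strictly increasing side of $a = s_i$ is to the left, but $b$ lies to the right. In fact this counterexample shows that the equality clause as stated in the lemma is not true at the boundary $|a| = s_i$; the paper's own case analysis does not verify the ``only when $a=b$'' part carefully and shares this gap (its Case~2 expression vanishes when $a = s_i$). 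Both your argument and the paper's are fully valid under the strict hypothesis $|a| < s_i$.
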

\begin{proof}
We consider the following two cases.\\
1) $a \le b \le s_{i}$.\\
In this case, (\ref{lem_int_positive}) can be rewritten as
\begin{align}
\int_{a}^{b} ( \omega - a ) d \omega =  \frac{1}{2} ( b^{2} - a^{2} - 2 a b + 2 a^{2} ) 
= \frac{1}{2} (b - a)^{2}
\end{align}
2) $a \le s_{i} < b$. \\
In this case, (\ref{lem_int_positive}) can be rewritten as
\begin{align}
\int_{a}^{s_{i}} \! ( \omega \! - \! a ) d \omega \! + \! \int_{s_{i}}^{b} ( s_{i} \! - \! a ) d \omega 
= \frac{1}{2}  (s_{i} \! - \! a )^{2} \! + \! ( s_{i} \! - \! a ) ( b \! - \! s_{i} )
 \ge \frac{1}{2} (s_{i} - a )^{2}
\end{align}
For $b \le a$, we can similarly derive (\ref{lem_int_positive}) holds, which complete the proof.
\end{proof}

Then, we are now ready to state the following result.

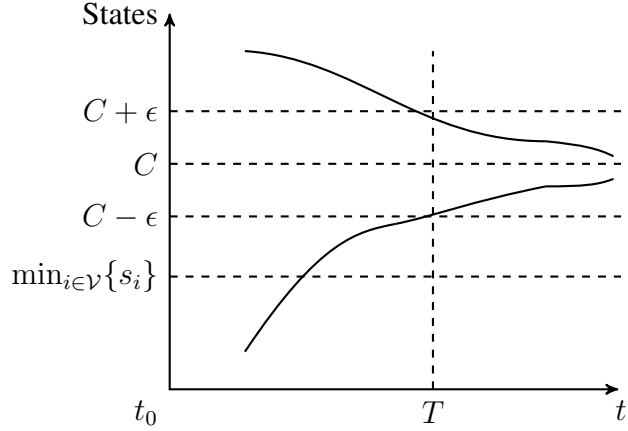
\begin{figure}[!tb]
\centering
\begin{tikzpicture} [node distance=1.5cm, thick,main node/.style={circle,fill=white!20,draw,font=\sffamily\scriptsize}]

 \coordinate (x) at (0,5);
 \coordinate (t) at (6,0);
 \draw[<->,>=stealth',shorten >=1pt,auto] (x) node[left] {States} -- (0,0) node[below left] {$t_{0}$} -- (t) node[below] {$t$};

 \draw[dashed] (0,1.5) node[left] {$\min_{i \in \mathcal{V}} \{s_{i} \}$} -- (6,1.5);
 \draw[dashed] (0,3) node[left] {$C$} -- (6,3);

 \draw (1,4.5) .. controls (2.5,4.4) and (3.3,3.3)  .. (5,3.3);
 \draw (5,3.3) .. controls (5.5,3.25) and (5.7,3.2) .. (5.9,3.1);

 \draw (1,0.5) .. controls (2,2) and (2.5,2.1) .. (3,2.2);
 \draw (3,2.2) .. controls (3.5,2.3) and (4,2.5) .. (5,2.7);
 \draw (5,2.7) .. controls (5.5, 2.7) and (5.7,2.72) .. (5.9,2.8);

 \draw[dashed] (3.5,0) node[below] {$ T$} -- (3.5,4.5);

 \draw[dashed] (0,2.3) node[left] {$C - \epsilon$} -- (6,2.3);
 \draw[dashed] (0,3.7) node[left] {$C + \epsilon$} -- (6,3.7);

\end{tikzpicture}
\caption{Illustration of the proof of \textit{Theorem~\ref{ch4_the_undirected_fixed}}} \label{necessity_graph}
\end{figure}

\begin{theorem} \label{ch4_the_undirected_fixed}
Suppose the graph $\mathcal{G}$ is undirected and connected. Then, the group of $N$ agents (\ref{ch4_sys}) achieves the consensus, 
if and only if 
\begin{align}
x (t_{0}) \in \mathcal{X} := \left\{ x (t_{0}) \in \mathbf{R}^{N} : \frac{1}{N} \left| \sum_{i=1}^{N} x_{i} (t_{0}) \right| \le \min_{i \in \mathcal{V}} \{ s_{i} \} \right\}.
\end{align}
\end{theorem}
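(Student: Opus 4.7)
The plan is to prove sufficiency via an integral Lyapunov function combined with LaSalle's invariance principle, and necessity by analyzing the asymptotic form of $\dot{x}$ supplemented by a continuity argument in the homogeneous sub-case. Set $C := \frac{1}{N}\sum_{i=1}^{N} x_i(t_0)$; by \textit{Lemma~\ref{ch4_lem_average_undirected}}, $C$ is invariant along trajectories, and the hypothesis gives $|C| \le \min_k s_k \le s_i$ for every $i$. The candidate Lyapunov function I propose is
\[
V(x) = \sum_{i=1}^{N} \int_{C}^{x_i} \bigl( \mbox{sat}_i(\omega) - C \bigr)\, d\omega,
\]
which by \textit{Lemma~\ref{lem_integral_positive}} is nonnegative, vanishes precisely at $x = C\mathbf{1}$, and is radially unbounded because the integrand is bounded away from zero outside the saturation band.

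Differentiating along (\ref{ch4_sys}) yields $\dot{V} = \sum_i (y_i - C) \sum_j \alpha_{ij}(y_j - y_i)$. Taking $a_i = y_i - C$ and $b_i = y_i$ in \textit{Lemma~\ref{lem_double_potential}} (so $a_i - a_j = y_i - y_j$) collapses this to $\dot{V} = -\tfrac{1}{2}\sum_{i,j} \alpha_{ij}(y_i - y_j)^2 \le 0$. Radial unboundedness of $V$ keeps trajectories bounded, and LaSalle forces convergence into the largest invariant subset of $\{\dot{V}=0\}$; by connectivity this subset equals $\{x : y_1 = \cdots = y_N =: \bar{y}\}$, on which $\dot{x} \equiv 0$. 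To finish sufficiency I still need to show every $x_i$ equals $C$ on this set: if $|\bar{y}| < \min_k s_k$ no agent is saturated, so $x_i = \bar{y}$ for each $i$ and the conserved average gives $\bar{y} = C$; in the boundary case $\bar{y} = \min_k s_k$ (the symmetric negative case is identical), agents with $s_i > \bar{y}$ must satisfy $x_i = \bar{y}$ while only agents attaining the minimum level can have $x_i \ge \bar{y}$, and then $C = \bar{x} \le \bar{y}$ combined with $x_i \ge \bar{y}$ collapses every $x_i$ to $\bar{y} = C$; finally $|\bar{y}| > \min_k s_k$ is ruled out by $|y_i| \le s_i$ at any agent attaining the minimum.

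For necessity, suppose $x_i(t) \to C$ as $t \to \infty$. Continuity of $\mbox{sat}_i$ gives $y_i(t) \to \mbox{sat}_i(C)$, whence $\dot{x}_i(t) \to \sum_j \alpha_{ij}(\mbox{sat}_j(C) - \mbox{sat}_i(C))$; finiteness of the limit of $x_i$ forces this to vanish for each $i$, so by connectivity the vector with entries $\mbox{sat}_i(C)$ is a scalar multiple of $\mathbf{1}$. Now suppose toward contradiction that $C > s_* := \min_k s_k$ (the case $C < -s_*$ is analogous). If the saturation levels are not all equal, an agent $i$ with $s_i = s_*$ satisfies $\mbox{sat}_i(C) = s_*$ while an agent $j$ with $s_j > s_*$ satisfies $\mbox{sat}_j(C) = \min(C, s_j) > s_*$, contradicting the constant-vector conclusion.

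The remaining homogeneous sub-case, $s_i \equiv s_*$ with $C > s_*$, needs a separate continuity argument because there $\sum_j \alpha_{ij}(\mbox{sat}_j(C) - \mbox{sat}_i(C))$ vanishes identically: define $T$ as the supremum of times $t \ge t_0$ at which some $x_i(t) \le s_*$, which is finite because each $x_i \to C > s_*$; for $t > T$ every $y_i(t) = s_*$, so $\dot{x} \equiv 0$ and each $x_i$ is constant on $(T, \infty)$. By continuity and the definition of $T$, some $i^*$ satisfies $x_{i^*}(T) = s_*$, contradicting $x_{i^*}(T) = C > s_*$. I expect the main obstacle to be the boundary sub-case of sufficiency, where $\bar{y}$ sits exactly at $\min_k s_k$ and the LaSalle set is a priori strictly larger than $\{x = C\mathbf{1}\}$, together with stitching the heterogeneous Laplacian argument and the homogeneous continuity argument cleanly into a single necessity proof.
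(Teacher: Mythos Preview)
Your sufficiency argument is essentially the paper's: the same integral Lyapunov function (up to a factor of $2$), the same computation $\dot V=-\tfrac12\sum_{i,j}\alpha_{ij}(y_i-y_j)^2$, LaSalle, and then ruling out the saturated branch of $\{\mathbf{sat}(x)\in\mathrm{span}\{\mathbf 1\}\}$ via the conserved average. Your necessity argument is organized a bit differently: where the paper fixes an $\epsilon$-neighborhood of $C$ and argues directly about $\dot x_{i'}$ at the minimal-level agent, you observe that $\dot x_i(t)\to\sum_j\alpha_{ij}(\mathrm{sat}_j(C)-\mathrm{sat}_i(C))$ and that convergence of $x_i$ forces this limit to vanish, hence $i\mapsto\mathrm{sat}_i(C)$ is constant, which immediately contradicts $C>s_*$ in the heterogeneous case. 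This is a little cleaner than the paper's treatment, and your homogeneous sub-case (supremum time $T$ at which some $x_i\le s_*$) is the same idea as the paper's $\epsilon$--$T$ argument.

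One genuine slip: your claim that $V$ is radially unbounded ``because the integrand is bounded away from zero outside the saturation band'' fails precisely at the boundary $|C|=s_*:=\min_k s_k$. For any agent $i$ with $s_i=s_*$ and $C=s_*$, one has $\mathrm{sat}_i(\omega)-C=0$ for all $\omega\ge s_*$, so the $i$-th summand of $V$ is identically zero along $x_i\to+\infty$. Thus $V$ alone does not give boundedness in this case, and LaSalle needs bounded trajectories. The fix is easy: $V\le V(x(t_0))$ still controls each $x_i$ from below (the integrand equals $-2s_*$ for $\omega<-s_*$), and combining this with the conserved sum $\sum_i x_i(t)=NC$ prevents any coordinate from escaping to $+\infty$. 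You should also note that if every $x_i(t_0)>s_*$ in the homogeneous sub-case your set defining $T$ is empty; then $\dot x\equiv 0$ from $t_0$ and \textit{Assumption~\ref{ch4_assumption}} finishes immediately.
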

\begin{proof}
(Necessity) We prove the necessity by a contradiction.
Assume that the agents achieve the consensus with the decision value $C = \frac{1}{N}  \sum_{i=1}^{N} x_{i} (t_{0}) > \min_{i \in \mathcal{V}} \{ s_{i} \}$, i.e., $\lim_{t \rightarrow \infty} x_{i} (t) = C, \forall i \in \mathcal{V}$.
Then, due to the continuity of $x_{i} (t)$, for any $\epsilon > 0$, there exists $T > 0$ such that $| x_{i} (t) - C | < \epsilon$ whenever $t \ge T$.
Since $\epsilon$ can be arbitrarily small number, we choose $\epsilon > 0$ such that $C - \epsilon \ge \min_{i \in \mathcal{V}} \{ s_{i} \}$ and the agents do not reach the consensus at $t = T$ (see Fig.~\ref{necessity_graph}). 
Then, the proof is divided into two cases.

1) Homogeneous agents.

Note that $x_{i} \ge s$ implies $y_{i} = \mbox{sat} ( x_{i} ) = s$.
Therefore, the agent $i$, $\forall i \in \mathcal{V}$, is given by for $t \ge T$
\begin{align}
\dot{x}_{i} =  \sum_{j=1}^{N} \alpha_{ij} ( y_{j} - y_{i} ) = \sum_{i=1}^{N} \alpha_{ij} ( s - s )
= 0,
\end{align}
which implies $x_{i} (t) = x_{i} ( T )$, $\forall t \ge  T$.
However, we have assumed that the agents do not reach the consensus, which is a contradiction.

2) Heterogeneous agents.

Let $i' = \argmin_{i \in \mathcal{V}} \{ s_{i} \}$.
Then, $x_{i} \ge  s_{i'}$, $\forall i \in \mathcal{V}$, implies $y_{i'} = \mbox{sat}_{i'} ( x_{i'} ) = s_{i'}$ and $y_{j} \ge s_{i'}$, $\forall j \in \mathcal{V} \setminus \{ i' \}$.
Therefore, the agent $i'$ is given by for $t \ge T$
\begin{align}
\dot{x}_{i'} = \sum_{j=1}^{N} \alpha_{i'j} ( y_{j} - y_{i'} ) = \sum_{j=1}^{N} \alpha_{i' j} ( y_{j} - s_{i'} )
\ge 0,
\end{align}
and $\dot{x}_{i'} = 0$ only when $y_{j} = s_{i'}$, $\forall j \in \mathcal{N}_{i'}$.
Therefore, since the graph is connected, the consensus is reached only when $x_{i} = s_{i'}$, $\forall i \in \mathcal{V}$, that is $\lim_{t \rightarrow \infty} x_{i} (t) = s_{i'} = \frac{1}{N} \sum_{i=1}^{N} x_{i} (t)$.
Since we assume that the decision value $C > s_{i'}$, this is a contradiction.

For $\frac{1}{N}  \sum_{i=1}^{N} x_{i} (t_{0}) < \max_{i \in \mathcal{V}} \{ - s_{i} \}$, 
we can derive the necessity following the same process as above, which completes the proof.

(Sufficiency) Let $x^{*} = \frac{1}{N} \sum_{i=1}^{N} x_{i} (t_{0})$, and $i' = \argmin_{i \in \mathcal{V}} \{ s_{i} \}$, and assume that $|x^{*} | \le s_{i'}$.
Consider the following Lypuanov function candidate:
\begin{align} \label{ch4_the_undirected_fixed_Lyapunov}
V =   2 \sum_{i=1}^{N} \int_{x^{*}}^{x_{i}} ( \mbox{sat}_{i} ( \omega ) - x^{*} ) d \omega.
\end{align}
From \textit{Lemma \ref{lem_integral_positive}} and the fact that $|x^{*}| \le s_{i}$, $\forall i \in \mathcal{V}$, we know that $V \ge 0$ and $V = 0$ only when $x_{i} = x^{*}$, $\forall i \in \mathcal{V}$.
We next consider the time derivative of $V$ as follows:
\begin{align}
\dot{V} =& 2 \sum_{i=1}^{N} \mbox{sat}_{i} ( x_{i} ) \dot{x}_{i} - 2 x^{*} \sum_{i=1}^{N} ( \dot{x}_{i} - \dot{x}^{*}) \nonumber\\
=&  2 \sum_{i=1}^{N}  \sum_{j=1}^{N} \alpha_{ij} y_{i} ( y_{j} - y_{i} ),
\end{align}
where we have used the fact that the average value is invariant, i.e.,  $\frac{1}{N} \sum_{i=1}^{N} \dot{x}_{i} = \dot{x}^{*} = 0$.
Then, from \textit{Lemma \ref{lem_double_potential}} with $a_{i} = b_{i} = y_{i}$, it follows that
\begin{align}
\dot{V} = - \sum_{i=1}^{N} \sum_{j=1}^{N} \alpha_{ij} ( y_{i} - y_{j} )^{2},
\end{align}
which implies $\dot{V} \le 0$.
Let $\mathcal{M} := \{ x \in \mathbf{R}^{N} : \dot{V} = 0 \}$.
Therefore, by LaSalle Invariance Principle, any solution of $x_{i} (t)$, $\forall i \in \mathcal{V}$, will converge to the largest invariant set inside $\mathcal{M}$.
We next prove that $x_{i} = x_{j}$, $\forall i,j \in \mathcal{V}$, is an unique equilibrium, which implies the consensus is reached.
Since the graph is connected, $\dot{V} \equiv 0$ is equivalent to $(y_{i} - y_{j} ) = ( \mbox{sat}_{i} (x_{i}) - \mbox{sat}_{j} ( x_{j} ) ) = 0$, $\forall i ,j \in \mathcal{V}$,
which, in turn, implies that $\mathbf{sat}  ( x ) \in \spand \{ \mathbf{1} \}$, where $\mathbf{sat} ( x)  = [ \mbox{sat}_{1} ( x_{1} ),..., \mbox{sat}_{N} ( x_{N} )]^{T}$.
Then, the rest of the proof is divided into two cases.

1) Homogeneous case.

$\mathbf{sat} ( x ) \in \spand \{ \mathbf{1} \}$ when
(a) $x \in \Omega_{u} := \Omega_{u^{+}} \cup \Omega_{u^{-}}$, where
$\Omega_{u^{+}} := \{ x \in \mathbf{R}^{N} : x \ge s \mathbf{1} , x \neq s \mathbf{1} \}$ and
$\Omega_{u^{-}} := \{ x \in \mathbf{R}^{N} : x \le - s \mathbf{1} , x \neq - s \mathbf{1} \}$,
(b) $x \in \Omega_{a} := \{ x \in \mathbf{R}^{N} : x = C \mathbf{1}, | C | \le s  \}$.
In case (a), the average value is given by $\frac{1}{N} \left| \sum_{i=1}^{N} x_{i} \right| > s$.
Since the average value is invariant, if $\frac{1}{N} \left| \sum_{i=1}^{N} x_{i} (t_{0}) \right| \le s$, then the case (a) can not be realized.
Therefore, $\mathbf{sat} (x) \in \spand \{ \mathbf{1} \}$, only when $x \in \Omega_{a}$, which implies $( x_{i} - x_{j} ) \equiv 0$, $\forall i , j \in \mathcal{V}$.


2) Heterogeneous case.

Since $|\mbox{sat}_{i'} ( x_{i'} (t) )| \le s_{i'}$, $\forall t \ge t_{0}$, where $i' = \argmin_{i \in \mathcal{V}} \{ s_{i} \}$,
the condition $( \mbox{sat}_{i} (x_{i}) - \mbox{sat}_{j} ( x_{j} )) \equiv 0$, $\forall i,j \in \mathcal{V}$, is equivalent to $( \mbox{sat}_{i'} ( x_{i'}) - \mbox{sat}_{j} ( x_{j} )) \equiv 0$, $\forall j \in \mathcal{V}$.
Therefore, $\mathbf{sat} ( x ) \in \spand \{ \mathbf{1} \}$ when
(a) $x \in \Omega_{i} := \Omega_{u^{+}} \cup \Omega_{u^{-}}$, where
$\Omega_{u^{+}} := \{ x \in \mathbf{R}^{N} : x_{i'} > s_{i'} \mbox{ and } x_{j} = s_{i'} , \forall j \in \mathcal{V} \setminus \{ i' \} \}$
and
$\Omega_{u^{-}} := \{ x \in \mathbf{R}^{N} : x_{i'} < - s_{i'} \mbox{ and } x_{j} = - s_{i'} , \forall j \in \mathcal{V} \setminus \{ i' \} \}$,
(b) $x \in \Omega_{a} := \{ x \in \mathbf{R}^{N} : x = C \mathbf{1} , | C | \le  s_{i'} \}$.
In case (a), the average value is given by $\frac{1}{N} \left| \sum_{i=1}^{N} x_{i} \right| > s_{i'}$, and thus, if $\frac{1}{N} \left| \sum_{i=1}^{N} x_{i} (t_{0}) \right| \le s_{i'}$, the cases (a) can not be realized.
Therefore, $\mathbf{sat} ( x) \in \spand \{ \mathbf{1} \}$, only when $x \in \Omega_{a}$, which implies $(x_{i} - x_{j} ) \equiv 0$, $\forall i , j \in \mathcal{V}$.


In summary, we have shown that $\dot{V} \le 0$ and $\dot{V} \equiv 0$ only when $(x_{i} - x_{j} ) \equiv 0$, $\forall i,j \in \mathcal{V}$.
Therefore, according to LaSalle Invariance Principle, we have $\lim_{t \rightarrow \infty} ( x_{i} (t) - x_{j} (t) ) = 0$, $\forall i,j \in \mathcal{V}$, which completes the proof.
\end{proof}


\section{Time-varying Graph} \label{sec_time-varying}


In this section, we consider a time-varying graph $\mathcal{G} (t) = ( \mathcal{V} , \mathcal{E} (t) , \mathcal{A} (t) )$ with the following assumption:
\begin{assumption} \label{ch4_assumption_time-varying}
For $\forall i,j \in \mathcal{V}$, $\alpha_{ij} (t)$ is a continuous function on $[0 , \infty)$ except for at most a set with measure zero.
\end{assumption}

Then, under \textit{Assumption \ref{ch4_assumption_time-varying}}, the set of discontinuity points for the right-hand side of (\ref{ch4_sys}) has measure zero.
Therefore, the Caratheodory solutions\footnote[1]{Caratheodory solutions are a generalization of classical solutions, and absolutely continuous functions of time.
Caratheodory solutions relax the classical requirement that the solution must follow the direction of the vector field at all times, see \cite{Filippov:1988,Cortes:2008} for details.} of (\ref{ch4_sys}) exist for arbitrary initial conditions, which satisfies for all $t \ge t_{0}$ the following integral equation  for $i \in \mathcal{V}$:
\begin{align}
x_{i} (t) = x_{i} (t_{0}) + \int_{t_{0}}^{t} \sum_{j=1}^{N} \alpha_{ij} ( \tau ) ( y_{j} (\tau ) - y_{i} ( \tau ) ) d \tau .
\end{align}


Before we analyze the consensus under the time-varying graph, we introduce some mathematical preliminaries.
Since the time-varying graph includes discontinuities to describe the switching phenomena,
the solution of $x_{i} (t)$ is not differentiable at the discontinuous points.
However, from \textit{Assumption \ref{ch4_assumption_time-varying}}, the upper Dini derivative of $x_{i}$ along the solution exists.
The upper Dini derivative of a function $f : (a,b) \rightarrow \mathbf{R}$ at $t$ is defined as
\begin{align}
D^{+} f (t) = \limsup_{\tau \rightarrow 0^+} \frac{f ( t + \tau ) - f (t)}{\tau}.
\end{align}

\begin{lemma} \cite{Rouche:1975}
Suppose $f (t)$ is continuous on $(a,b)$.
Then, $f(t)$ is nonincreasing on $(a,b)$ if and only if $D^{+} f(t) \le 0$, $\forall t \in ( a,b)$.
\end{lemma}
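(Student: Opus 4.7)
The plan is to handle the two implications separately. The forward direction, namely $f$ nonincreasing $\Rightarrow D^{+} f(t) \le 0$, is immediate from the definition of the upper Dini derivative: if $f$ is nonincreasing, every forward difference quotient $(f(t+\tau) - f(t))/\tau$ with $\tau > 0$ is nonpositive, so its limsup as $\tau \to 0^{+}$ is also nonpositive. This takes only one line.

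For the converse, I would run a classical perturbation-and-continuity argument. Fix $t_{1} < t_{2}$ in $(a,b)$ and an arbitrary $\epsilon > 0$, and introduce the auxiliary function $h(t) := f(t) - f(t_{1}) - \epsilon (t - t_{1})$. Then $h$ is continuous on $(a,b)$, $h(t_{1}) = 0$, and $D^{+} h(t) = D^{+} f(t) - \epsilon \le -\epsilon$ for every $t \in (a,b)$. Define $T := \sup \{ t \in [t_{1}, t_{2}] : h(s) \le 0 \text{ for all } s \in [t_{1}, t] \}$; this set contains $t_{1}$, so $T$ is well defined. My target is $T = t_{2}$, which by continuity of $h$ delivers $h(t_{2}) \le 0$, i.e., $f(t_{2}) \le f(t_{1}) + \epsilon (t_{2} - t_{1})$. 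Sending $\epsilon \to 0^{+}$ then yields $f(t_{2}) \le f(t_{1})$, and since $t_{1} < t_{2}$ were arbitrary, $f$ is nonincreasing on $(a,b)$.

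To establish $T = t_{2}$, I would argue by contradiction and rule out $T < t_{2}$ in two steps. First, continuity of $h$ together with the definition of the supremum forces $h(T) = 0$: strict negativity $h(T) < 0$ would persist on a neighborhood of $T$ and contradict the maximality of $T$. Second, from $D^{+} h(T) \le -\epsilon < 0$, unpacking $\limsup_{\tau \to 0^{+}}$ as $\lim_{\delta \to 0^{+}} \sup_{\tau \in (0,\delta)}$, one can select arbitrarily small $\tau > 0$ with $(h(T+\tau) - h(T))/\tau \le -\epsilon/2$, so that $h(T+\tau) \le -\epsilon\tau/2 < 0$; this again contradicts the definition of $T$, because $T + \tau$ would still belong to the set whose supremum is $T$.

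The main obstacle I expect is precisely this last step: the $\limsup$ must not be confused with an ordinary limit, and one has to extract a \emph{specific} small $\tau > 0$ at which the difference quotient is strictly negative (rather than invoking a uniform one-sided differentiability statement that does not hold here). Once that extraction is carefully done, the remaining pieces (continuity of $h$, the supremum structure, and letting $\epsilon \to 0^{+}$) are completely standard.
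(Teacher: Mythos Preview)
The paper does not prove this lemma; it merely states it with a citation to Rouche et al.\ (1975), so there is no in-paper argument to compare against. Your proposed proof is the standard perturbation-plus-supremum argument and is essentially correct.

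One small tightening is warranted in your final step. From $D^{+}h(T) \le -\epsilon$ and your own unpacking of the $\limsup$ as $\lim_{\delta \to 0^{+}} \sup_{\tau \in (0,\delta)}$, you actually obtain a $\delta_{0} > 0$ such that \emph{every} $\tau \in (0,\delta_{0})$ satisfies $(h(T+\tau)-h(T))/\tau \le -\epsilon/2$, hence $h(T+\tau) < 0$ for all such $\tau$. You need this uniform-in-$\tau$ conclusion, not just the existence of a single good $\tau$: membership of $T+\tau$ in your set $\{t : h(s) \le 0 \text{ for all } s \in [t_{1},t]\}$ requires $h(s) \le 0$ at \emph{every} intermediate point $s \in (T, T+\tau)$ as well. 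Once you phrase the extraction as ``for all sufficiently small $\tau$'' rather than ``one can select arbitrarily small $\tau$'', the contradiction with the maximality of $T$ is immediate and the rest of the argument is airtight.
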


\subsection{Homogeneous Agents} \label{subsec_homo}

We first analyze the consensus of the agents (\ref{ch4_sys}) under the homogeneous condition, i.e., $s_{i} = s$, $\forall i \in \mathcal{V}$.
To solve this problem, we use the notations used in \cite{Cao:2011} as follows.

For any time $t$, let $M_{k} (t)$ be the $k$-th largest value of the components $x_{i} (t)$, that is, we rank $x_{i} (t)$ with descending order for each $t$ as follows:
\begin{align} \label{ch4_time-varying_homo_rank}
x_{i_{1}} (t) \ge x_{i_{2}} (t) \ge \cdots \ge x_{i_{N}} (t),
\end{align}
where $\{ i_{1} ,..., i_{N} \}$ is a permutation of $\{ 1,...,N \}$, and define
\begin{align} \label{ch4_time-varying_homo_M_k}
M_{k} (t) = x_{i_{k}} (t).
\end{align}
Note that the permutation $\{i_{k} : k \in \mathcal{V} \}$ depends on $t$, and the permutation $i_{k}$ is piecewise constant.
As a result, $M_{k} (t)$ is continuous for everywhere.
We further denote $S_{k} (t)$ as the sum of the first $k$ largest values of $x_{i} (t)$, i.e.,
\begin{align}
S_{0} (t) = 0 , \,~~ S_{k} (t) = \sum_{i=1}^{k} M_{i} (t) = M_{k} (t) + S_{k-1} (t).
\end{align}
Then, we first show the attractivity of equilibrium.
The proof follows from a similar argument in \cite{Cao:2011,Hendrickx:2013}.  

\begin{lemma} \label{ch4_lem_time-varying_ext_equilibria}
For the group of $N$ agents (\ref{ch4_sys}) under the homogeneous condition, there exists $x_{i}^{*}$ such that $\lim_{t \rightarrow \infty} x_{i} (t) = x_{i}^{*}$, and $x_{i}^{*} \in [ \min_{j \in \mathcal{V}} x_{j} (t_{0}) , \max_{j \in \mathcal{V}} x_{j} (t_{0}) ]$, $\forall i \in \mathcal{V}$.
\end{lemma}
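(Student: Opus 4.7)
The plan is to adapt the ordered-rearrangement technique of \cite{Cao:2011} to the saturated setting, exploiting only the fact that $\mbox{sat}(\cdot)$ is nondecreasing. First I would work with the ordered state $M_{k}(t)$ and the partial sums $S_{k}(t)=\sum_{j=1}^{k} M_{j}(t)$ defined just above the lemma. The first step is to show that $M_{1}(t)$ is nonincreasing and $M_{N}(t)$ is nondecreasing. Whenever agent $i$ achieves the maximum at time $t$, one has $x_{i}(t)\ge x_{j}(t)$ for every $j$, and because $\mbox{sat}$ is monotone this gives $y_{i}(t)\ge y_{j}(t)$, hence $\dot{x}_{i}(t)=\sum_{j}\alpha_{ij}(t)(y_{j}-y_{i})\le 0$; passing to the upper Dini derivative yields $D^{+}M_{1}(t)\le 0$. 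The argument for $M_{N}$ is symmetric, and together these inequalities establish the containment $M_{k}(t)\in[\min_{j}x_{j}(t_{0}),\max_{j}x_{j}(t_{0})]$ for every $k$ and every $t\ge t_{0}$.

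The second step is the key monotonicity: every $S_{k}(t)$ is nonincreasing. Let $I_{k}(t)=\{i_{1}(t),\dots,i_{k}(t)\}$ denote the indices of the top $k$ entries. On an interval where this index set is locally constant,
\begin{align}
\frac{d}{dt}S_{k}(t)=\sum_{i\in I_{k}}\sum_{j=1}^{N}\alpha_{ij}(t)\bigl(y_{j}(t)-y_{i}(t)\bigr).
\end{align}
Splitting the inner sum into $j\in I_{k}$ and $j\notin I_{k}$, the terms with both endpoints in $I_{k}$ cancel pairwise because the graph is undirected, i.e., $\alpha_{ij}=\alpha_{ji}$; while for $j\notin I_{k}$ we have $x_{i}\ge x_{j}$ and therefore $y_{i}\ge y_{j}$ by monotonicity of $\mbox{sat}$, so each remaining term is nonpositive. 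Combining this with a Dini-derivative argument that handles the piecewise-constant permutation, one concludes $D^{+}S_{k}(t)\le 0$ everywhere.

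Third, these two monotonicities imply convergence of the whole ordered profile. Each $S_{k}$ is nonincreasing and bounded below by $k\,\min_{j}x_{j}(t_{0})$, so $S_{k}(t)\to S_{k}^{\infty}$ as $t\to\infty$. Because $M_{k}(t)=S_{k}(t)-S_{k-1}(t)$, every $M_{k}$ converges to some $a_{k}$ with $a_{1}\ge\cdots\ge a_{N}$ lying in $[\min_{j}x_{j}(t_{0}),\max_{j}x_{j}(t_{0})]$. It then remains to promote this to convergence of each individual trajectory $x_{i}(t)$: at every $t$ the multiset $\{x_{1}(t),\dots,x_{N}(t)\}$ coincides with $\{M_{1}(t),\dots,M_{N}(t)\}$, so each $x_{i}(t)$ eventually lies in an arbitrarily small neighbourhood of the finite limit set $\{a_{1},\dots,a_{N}\}$; when the distinct values among the $a_{k}$ are separated by some $\delta>0$, continuity of $x_{i}(\cdot)$ traps it in a single such neighbourhood for large $t$ and yields a limit $x_{i}^{*}\in\{a_{1},\dots,a_{N}\}$ inside the required interval.

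I expect the main subtlety to be the Dini-derivative bookkeeping at times where the ranking $i_{1}(t),\dots,i_{N}(t)$ changes, since $M_{k}$ and $S_{k}$ are only piecewise smooth and the cancellation argument for $\tfrac{d}{dt}S_{k}$ must be justified across the reorderings; this is where the monotonicity of $\mbox{sat}$ is essential, because it guarantees that $y_{i}\ge y_{j}$ whenever $i\in I_{k}$ and $j\notin I_{k}$ no matter how ties are broken. A secondary, more routine difficulty is the final step promoting convergence of the ordered trajectories $M_{k}$ to convergence of each $x_{i}$ when several limits $a_{k}$ coincide, which is handled by the short continuity-plus-separation argument outlined above.
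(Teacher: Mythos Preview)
Your proposal is correct and follows essentially the same route as the paper: the paper also computes $D^{+}S_{m}$, splits the double sum into the part with both indices in $\{i_{1},\dots,i_{m}\}$ (which cancels by undirectedness) and the cross part (nonpositive by monotonicity of $\mbox{sat}$), deduces that each $S_{m}$ is nonincreasing and bounded below, hence each $M_{m}=S_{m}-S_{m-1}$ converges, and finally reads off $D^{+}M_{1}\le 0$, $D^{+}M_{N}\ge 0$ from the cases $m=1,N$. The only noticeable difference is that the paper dispatches the last step (``each $x_{i}(t)$ must converge to one of the values $M_{j}^{*}$'') in a single sentence, whereas you spell out the continuity-plus-separation argument; your added detail is sound and fills in what the paper leaves implicit.
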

\begin{proof}
Since $S_{i} (t)$ is absolutely continuous for almost everywhere, the derivative of $S_{m} (t)$ is given by
\begin{align} \label{eq_S_m}
D^{+} S_{m} (t) =& D^{+} \sum_{i=1}^{m} M_{i} (t) \nonumber\\
=& \sum_{k=1}^{m} \sum_{j=1}^{N} \alpha_{i_{k} j} (t) ( y_{j} (t) - y_{i_{k}} (t) ) \nonumber\\
=& \sum_{k=1}^{m} \sum_{j=1}^{m} \alpha_{i_{k} i_{j}} (t) ( y_{i_{j}} (t) - y_{i_{k}} (t) ) 
+ \sum_{k=1}^{m} \sum_{j=m+1}^{N} \alpha_{i_{k} i_{j}} (t ) ( y_{i_{j}} (t) - y_{i_{k}} (t) ) \nonumber\\
=& \sum_{k=1}^{m} \sum_{j=m+1}^{N} \alpha_{i_{k} i_{j}} (t) ( y_{i_{j}} (t) - y_{i_{k}} (t) ),
\end{align}
which implies $D^{+} S_{m} (t) \le 0$.
Therefore, $S_{m} (t)$ is nonincreasing function.
Moreover, $S_{m} (t)$ is bounded below, $S_{m} (t)$ converges as $t \rightarrow \infty$.
Since $M_{m} (t) = S_{m} (t) - S_{m-1}(t)$, then $M_{m} (t)$ converges, too.
This implies that every $M_{i} (t)$ converges to a limit $\lim_{t \rightarrow \infty} M_{i} (t) = M_{i}^{*}$.
Then, from the definition of $M_{i} (t)$, each $x_{i} (t)$ must converge to one of the values of $M_{j}^{*}$.
Moreover, we can easily see from (\ref{eq_S_m}) that $D^{+} S_{1} (t) = D^{+} M_{1} (t) \le 0$ and $D^{+} S_{N} (t) = D^{+} M_{N} (t) + D^{+} S_{N-1} (t) = 0$, which implies $D^{+} M_{N} (t) \ge 0$ since $D^{+} S_{N-1} (t) \le 0$.
Note that $M_{1} (t) = \max_{i \in \mathcal{V}} x_{i} (t)$ and $M_{N} (t) = \min_{i \in \mathcal{V}} x_{i} (t)$.
Then, $D^{+} M_{1} (t) \le 0$ and $D^{+} M_{N} (t) \ge 0$ imply that $x_{i} (t) \in [ \min_{j\in\mathcal{V}} x_{j} (t_{0}), \max_{j \in \mathcal{V}} x_{j} (t_{0})]$, $\forall i \in \mathcal{V}, t \ge t_{0}$.
Therefore, we have $M_{i}^{*} \in  [ M_{N} (t_{0}) , M_{1} (t_{0})] = [ \min_{j\in\mathcal{V}} x_{j} (t_{0}), \max_{j \in \mathcal{V}} x_{j} (t_{0})]$.
\end{proof}

We next recall the integral graph $\bar{\mathcal{G}}_{[0, \infty)}$ in \textit{Definition \ref{def_integral_graph}}.
Let $\bar{L}$ be the Laplacian of $\bar{\mathcal{G}}_{[0, \infty)}$, and $y_{i}^{*} = \mbox{sat} ( x_{i}^{*} ) = \lim_{t \rightarrow \infty} \mbox{sat} ( x_{i} (t) )$.
Then, similarly to \textit{Lemma 4.3} in \cite{Cao:2011}, we have
\begin{lemma} \label{ch4_lem_time-varying_equilibria}
$x^{*} \in \{ x \in \mathbf{R}^{N} : \mathbf{sat} (x) \in \Ker \bar{L} \}$.
\end{lemma}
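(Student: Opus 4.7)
The plan is to construct an integral Lyapunov function whose dissipation rate equals $\tfrac{1}{2}\sum_{i,j}\alpha_{ij}(t)(y_i-y_j)^2$, integrate this identity over $[t_0,\infty)$, and combine the resulting finiteness with the definition of the integral graph to force $y_i^*=y_j^*$ along every integral edge.

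Concretely, I first set $V(t) := \sum_{i=1}^{N}\int_{0}^{x_i(t)}\mbox{sat}(\omega)\,d\omega$. Because $\mbox{sat}$ is odd, nondecreasing, and vanishes at the origin, each summand is nonnegative, so $V\ge 0$. Using the absolute continuity of the Caratheodory solution $x_i(t)$ together with the chain rule, almost everywhere $\dot V(t) = \sum_i \mbox{sat}(x_i)\dot x_i = \sum_i y_i\sum_j\alpha_{ij}(t)(y_j-y_i)$, and \textit{Lemma \ref{lem_double_potential}} with $a_i=b_i=y_i$ rewrites this as $\dot V(t) = -\tfrac{1}{2}\sum_{i,j}\alpha_{ij}(t)(y_i-y_j)^2 \le 0$. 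Hence $V$ is nonincreasing and bounded below by zero, so it converges, and integrating gives $\int_{t_0}^{\infty}\sum_{i,j}\alpha_{ij}(t)(y_i(t)-y_j(t))^2\,dt < \infty$.

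Next I argue by contradiction. Suppose $\mathbf{sat}(x^*)\notin\Ker\bar L$. Writing $(\bar L\,\mathbf{sat}(x^*))_k = \sum_\ell \bar\alpha_{k\ell}(y_k^* - y_\ell^*)$, the nonvanishing of some entry yields indices $i,j$ with $\bar\alpha_{ij}=1$, that is $\int_{t_0}^{\infty}\alpha_{ij}(t)\,dt=\infty$, and $y_i^*\neq y_j^*$. By \textit{Lemma \ref{ch4_lem_time-varying_ext_equilibria}} and continuity of $\mbox{sat}$, $y_i(t)\to y_i^*$ for every $i$, so there exists $T$ with $(y_i(t)-y_j(t))^2\ge\tfrac{1}{2}(y_i^*-y_j^*)^2$ for all $t\ge T$. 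Then $\int_T^\infty\alpha_{ij}(t)(y_i(t)-y_j(t))^2\,dt\ge\tfrac{1}{2}(y_i^*-y_j^*)^2\int_T^\infty\alpha_{ij}(t)\,dt=\infty$, contradicting the finiteness above. Therefore $y_i^*=y_j^*$ whenever $\bar\alpha_{ij}=1$, which is exactly $\bar L\,\mathbf{sat}(x^*)=0$.

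The delicate point I expect to require the most care is the regularity underlying the chain rule for $\dot V$ and the identity $V(b)-V(a)=\int_a^b\dot V(t)\,dt$, since under \textit{Assumption \ref{ch4_assumption_time-varying}} the coefficients $\alpha_{ij}(t)$ need only be continuous off a set of measure zero. This is handled by the fact that the Caratheodory solution $x_i(t)$ is absolutely continuous and $F(x):=\int_0^x\mbox{sat}(\omega)\,d\omega$ is $C^1$, so $V(t)=\sum_i F(x_i(t))$ is absolutely continuous and the fundamental theorem of calculus applies. Everything else, including the symmetrization step via \textit{Lemma \ref{lem_double_potential}}, uses only the undirectedness built into Section~\ref{sec_time-varying}.
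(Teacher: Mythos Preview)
Your proof is correct and follows essentially the same route as the paper: both arguments hinge on the finiteness of $\int_{t_0}^{\infty}\sum_{i,j}\alpha_{ij}(t)(y_i-y_j)^2\,dt$ combined with $y_i(t)\to y_i^*$ to force $\bar\alpha_{ij}=0$ whenever $y_i^*\neq y_j^*$. The only difference is that the paper imports the integral bound from \textit{Lemma 3.6} of \cite{Cao:2011}, whereas you derive it self-containedly via the Lyapunov function $V=\sum_i\int_0^{x_i}\mbox{sat}(\omega)\,d\omega$; this makes your argument slightly more transparent but is not a genuinely different approach.
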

\begin{proof}
The proof follows from \textit{Lemma 4.3} in \cite{Cao:2011}.
Consider any two components $y_{i}^{*}$ and $y_{j}^{*}$.
Firstly, we assume that $x_{i}^{*} \neq x_{j}^{*}$.
Define $d: = | y_{i}^{*} - y_{j}^{*}|$.
Since $y^{*} = \lim_{t \rightarrow \infty} y(t)$, there is $T \ge 0$ for any $\epsilon > 0$ such that $|y_{i} - y_{i}^{*}| < \epsilon$ and $|y_{j} - y_{j}^{*}| < \epsilon$, for $t \ge T$. 
Then, $|y_{j} - y_{i} | \ge d - 2 \epsilon$, and thus
\begin{align}
\int_{T}^{\infty} \alpha_{ij} (\tau) ( y_{j} (\tau) - y_{i} ( \tau ) )^{2} d \tau \ge \int_{T}^{\infty} \alpha_{ij} ( \tau ) ( d - 2 \epsilon )^{2} d \tau.
\end{align}
Then, applying \textit{Lemma 3.6} in \cite{Cao:2011}, we have
\begin{align}
\int_{T}^{\infty} \alpha_{ij} ( \tau ) d \tau \le \frac{|| x(0) ||^{2}}{(d - 2 \epsilon)^{2}},
\end{align}
and also $\int_{0}^{\infty}\alpha_{ij} ( \tau ) d \tau < \infty$.
By the definition of $\bar{\mathcal{G}}_{[0, \infty)}$, it implies $\bar{\alpha}_{ij} = 0$ when $y_{i}^{*} = \mbox{sat} ( x_{i}^{*}) \neq y_{j}^{*} = \mbox{sat} (x_{j}^{*} )$. \\
Secondly, if $y_{i}^{*} = y_{j}^{*}$, then $y_{i}^{*} - y_{j}^{*} = 0$.
Therefore, we have
\begin{align}
\bar{\alpha}_{ij} ( y_{i}^{*} - y_{j}^{*} ) = 0,
\end{align}
which implies $\bar{L} y^{*} = \bar{L} \mathbf{sat} (x^{*} ) = 0$.
\end{proof}

Note that \textit{Lemma \ref{ch4_lem_time-varying_equilibria}} does not imply that $x^{*}$ is such that $\mathbf{sat} ( x^{*} ) = x^{*}$ and $\bar{L} x^{*} = 0$ due to the existence of unachievable equilibrium.
For example, for $x^{*} \ge s$ and $x^{*} \neq s \mathbf{1}$, $\bar{L} \mathbf{sat} ( x^{*} ) = \bar{L} s \mathbf{1} = 0$.
However, the following theorem shows that, when the consensus is reached, $\mathbf{sat} ( x^{*} ) = x^{*} = C \mathbf{1}$.

\begin{theorem} \label{ch4_the_time-varying_homo}
Suppose the undirected, time-varying graph $\mathcal{G}(t)$ is integrally connected over $[0, \infty)$, i.e.,  $\bar{\mathcal{G}}_{[0,\infty)}$ is connected. Then, the group of $N$ agents (\ref{ch4_sys}) under the homogeneous condition achieves the consensus, 
if and only if 
\begin{align}
x (t_{0}) \in \mathcal{X} := \left\{ x (t_{0}) \in \mathbf{R}^{N} : \frac{1}{N} \left| \sum_{i=1}^{N} x_{i} (t_{0}) \right| \le  s  \right\}.
\end{align}
\end{theorem}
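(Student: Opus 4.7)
The plan is to combine the necessity argument from the fixed-graph case with the two attractivity lemmas just proved in order to identify the limit $x^{*}$.

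For the necessity, I would essentially copy the homogeneous-case argument from the proof of \textit{Theorem \ref{ch4_the_undirected_fixed}}. By contradiction, suppose the consensus is achieved with decision value $C$ satisfying $|C|>s$; by \textit{Lemma \ref{ch4_lem_average_undirected}} this $C$ coincides with $\frac{1}{N}\sum_{i=1}^{N} x_{i}(t_{0})$. Choose $\epsilon>0$ so that $C-\epsilon>s$ (or $C+\epsilon<-s$) and so that \textit{Assumption \ref{ch4_assumption}} still fails to hold at some time $T$ with $|x_{i}(t)-C|<\epsilon$ for all $t\ge T$. For $t\ge T$ every $x_{i}(t)$ exceeds $s$, hence $y_{i}(t)=s$ for all $i$, and therefore $\dot{x}_{i}(t)=\sum_{j}\alpha_{ij}(t)(s-s)=0$ for almost every $t\ge T$. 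But then $x_{i}(t)\equiv x_{i}(T)$ for $t\ge T$, contradicting the fact that the agents have not reached consensus at $t=T$.

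For the sufficiency I would invoke \textit{Lemma \ref{ch4_lem_time-varying_ext_equilibria}} to obtain a limit $x^{*}=\lim_{t\to\infty} x(t)$ componentwise, and then \textit{Lemma \ref{ch4_lem_time-varying_equilibria}} to conclude $\bar{L}\,\mathbf{sat}(x^{*})=0$. Since $\bar{\mathcal{G}}_{[0,\infty)}$ is connected, $\mathrm{Ker}\,\bar{L}=\mathrm{span}\{\mathbf{1}\}$, so $\mathbf{sat}(x^{*})=c\mathbf{1}$ for some scalar $c$ with $|c|\le s$. As in the proof of \textit{Theorem~\ref{ch4_the_undirected_fixed}}, this splits into the achievable set $\Omega_{a}=\{x^{*}=c\mathbf{1}:|c|\le s\}$ and the unachievable sets $\Omega_{u^{+}}=\{x^{*}\ge s\mathbf{1},\,x^{*}\neq s\mathbf{1}\}$ and $\Omega_{u^{-}}=\{x^{*}\le -s\mathbf{1},\,x^{*}\neq -s\mathbf{1}\}$.

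Finally, I would use \textit{Lemma \ref{ch4_lem_average_undirected}} to rule out the unachievable cases under the hypothesis $\bigl|\tfrac{1}{N}\sum_{i}x_{i}(t_{0})\bigr|\le s$. Indeed, since averages are invariant and $x(t)\to x^{*}$, one has $\tfrac{1}{N}\sum_{i}x_{i}^{*}=\tfrac{1}{N}\sum_{i}x_{i}(t_{0})$; but $x^{*}\in\Omega_{u^{+}}$ (resp.\ $\Omega_{u^{-}}$) forces $\tfrac{1}{N}\sum_{i}x_{i}^{*}>s$ (resp.\ $<-s$), contradicting the initial-condition bound. Hence $x^{*}\in\Omega_{a}$, which is exactly consensus. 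The main technical point I expect to be delicate is the sufficiency step, specifically the appeal to \textit{Lemma \ref{ch4_lem_time-varying_equilibria}} to bridge from ``$\mathbf{sat}(x^{*})$ is in the kernel of $\bar{L}$'' (a purely graph-topological statement about the integral graph) to ruling out the saturated plateau equilibria; everything else is a direct translation of the fixed-graph argument.
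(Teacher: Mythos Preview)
Your proposal is correct and follows essentially the same route as the paper: use \textit{Lemma~\ref{ch4_lem_time-varying_ext_equilibria}} and \textit{Lemma~\ref{ch4_lem_time-varying_equilibria}} to get $\mathbf{sat}(x^{*})\in\Ker\bar{L}=\mathrm{span}\{\mathbf{1}\}$, split into $\Omega_{a}$ and $\Omega_{u}$ exactly as in \textit{Theorem~\ref{ch4_the_undirected_fixed}}, and then let the invariance of the average rule out $\Omega_{u}$ for both sufficiency and necessity. The paper's own proof is in fact shorter than yours---it simply points back to the fixed-graph case once $\mathbf{sat}(x^{*})\in\mathrm{span}\{\mathbf{1}\}$ is established---so your more explicit write-up of the necessity argument and of the average-invariance contradiction is fine and fills in details the paper leaves implicit.
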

\begin{proof}
From \textit{Lemma \ref{ch4_lem_time-varying_ext_equilibria}} and \textit{Lemma \ref{ch4_lem_time-varying_equilibria}}, we know that $\lim_{t \rightarrow \infty} x(t) = x^{*}$, where $\mathbf{sat} ( x^{*} ) \in \Ker \bar{L}$.
For the integrally connected graph, $\Ker \bar{L} = \spand \{ \mathbf{1} \}$.
Then, it follows that $\mathbf{sat} (x^{*}) \in \spand \{ \mathbf{1} \}$, which, in turn, implies $\Omega : = \Omega_{u} \cup \Omega_{a}$, where $\Omega_{u}$ and $\Omega_{a}$ are defined in the proof of \textit{Theorem \ref{ch4_the_undirected_fixed}}.
Then, since the average value is invariant, the necessity and the sufficiency directly follow from the case of the fixed graph.
%
\end{proof}

\subsection{Heterogeneous Agents} \label{subsec_hetero}

In this subsection, we consider the heterogeneous condition, that is a general case of the homogeneous condition. 
In this case,  we need the following additional assumption on the graph:
\begin{assumption} \label{ch4_assumption_hetero}
For any pair $(i,j) \in \mathcal{E}$,  $\alpha_{ij} (t) \in 0 \bigcup [ \alpha_{\min}, \alpha_{\max} ]$.
\end{assumption}
Moreover, we assume that without loss of generality, the agents are already sorted such that $s_{1} > \cdots > s_{N}$.
Otherwise, by rearranging the order of the agents, we have this form.

Let $\mathcal{V}_{k}$ be a subset of the node set $\mathcal{V}$ defined by $\mathcal{V}_{k} := \{ 1, 2,..., k \}$.
Then, we first consider the following lemma whose proof is given in Appendix.
\begin{lemma} \label{ch4_lem_time-varying_hetero_conv_N-1}
Suppose the graph $\mathcal{G} (t)$ is integrally connected with \textit{Assumption \ref{ch4_assumption_hetero}}.
Then, with the consensus algorithm (\ref{ch4_sys}), for any $x_{i} (t_{0}) \in \mathbf{R}$, $\forall i \in \mathcal{V}$, there exists a number $T \ge t_{0}$ such that  there holds $x_{i} \in ( - s_{i} , s_{i} )$, $\forall i \in \mathcal{V}_{N-1}$, $\forall t \ge T$.
Moreover, we have $\lim_{t \rightarrow \infty} | x_{i} (t) | \le s_{N}$, $\forall i \in \mathcal{V}_{N-1}$.
\end{lemma}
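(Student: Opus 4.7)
The plan is to construct an integral Lyapunov function in the spirit of (\ref{ch4_the_undirected_fixed_Lyapunov}) but centered at $0$, use it to prove boundedness of the states and an $L^2$ bound on output differences, then invoke a Barbalat-type argument to conclude that all outputs asymptotically agree, and finally exploit the fact that $y_N$ is permanently trapped in $[-s_N, s_N]$ to squeeze every other state into $(-s_i, s_i)$ eventually.

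Concretely, I would take $V(t) = \sum_{i=1}^N \int_0^{x_i(t)} \mbox{sat}_i(\omega) \, d\omega$; each summand is nonnegative (since $\mbox{sat}_i$ is odd and nondecreasing) and grows linearly in $|x_i|$ outside $[-s_i, s_i]$. A direct computation, using \textit{Lemma~\ref{lem_double_potential}} with $a_i = b_i = y_i$, gives
\[
\dot V(t) = -\tfrac{1}{2} \sum_{i=1}^N \sum_{j=1}^N \alpha_{ij}(t)(y_i(t) - y_j(t))^2 \le 0.
\]
Hence $V$ is monotone and bounded below, so it converges. Monotonicity gives $V(t) \le V(t_0)$, and the coercivity of each summand forces every $x_i(t)$ (and therefore every $y_i(t)$) to remain in a fixed compact set; combined with \textit{Assumption~\ref{ch4_assumption_hetero}}, this makes $\dot x_i$ uniformly bounded, so $x_i$ and $y_i$ are globally Lipschitz in $t$. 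In addition, $\int_{t_0}^\infty \sum_{i,j} \alpha_{ij}(t)(y_i - y_j)^2 \, dt < \infty$.

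The next step is to push this $L^2$ bound, together with the integral connectivity of $\mathcal{G}(t)$ and the two-sided bound on $\alpha_{ij}$ from \textit{Assumption~\ref{ch4_assumption_hetero}}, into pointwise convergence $y_i(t) - y_j(t) \to 0$ for every pair $(i,j)$. For any edge $(i,j)$ of $\bar{\mathcal{G}}_{[0,\infty)}$ the set $\{t : \alpha_{ij}(t) \ge \alpha_{\min}\}$ has infinite Lebesgue measure (otherwise $\int \alpha_{ij} \, dt < \infty$ by Assumption~\ref{ch4_assumption_hetero}), and on that set $(y_i - y_j)^2$ is integrable; together with the uniform continuity of $y_i - y_j$, a Barbalat-style argument mirroring Lemma~3.6 of \cite{Cao:2011} (already invoked in the proof of \textit{Lemma~\ref{ch4_lem_time-varying_equilibria}}) forces $y_i(t) - y_j(t) \to 0$ along each integral edge, and hence, by connectivity of $\bar{\mathcal{G}}_{[0,\infty)}$ and concatenation along a path, for every pair $i, j$.

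Once $y_i(t) - y_N(t) \to 0$ is available, the conclusion is quick: since $|y_N(t)| \le s_N$ for all $t$, we have $\limsup_{t \to \infty} |y_i(t)| \le s_N$ for every $i \in \mathcal{V}$. For $i \in \mathcal{V}_{N-1}$ the strict gap $s_i - s_N > 0$ lets us pick $\epsilon > 0$ with $s_N + \epsilon < s_i$ and a time $T_i$ with $|y_i(t)| < s_i$ for $t \ge T_i$; but $|y_i| < s_i$ is exactly the unsaturated condition $|x_i| < s_i$, which also gives $y_i(t) = x_i(t)$ there. Taking $T = \max_{i \in \mathcal{V}_{N-1}} T_i$ (a finite maximum) and then letting $\epsilon \to 0$ yields both claims of the lemma. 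The main obstacle I anticipate is the Barbalat step under potentially discontinuous $\alpha_{ij}$: \textit{Assumption~\ref{ch4_assumption_hetero}} is exactly what rules out the pathological case of $\alpha_{ij}$ being arbitrarily small on a set of infinite measure while $\int \alpha_{ij} \, dt = \infty$, and it is this two-sided bound that enables the passage from the $L^2$ estimate to pointwise convergence of the outputs.
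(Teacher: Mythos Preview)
Your overall strategy is quite different from the paper's and, if it could be carried through, would be considerably slicker. The paper does not use a single global Lyapunov function; instead it runs an inductive ``peeling'' argument agent by agent. In Step~1 it shows directly that $x_1$ must enter $(-s_1,s_1)$ in finite time (every neighbor has $|y_j|\le s_2<s_1$, and the incident weights integrate to infinity, so $|x_1|$ is driven down); an invariance lemma then traps it there. In Step~$p$ it linearizes the first $p-1$ agents, compares with a system $\dot{\bar z}=-(L_{p-1}(t)+D_{p-1}(t))\bar z$, and invokes an auxiliary exponential-convergence lemma to push $|x_i|$ below $s_p$ at an exponential rate, after which the same finite-time argument brings $x_p$ into $(-s_p,s_p)$. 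The final step squeezes everyone in $\mathcal V_{N-1}$ below $s_N$ in the limit.

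The gap in your argument is the Barbalat step, and it is a real one. What your Lyapunov computation gives is $\int_E (y_i-y_j)^2\,dt<\infty$ on the set $E=\{t:\alpha_{ij}(t)\ge\alpha_{\min}\}$, which has infinite measure, together with uniform continuity of $y_i-y_j$. This does \emph{not} imply $y_i(t)-y_j(t)\to 0$: take $E=\bigcup_k[2k,2k+1]$ and let $f$ be any uniformly continuous nonnegative function supported on the complementary intervals with $\sup f=1$; then $\int_E f=0$ yet $f\not\to 0$. Assumption~\ref{ch4_assumption_hetero} only guarantees $|E|=\infty$, not that $E$ meets every window of fixed length, so nothing you have invoked prevents $\alpha_{ij}$ from being off exactly when $|y_i-y_j|$ is large. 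The argument you cite from Lemma~\ref{ch4_lem_time-varying_equilibria} escapes this trap by \emph{first} knowing that the limit $y^*$ exists and then deriving a contradiction from $y_i^*\ne y_j^*$; but in the heterogeneous setting the existence of limits (Lemma~\ref{ch4_lem_time-varying_ext_hete_limit}) already relies on Lemma~\ref{ch4_lem_time-varying_hetero_conv_N-1} to restore the order preservation $x_i\ge x_j\Rightarrow y_i\ge y_j$ needed for the $S_m$ monotonicity argument. So your route, as written, is circular. To salvage it you would need either an independent proof that limits exist in the heterogeneous case, or a dynamical argument explaining why the interleaving in the counterexample above cannot occur along actual trajectories; neither is supplied, and neither is obvious.
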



We next recall $M_{k} (t)$ and $S_{k} (t)$ defined in Section~\ref{subsec_homo}.
Then, similar to \textit{Lemma \ref{ch4_lem_time-varying_ext_equilibria}}, we can show the existence of limits as follows:
\begin{lemma} \label{ch4_lem_time-varying_ext_hete_limit}
For the group of $N$ agents (\ref{ch4_sys}) under the heterogeneous condition, there exists $x_{i}^{*}$ such that $\lim_{t \rightarrow \infty} x_{i} (t) = x_{i}^{*}$ $\forall i \in \mathcal{V}$.
Moreover, $x_{i}^{*} \in [ - s_{N}, s_{N} ]$, $\forall i \in \mathcal{V}_{N-1}$.
\end{lemma}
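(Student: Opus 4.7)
The plan is to mirror the proof of Lemma \ref{ch4_lem_time-varying_ext_equilibria} (the homogeneous case), using Lemma \ref{ch4_lem_time-varying_hetero_conv_N-1} to reduce the heterogeneous dynamics to an asymptotically homogeneous setting.

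First, by Lemma \ref{ch4_lem_time-varying_hetero_conv_N-1} there exists $T \ge t_0$ such that $y_i(t) = x_i(t)$ for every $i \in \mathcal{V}_{N-1}$ and $t \ge T$, and $\limsup_{t \to \infty} |x_i(t)| \le s_N$ for the same indices. By Lemma \ref{ch4_lem_average_undirected}, $\frac{1}{N}\sum_{i=1}^{N} x_i(t) = x^*$ is invariant, so $x_N(t) = N x^* - \sum_{i \in \mathcal{V}_{N-1}} x_i(t)$ is bounded on $[T, \infty)$, and hence the whole trajectory $x(t)$ is bounded on $[T, \infty)$.

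Next, introduce the ranking $M_k(t)$ and the partial sums $S_m(t)$ as in Section \ref{subsec_homo}. Proceeding as in (\ref{eq_S_m}),
\begin{align*}
D^+ S_m(t) = \sum_{k=1}^m \sum_{l=m+1}^N \alpha_{i_k i_l}(t) \bigl(y_{i_l}(t) - y_{i_k}(t)\bigr).
\end{align*}
For pairs with both indices in $\mathcal{V}_{N-1}$, the identity $y_i = x_i$ combined with the descending ranking give $y_{i_l} - y_{i_k} \le 0$. For pairs involving agent $N$, a case analysis on the sign and magnitude of $x_N$ handles every configuration except the following: $i_l = N$ with $x_N(t) < -s_N$ and some $x_{i_k}(t)$ ($i_k \in \mathcal{V}_{N-1}$) lying in $(-s_{i_k}, -s_N)$, together with its symmetric counterpart with $i_k = N$ and $x_N(t) > s_N$. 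The $\limsup$ bound of Lemma \ref{ch4_lem_time-varying_hetero_conv_N-1} ensures these exceptional configurations vanish asymptotically, so $D^+ S_m(t)$ is asymptotically non-positive. From this, one would upgrade to convergence of $S_m(t)$, hence of each $M_k$, hence of each $x_i$, along the lines of Lemma \ref{ch4_lem_time-varying_ext_equilibria}. The bound $x_i^* \in [-s_N, s_N]$ for $i \in \mathcal{V}_{N-1}$ then follows directly from the $\limsup$ statement in Lemma \ref{ch4_lem_time-varying_hetero_conv_N-1}.

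The hard step is the ``upgrade''---passing rigorously from ``$D^+ S_m$ is asymptotically non-positive'' to genuine convergence of $S_m$---since a bound $D^+ S_m(t) \le \delta(t) \to 0$ alone does not imply convergence of $S_m$ (witness $\sin(\log t)$, whose derivative tends to $0$ without convergence). I expect the remedy to require an integrable control on the exceptional contribution, for example through the Lyapunov functional $V(x) = \sum_i \int_0^{x_i} \mbox{sat}_i(\omega) \, d\omega$, whose derivative $\dot V = -\frac{1}{2}\sum_{i,j}\alpha_{ij}(t)(y_i - y_j)^2 \le 0$ (obtained via Lemma \ref{lem_double_potential}) yields a finite-energy bound on the pairwise $y$-discrepancies which should translate into an $\mathcal{L}^1$ bound on the bad terms and close the argument.
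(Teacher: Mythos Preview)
Your approach is the same as the paper's: use Lemma~\ref{ch4_lem_time-varying_hetero_conv_N-1} to place $x_i$, $i\in\mathcal{V}_{N-1}$, in its linear region after finite time, bound $x_N$ via the invariant average, compute $D^{+}S_m$ as in (\ref{eq_S_m}), and then argue as in Lemma~\ref{ch4_lem_time-varying_ext_equilibria}. The divergence is only in how the sign of $D^{+}S_m$ is handled.

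The paper does not face your ``upgrade'' problem because it asserts that, once $x_i(t)\in(-s_{N-1},s_{N-1})$ for all $i\in\mathcal{V}_{N-1}$ and $t\ge T$, the implication $x_i\ge x_j\Rightarrow y_i\ge y_j$ holds outright, so $D^{+}S_m(t)\le 0$ \emph{exactly} for $t\ge T$; genuine monotonicity then gives convergence of each $S_m$ directly. You have in fact been more careful than the paper here: the exceptional configurations you isolated (e.g.\ $i_k=N$, $x_N>s_N$, and $x_{i_j}\in(s_N,x_N]$ for some $i_j\in\mathcal{V}_{N-1}$) are not excluded by the $(-s_{N-1},s_{N-1})$ bound alone, so the paper's monotonicity claim is not fully justified as written. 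It would be justified if one had $|x_i(t)|\le s_N$ for $i\in\mathcal{V}_{N-1}$ after finite time, but Lemma~\ref{ch4_lem_time-varying_hetero_conv_N-1} only delivers that as a $\limsup$.

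The cleanest way to close the gap is not the $\mathcal{L}^2$ energy estimate you sketched (which controls $\int\alpha_{ij}(y_i-y_j)^2$ rather than $\int\alpha_{ij}|y_i-y_j|$), but rather to exploit the quantitative rate hidden in the proof of Lemma~\ref{ch4_lem_time-varying_hetero_conv_N-1}: Step~$N$ there gives $|x_i(t)|\le s_N+\Delta e^{-\delta k}$ for $i\in\mathcal{V}_{N-1}$ and $t\in[t_{k-1},t_k)$. The positive part of each exceptional term in $D^{+}S_m$ is then bounded by $\alpha_{\max}\Delta e^{-\delta k}$, which is integrable under Assumption~\ref{ch4_assumption_hetero}. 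Hence $S_m(t)+\int_{T}^{t}(\text{integrable correction})$ is nonincreasing and bounded below, and convergence of $S_m$, $M_k$, and $x_i$ follows as in Lemma~\ref{ch4_lem_time-varying_ext_equilibria}.
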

\begin{proof}
Since $S_{i} (t)$ is absolutely continuous for almost everywhere, following the proof of \textit{Lemma \ref{ch4_lem_time-varying_ext_equilibria}}, we can obtain
\begin{align} \label{ch4_lem_time-varying_ext_hete_limit_eq1}
D^{+} S_{m} (t) =& \sum_{k=1}^{m} \sum_{j=m+1}^{N} \alpha_{i_{k} i_{j}} (t) ( y_{i_{j}} (t) - y_{i_{k}} (t) ) .
\end{align}
Note that, in the heterogeneous case, (\ref{ch4_lem_time-varying_ext_hete_limit_eq1}) does not imply $D^{+} S_{m} (t) \le 0$ since $x_{i} \ge x_{j}$ does not imply $y_{i} = \mbox{sat}_{i} ( x_{i} ) \ge y_{j} = \mbox{sat}_{j} ( x_{j} )$,
e.g., for $x_{i} > x_{j} > s_{j} > s_{i}$, $y_{j} = s_{j} > y_{i} = s_{i}$.
However, from \textit{Lemma \ref{ch4_lem_time-varying_hetero_conv_N-1}}, we know that there exists $T \ge t_{0}$ such that $x_{i} (t) \in ( - s_{N-1} , s_{N-1} )$, $\forall i \in \mathcal{V}_{N-1}$, $\forall t \ge T$.
Therefore, for $t \ge T$, if $x_{i} \ge x_{j}$, then $y_{i} \ge y_{j}$, which implies $D^{+} S_{m} (t) \le 0$, $\forall t \ge T$.
Moreover, the average value is invariant, $x_{N} (t)$ is bounded, and $S_{m} (t)$ is bounded below.
Then, following the proof of \textit{Lemma \ref{ch4_lem_time-varying_ext_equilibria}}, we know that
each $x_{i} (t)$ must converge to one of the values of $M_{j}^{*}$.
Moreover, from \textit{Lemma \ref{ch4_lem_time-varying_hetero_conv_N-1}}, $\lim_{t \rightarrow \infty} | x_{i} (t) | \le s_{N}$ for all $i \in \mathcal{V}_{N-1}$. Therefore, the equilibrium of $x_{i}$ for all $i \in \mathcal{V}_{N-1}$ must be within the interval $[- s_{N} ,s_{N} ]$.
\end{proof}

\begin{lemma} \label{ch4_lem_time-varying_equilibria_heterogeneous}
$x^{*} \in \{ x \in \mathbf{R}^{N} : \mbox{sat}_{N} (x) \in \Ker \bar{L}, \,~~ |x_{i} | \le s_{N} \forall i \in \mathcal{V}_{N-1} \}$.
\end{lemma}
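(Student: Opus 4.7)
The plan is to mirror the proof of \textit{Lemma \ref{ch4_lem_time-varying_equilibria}}, adapting it to the heterogeneous setting by using the uniform bound from \textit{Lemma \ref{ch4_lem_time-varying_ext_hete_limit}} to identify the vector $y^{*}$ of limiting outputs with $\mbox{sat}_{N}(x^{*})$.

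The second conclusion of the lemma, $|x_{i}^{*}| \le s_{N}$ for every $i \in \mathcal{V}_{N-1}$, is inherited verbatim from \textit{Lemma \ref{ch4_lem_time-varying_ext_hete_limit}}. For the first conclusion, the crucial observation is that, under the standing ordering $s_{1} > s_{2} > \cdots > s_{N}$, this bound gives $|x_{i}^{*}| \le s_{N} < s_{i}$ for each $i \in \mathcal{V}_{N-1}$, so the individual saturation is inactive at the limit and $y_{i}^{*} = \mbox{sat}_{i}(x_{i}^{*}) = x_{i}^{*} = \mbox{sat}_{N}(x_{i}^{*})$; trivially $y_{N}^{*} = \mbox{sat}_{N}(x_{N}^{*})$. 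Hence $y^{*} = \mbox{sat}_{N}(x^{*})$, and it suffices to establish $\bar{L} y^{*} = 0$.

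For this I would replay, pair by pair, the argument in \textit{Lemma \ref{ch4_lem_time-varying_equilibria}}. If $y_{i}^{*} = y_{j}^{*}$ then $\bar{\alpha}_{ij}(y_{i}^{*} - y_{j}^{*}) = 0$ trivially. Otherwise, setting $d := |y_{i}^{*} - y_{j}^{*}| > 0$ and using convergence of $x_{i}(t)$ together with continuity of $\mbox{sat}_{i}$, pick $T$ so that $|y_{i}(t) - y_{i}^{*}|, |y_{j}(t) - y_{j}^{*}| < \epsilon$ for all $t \ge T$. Then
\begin{align}
(d - 2\epsilon)^{2} \int_{T}^{\infty} \alpha_{ij}(\tau)\,d\tau \le \int_{T}^{\infty} \alpha_{ij}(\tau)\bigl(y_{j}(\tau) - y_{i}(\tau)\bigr)^{2}\,d\tau.
\end{align}
A finite upper bound on the right-hand side forces $\int_{0}^{\infty} \alpha_{ij}(\tau)\,d\tau < \infty$, hence $\bar{\alpha}_{ij} = 0$ by the definition of $\bar{\mathcal{G}}_{[0,\infty)}$, and summing over $j$ yields $\bar{L}\,\mbox{sat}_{N}(x^{*}) = 0$.

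The main obstacle I anticipate is supplying the finite-energy bound $\int_{0}^{\infty} \sum_{i,j}\alpha_{ij}(\tau)(y_{j} - y_{i})^{2}\,d\tau < \infty$, the heterogeneous counterpart of Lemma 3.6 of \cite{Cao:2011}. I would obtain it from an integral Lyapunov function of the form $V(x) = 2\sum_{i} \int_{c}^{x_{i}}(\mbox{sat}_{i}(\omega) - c)\,d\omega$ with $c \in [-s_{N}, s_{N}]$ chosen so that $V \ge 0$ via \textit{Lemma \ref{lem_integral_positive}}. Applying \textit{Lemma \ref{lem_double_potential}} to the undirected time-varying graph, the computation from the proof of \textit{Theorem \ref{ch4_the_undirected_fixed}} extends to give $\dot V = -\sum_{i,j}\alpha_{ij}(t)(y_{i} - y_{j})^{2} \le 0$; integrating over $[t_{0},\infty)$ together with the lower bound on $V$ ensured by the eventual containment $x_{i}(t) \in (-s_{N-1}, s_{N-1})$ for $i \in \mathcal{V}_{N-1}$ from \textit{Lemma \ref{ch4_lem_time-varying_hetero_conv_N-1}} delivers the desired finite integral.
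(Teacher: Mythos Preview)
Your proposal is correct and follows essentially the same route as the paper: use \textit{Lemma \ref{ch4_lem_time-varying_ext_hete_limit}} to get $|x_{i}^{*}| \le s_{N}$ for $i \in \mathcal{V}_{N-1}$ and hence $y^{*} = \mbox{sat}_{N}(x^{*})$, then show $\bar L\, y^{*} = 0$. The paper is simply more economical---it invokes \textit{Lemma \ref{ch4_lem_time-varying_equilibria}} directly (its proof does not rely on homogeneity) rather than replaying the $(d-2\epsilon)$ argument, so your final paragraph re-deriving the finite-energy bound is extra work; note also that the lower bound $V \ge 0$ already follows from \textit{Lemma \ref{lem_integral_positive}} for any $c \in [-s_{N}, s_{N}]$ (e.g.\ $c=0$), and does not require the eventual containment from \textit{Lemma \ref{ch4_lem_time-varying_hetero_conv_N-1}}.
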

\begin{proof}
From \textit{Lemma \ref{ch4_lem_time-varying_equilibria}}, we know that $y^{*} \in \Ker \bar{L}$.
Since $\lim_{t \rightarrow \infty} | x_{i} (t) | = | x_{i}^{*} | \le s_{N}$, $\forall i \in \mathcal{V}_{N-1}$ from \textit{Lemma \ref{ch4_lem_time-varying_ext_hete_limit}}, $y_{i}^{*} = x_{i}^{*} = \mbox{sat}_{N} ( x_{i}^{*} )$, $\forall i \in \mathcal{V}_{N-1}$, which completes the proof.
\end{proof}

Then, we are now ready to state the following result.
\begin{theorem} \label{ch4_the_time-varying_hetero}
Suppose the undirected, time-varying graph $\mathcal{G}(t)$ is integrally connected over $[0, \infty)$ with \textit{Assumption \ref{ch4_assumption_time-varying}}.
Then, the group of $N$ agents (\ref{ch4_sys}) under the heterogeneous condition achieves the consensus, 
if and only if 
\begin{align}
x (t_{0}) \in \mathcal{X} := \left\{ x (t_{0}) \in \mathbf{R}^{N} : \frac{1}{N} \left| \sum_{i=1}^{N} x_{i} (t_{0}) \right| \le \min_{i \in \mathcal{V}} \{ s_{i} \} \right\}.
\end{align}
\end{theorem}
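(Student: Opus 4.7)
The plan is to mirror the architecture of the proof of \textit{Theorem \ref{ch4_the_undirected_fixed}} but to replace the LaSalle argument (which required differentiability of the edge weights) by a direct appeal to the equilibrium characterization in \textit{Lemma \ref{ch4_lem_time-varying_equilibria_heterogeneous}}. Throughout, I will use the assumed ordering $s_1 > \cdots > s_N$, so that $\min_{i \in \mathcal{V}} s_i = s_N$, and the fact (\textit{Lemma \ref{ch4_lem_average_undirected}}) that $\frac{1}{N}\sum_i x_i(t)$ is preserved along the flow.

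For necessity I would argue by contradiction. Suppose consensus is achieved with decision value $C$, so that by \textit{Remark \ref{ch4_remark_average}} we have $C = \frac{1}{N}\sum_i x_i(t_0)$. Assume $|C| > s_N$. Since $\lim_{t\to\infty} x_i(t) = C$ for every $i$, in particular $\lim_{t\to\infty}|x_i(t)| = |C| > s_N$ for $i \in \mathcal{V}_{N-1}$. But \textit{Lemma \ref{ch4_lem_time-varying_hetero_conv_N-1}} states precisely that $\lim_{t\to\infty}|x_i(t)| \le s_N$ for all $i \in \mathcal{V}_{N-1}$, a contradiction. Hence $|C| \le s_N$, which is the stated condition.

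For sufficiency, assume $\bigl|\frac{1}{N}\sum_i x_i(t_0)\bigr| \le s_N$. \textit{Lemma \ref{ch4_lem_time-varying_ext_hete_limit}} gives convergence $x_i(t) \to x_i^*$ for each $i$, and \textit{Lemma \ref{ch4_lem_time-varying_equilibria_heterogeneous}} yields $\mathbf{sat}_N(x^*) \in \Ker \bar L$ together with $|x_i^*| \le s_N$ for $i \in \mathcal{V}_{N-1}$. Because the integral graph $\bar{\mathcal{G}}_{[0,\infty)}$ is connected, $\Ker \bar L = \spand\{\mathbf{1}\}$, so there is a scalar $c$ with $\mbox{sat}_N(x_i^*) = c$ for all $i$. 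For $i \in \mathcal{V}_{N-1}$ the bound $|x_i^*| \le s_N$ forces $x_i^* = c$, and in particular $|c| \le s_N$.

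The remaining step is to rule out the spurious case in which $|c| = s_N$ but $x_N^* \ne c$; this is the only place where the saturation fails to be injective, and it is the main obstacle. Here the initial-condition hypothesis enters critically: if, say, $c = s_N$ and $x_N^* > s_N$, then by invariance of the mean one computes
\begin{align}
\frac{1}{N}\sum_{i=1}^{N} x_i(t_0) \;=\; \frac{1}{N}\sum_{i=1}^{N} x_i^* \;=\; \frac{(N-1)c + x_N^*}{N} \;>\; s_N,
\end{align}
contradicting the assumption; the case $c = -s_N$, $x_N^* < -s_N$ is symmetric. Thus $x_N^* = c$ as well, and $x^* = c\mathbf{1}$, so consensus is achieved with common value $c = \frac{1}{N}\sum_i x_i(t_0)$. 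The heavy lifting has already been done in \textit{Lemmas \ref{ch4_lem_time-varying_hetero_conv_N-1}}--\textit{\ref{ch4_lem_time-varying_equilibria_heterogeneous}}; the theorem itself is essentially a bookkeeping argument that combines those lemmas with the invariant-average identity.
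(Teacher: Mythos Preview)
Your proof is correct and matches the paper's approach almost verbatim for sufficiency: both invoke \textit{Lemmas \ref{ch4_lem_time-varying_ext_hete_limit}} and \textit{\ref{ch4_lem_time-varying_equilibria_heterogeneous}} to get $\mbox{sat}_N(x^*)\in\spand\{\mathbf{1}\}$ with $|x_i^*|\le s_N$ for $i\in\mathcal{V}_{N-1}$, and then use invariance of the mean to eliminate the spurious equilibrium $x_N^*\neq c$; you are simply more explicit about the boundary case $|c|=s_N$. For necessity the paper defers to the fixed-graph contradiction argument of \textit{Theorem \ref{ch4_the_undirected_fixed}}, whereas you instead read it off directly from the bound $\lim_{t\to\infty}|x_i(t)|\le s_N$ in \textit{Lemma \ref{ch4_lem_time-varying_hetero_conv_N-1}}; your route is arguably cleaner since it avoids having to adapt the neighbor-by-neighbor argument to the time-varying setting.
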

\begin{proof}
The necessity directly follows from the case of fixed graph.
Therefore, we will prove the sufficiency only.
From \textit{Lemma \ref{ch4_lem_time-varying_ext_hete_limit}}, \textit{Lemma \ref{ch4_lem_time-varying_equilibria_heterogeneous}} and the fact that $\Ker \bar{L} = \spand \{ \mathbf{1} \}$, 
there exists a constant $C$ such that $\lim_{t \rightarrow \infty} x_{i} = x_{i}^{*} = C \in [ - s_{N} ,s_{N} ]$ $\forall i \in \mathcal{V}_{N-1}$, but need not $\lim_{t \rightarrow \infty} x_{N} (t) = x_{N}^{*} = C$.
{Moreover, since the average value is invariant, we have $\lim_{t \rightarrow \infty} \frac{1}{N} \sum_{i=1}^{N} x_{i} (t ) = \frac{1}{N} \sum_{i=1}^{N} x_{i} (t_{0}) = \frac{1}{N} ( ( N-1 ) C + x_{N}^{*} )$.
Therefore, if $ \frac{1}{N} | \sum_{i=1}^{N} x_{i} (t_{0}) | \le \min_{i \in \mathcal{V}} \{ s_{i} \}$, then $x_{N}^{*} = C = \frac{1}{N} \sum_{i=1}^{N} x_{i} (t_{0})$ is unique equilibrium point, which completes the proof.}
\end{proof}

\subsection{Unachievable Equilibrium} \label{subsec_unach}
In this subsection, we investigate some properties of unachievable equilibrium for consensus.
For simplicity, we assume that $\frac{1}{N} \sum_{i=1}^{N} x_{i} (t_{0}) > \min_{i \in \mathcal{V}} \{ s_{i} \}$.

For the homogeneous case, as mentioned in the introduction section, the set of unachievable equilibrium is defined by $\Omega_{u^{+}} := \{ x \in \mathbf{R}^{N} : x \ge s, x \neq s \}$.
Then, it is clear from Section~\ref{subsec_homo} that $\lim_{t \rightarrow \infty} x_{i} (t) = x_{i}^{*} \in [ s , \max_{i \in \mathcal{V}} x_{i} (t_{0} )]$.
Moreover, the derivative of $|x_{i} (t)|$ is given by
\begin{align}
D^{+} | x_{i} (t) | \le& \sum_{j=1}^{N} \alpha_{ij} (t) ( |y_{j} (t)| - | y_{i} (t) | ) \nonumber\\
\le&  \sum_{j=1}^{N} \alpha_{ij} (t) ( s - | y_{i} (t) | ).
\end{align}
{Then, for $|y_{i} (t) | = s$,  $D^{+} | x_{i} (t) | \le 0$,
which implies that the set $\mathcal{O}_{i} := \{ x_{i} : | x_{i} | \le s \}$ is a positively invariant set, i.e.,  if $x_{i} (t^{*}) \in \mathcal{O}_{i}$, then $x_{i} (t ) \in \mathcal{O}_{i}$ $\forall t \ge t^{*}$.}
Therefore, $\lim_{t \rightarrow \infty} x_{i} (t) = s$,  $\forall i \in \{ i \in \mathcal{V}: x_{i} (t_{0}) \le s \}$, and the remaining agents converge to the interval $[ s, \max_{i \in \mathcal{V}} x_{i} (t_{0})]$.

{For the heterogeneous case, according to Section~\ref{subsec_hetero}, the set of unachievable equilibrium is defined by $\Omega_{u^{+}} := \{ x \in \mathbf{R}^{N} : x_{i} = s_{N}, \forall i \in \mathcal{V}_{N-1}, \, x_{N} > s_{N} \}$, which implies, for any $x_{i} (t_{0})$, $\lim_{t \rightarrow \infty} x_{i} (t) = s_{N}$, $\forall i \in \mathcal{V}_{N-1}$, and $\lim_{t \rightarrow \infty} x_{N} (t) > s_{N}$.
Moreover, the invariance of the average value implies $\lim_{t \rightarrow \infty} \sum_{i=1}^{N} x_{i} (t) = \lim_{t \rightarrow \infty} x_{N} (t) + (N-1) s_{N} = \sum_{i=1}^{N} x_{i} (t_0)$, which gives $\lim_{t \rightarrow \infty} x_{N} (t) = \sum_{i=1}^{N} x_{i} (t_{0}) - ( N-1 ) s_{N}$.}

\section{Extensions} \label{extension}

\subsection{Double-integrator agents}
Since many real systems are controlled by the acceleration rather than the velocity, this subsection extends the previous results to the double-integrator modeled agents.
Consider the following group of $N$ double-integrator modeled agents:
\begin{align} \label{sys_double}
\dot{x}_{i} =& v_{i} \nonumber\\
\dot{v}_{i} =& u_{i}, \,~~~ i \in \mathcal{V}:= \{ 1,...,N \},
\end{align}
where $x_{i}, v_{i}, u_{i} \in \mathbf{R}$ are the position (or angle), velocity (or angular velocity), and control input of the agent $i$, respectively.
It was shown that the following consensus algorithm proposed in \cite{Ren_int:2007}
\begin{align}
u_{i} = \sum_{j=1}^{N} \alpha_{ij} \left( ( x_{j} - x_{i} ) + ( v_{j} - v_{i} ) \right),
\end{align}
solves the consensus problem for any $x_{i} (t_{0})$ and $v_{i} (t_{0})$, specifically,
$ x_{i} (t) \rightarrow  \frac{1}{N} \sum_{i=1}^{N} x_{i} ( t_{0} ) + t \frac{1}{N} \sum_{i=1}^{N} v_{i} (t_{0})$ and $ v_{i} (t) \rightarrow \frac{1}{N} \sum_{i=1}^{N} v_{i} (t_{0})$, $\forall i \in \mathcal{V}$.
However, in the presence of the measurement saturations, the consensus may not be reached due to the existence of unachievable equilibrium.
In this subsection, we assume that the measurements of the velocities have the homogeneous saturation levels and thus consider the following consensus algorithm:
\begin{align} \label{consensus_double}
u_{i} =& \sum_{j=1}^{N} \alpha_{ij} \left( ( x_{j} - x_{i} ) + ( y_{j} - y_{i} ) \right), \nonumber\\
y_{i} =& \mbox{sat} ( v_{i} ).
\end{align}



Then, by extending \textit{Theorem \ref{ch4_the_undirected_fixed}}, we have the following result:

\begin{theorem}
Suppose the graph is undirected and connected. Then, the group of $N$ agents (\ref{sys_double}) under the consensus algorithm (\ref{consensus_double}) achieves the consensus, i.e., $\lim_{t \rightarrow \infty} ( x_{i} - x_{j}) = 0$ and $\lim_{t \rightarrow \infty} ( v_{i} - v_{j} ) = 0$, $\forall i,j \in \mathcal{V}$,
if and only if 
\begin{align}
(x (t_{0}), v(t_{0})) \in \mathcal{X} := \left\{ (x (t_{0}), v (t_{0}) ) \in \mathbf{R}^{2N} : \frac{1}{N} \left| \sum_{i=1}^{N} v_{i} (t_{0}) \right| \le s \right\}.
\end{align}
\end{theorem}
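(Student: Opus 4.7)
The plan is to adapt the Lyapunov-LaSalle strategy of Theorem~\ref{ch4_the_undirected_fixed} to the second-order dynamics, together with a conservation-of-energy argument for the necessity direction. The first observation is that $\mathbf{1}^{T} L = 0$ implies $\frac{d}{dt}\!\left(\frac{1}{N}\sum_{i=1}^{N} v_{i}\right) = -\frac{1}{N}\mathbf{1}^{T} L(x+y) = 0$, so the average velocity $v^{*} := \frac{1}{N}\sum_{i=1}^{N} v_{i}(t_{0})$ is conserved; this is the quantity that must coincide with the common velocity limit whenever consensus is achieved, and it immediately identifies $v^{*}$ as the decision value for the velocity subsystem.

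For sufficiency, I would assume $|v^{*}| \le s$ and take the Lyapunov candidate $V = \tfrac{1}{2} x^{T} L x + \tfrac{1}{2}\| v - v^{*}\mathbf{1} \|^{2}$, which is nonnegative and vanishes exactly on the consensus manifold compatible with $\frac{1}{N}\mathbf{1}^{T} v \equiv v^{*}$. Using $\dot x = v$, $\dot v = -L(x+y)$ and $L\mathbf{1}=0$, the cross terms cancel and we obtain $\dot V = -v^{T} L y$; then Lemma~\ref{lem_double_potential} applied with $a_{i} = v_{i}$, $b_{i} = y_{i}$ together with the monotonicity of $\mbox{sat}(\cdot)$ gives $\dot V = -\frac{1}{2}\sum_{i,j}\alpha_{ij}(v_{i}-v_{j})(y_{i}-y_{j}) \le 0$. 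Passing to the disagreement coordinates $\xi_{i} = x_{i} - \frac{1}{N}\sum_{k} x_{k}$ and $\eta_{i} = v_{i} - v^{*}$ makes the closed-loop system autonomous, and LaSalle's invariance principle gives convergence to the largest invariant subset of $\{\dot V = 0\}$. To identify this set I would exploit that $\dot V = 0$ forces $(v_{i}-v_{j})(\mbox{sat}(v_{i})-\mbox{sat}(v_{j})) = 0$ on every edge, which by the shape of the saturation allows only three options per edge: $v_{i} = v_{j}$, or $v_{i}, v_{j} \ge s$, or $v_{i}, v_{j} \le -s$. Since no edge can bridge these three regimes, graph connectivity pins the entire network to a single regime, and the constraint $\frac{1}{N}\sum v_{i} = v^{*}$ with $|v^{*}| \le s$ rules out the two saturated regimes, forcing $v \equiv v^{*}\mathbf{1}$. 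Invariance of $\dot v$ in this configuration then gives $Lx = 0$, hence $x \in \spand\{\mathbf{1}\}$, which is consensus.

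For necessity I would argue by contradiction: assume consensus is achieved with $v^{*} > s$ (the case $v^{*} < -s$ is symmetric). Since $v_{i}(t) \to v^{*}$, there exists $T$ such that $v_{i}(t) > s$ for every $i$ and every $t \ge T$, so on $[T,\infty)$ the disagreement dynamics reduce to $\dot\xi = \eta$, $\dot\eta = -L\xi$, a conservative system with preserved energy $E = \tfrac{1}{2}\|\eta\|^{2} + \tfrac{1}{2}\xi^{T} L \xi$. Asymptotic consensus forces $E(t) \to 0$ and hence $E(T) = 0$; because $\sum_{i}\xi_{i}(t) \equiv 0$ and $L$ is positive definite on $\mathbf{1}^{\perp}$, this yields $\xi(T) = 0$ and $\eta(T) = 0$, i.e., exact consensus at $T$. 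The closed-loop right-hand side is globally Lipschitz (the saturation has unit Lipschitz constant), so backward uniqueness of solutions forces the trajectory to be at consensus for all $t \le T$ as well, contradicting Assumption~\ref{ch4_assumption}. The hardest step is expected to be the invariant-set analysis in LaSalle: because $\mbox{sat}(\cdot)$ is only non-strictly monotone outside $[-s,s]$, the level set $\{\dot V = 0\}$ admits, a priori, configurations in which several agents sit above $+s$ at distinct velocities, so ruling these out requires combining the edge-level dichotomy, graph connectivity, and the invariance $\tfrac{1}{N}\sum v_{i} = v^{*}$ simultaneously, with the boundary case $|v^{*}| = s$ being the tightest since it forces the entire network to rest at $v \equiv \pm s$, where consensus in velocity is only barely preserved.
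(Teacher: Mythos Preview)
Your sufficiency argument is essentially the paper's: up to an additive constant your Lyapunov function equals the paper's $V=\tfrac{1}{2}\sum_{i,j}\alpha_{ij}(x_i-x_j)^2+\sum_i v_i^2$, and the LaSalle analysis (edge-wise incremental passivity of $\mbox{sat}$, ruling out the all-$\ge s$ and all-$\le -s$ regimes via the conserved velocity average, then $\dot v\equiv 0\Rightarrow Lx=0$) is the same. One wording issue: passing to $(\xi,\eta)$ does not ``make the system autonomous'' --- it already is, since the graph is fixed --- the actual reason you need those coordinates is that $V$ is not radially unbounded in $x$, whereas on $\mathbf{1}^{\perp}\times\mathbf{1}^{\perp}$ it is, so LaSalle applies.

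Your necessity argument, however, is genuinely different from the paper's and in fact more complete. The paper simply says ``the necessity directly follows from \textit{Theorem~\ref{ch4_the_undirected_fixed}}'', but that theorem's necessity proof relies on $\dot x_i=0$ once all outputs saturate, which fails here: when all $v_i>s$ one only gets $\dot v=-Lx$, not $\dot v=0$. Your observation that on $[T,\infty)$ the disagreement dynamics become the undamped oscillator $\dot\xi=\eta,\ \dot\eta=-L\xi$ with conserved energy $E=\tfrac{1}{2}\|\eta\|^{2}+\tfrac{1}{2}\xi^{T}L\xi$, together with backward uniqueness from global Lipschitzness of $\mbox{sat}$, closes this gap cleanly: asymptotic consensus forces $E\equiv 0$, hence exact consensus at $T$, hence (by invariance of the consensus manifold and uniqueness) consensus at $t_0$, contradicting Assumption~\ref{ch4_assumption}. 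This is the right way to make the paper's one-line claim rigorous.
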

\begin{proof}
Since the average of all velocities is invariant, the necessity directly follows from \textit{Theorem \ref{ch4_the_undirected_fixed}}.
Therefore, we will prove the sufficiency only.

Consider the following Lyapunov function candidate:
\begin{align}
V = \frac{1}{2} \sum_{i=1}^{N} \sum_{j=1}^{N} \alpha_{ij} (x_{i} - x_{j})^{2} +  \sum_{i=1}^{N} v_{i}^{2}.
\end{align}
Then, the time derivative of $V$ is given by
\begin{align}
\dot{V} =&  \sum_{i=1}^{N} \sum_{j=1}^{N} \alpha_{ij} (x_{i} - x_{j}) ( \dot{x}_{i} - \dot{x}_{j}) + 2 \sum_{i=1}^{N} v_{i} \dot{v}_{i} \nonumber\\
=&  \sum_{i=1}^{N} \sum_{j=1}^{N} \alpha_{ij} (x_{i} - x_{j}) ( v_{i} - v_{j} ) 
+ 2  \sum_{i=1}^{N} \sum_{j=1}^{N} \alpha_{ij} v_{i} ( x_{j} - x_{i} + y_{j} - y_{i} ) .
\end{align}
Note that, by applying \textit{Lemma \ref{lem_double_potential}} with $a_{i} = v_{i}$ and $b_{i} = x_{i} + y_{i}$, we have
\begin{align}
2  \sum_{i=1}^{N} \sum_{j=1}^{N} \alpha_{ij} v_{i} ( x_{j} - x_{i} + y_{j} - y_{i} )
= - \sum_{i=1}^{N} \sum_{j=1}^{N} \alpha_{ij} ( v_{i} - v_{j} ) ( x_{i} - x_{j} + y_{i} - y_{j} ),
\end{align}
which gives
\begin{align}
\dot{V} =&  - \sum_{i=1}^{N} \sum_{j=1}^{N} \alpha_{ij} ( v_{i} - v_{j} ) ( y_{i} - y_{j} ).
\end{align}
Since the saturation function satisfies the incremental passive condition \cite{Pavlov:2008}, i.e.,
\begin{align}
(v_{i} - v_{j}) ( \mbox{sat} (v_{i}) - \mbox{sat} ( v_{j} ) ) \ge 0, \mbox{ for any } i,j \in \mathcal{V},
\end{align}
we have $\dot{V} \le 0$.
Let $\mathcal{M} := \{ (x,v) \in \mathbf{R}^{2N} : \dot{V} = 0 \}$.
Then, $\dot{V} \equiv 0$ implies that either $( v_{i} - v_{j} ) \equiv 0$ or $(\mbox{sat} (v_{i}) - \mbox{sat} (v_{j} ) ) \equiv 0$, $\forall i,j \in \mathcal{V}$.
Since the average of all velocities is invariant, we can prove from the proof of \textit{Theorem \ref{ch4_the_undirected_fixed}} that if $\frac{1}{N} \left| \sum_{i=1}^{N} v_{i} (t_{0}) \right| \le s$, then  $(\mbox{sat} (v_{i}) - \mbox{sat} (v_{j} ) ) \equiv 0$, $\forall i,j \in \mathcal{V}$, only when $( v_{i} - v_{j} ) \equiv 0$, $\forall i,j \in \mathcal{V}$.
Moreover, $( v_{i} - v_{j} ) \equiv 0$, $\forall i,j \in \mathcal{V}$, implies $( \dot{v}_{i} - \dot{v}_{j} ) \equiv 0$, $\forall i,j \in \mathcal{V}$, {in an invariant set within $\mathcal{M}$},
which gives that $\dot{v} \in \spand \{ \mathbf{1} \}$.
Note that the average of all velocities is invariant, i.e., $ \mathbf{1}^{T} \dot{v} = 0$, and thus, $\dot{v}$ is orthogonal to $\mathbf{1}$.
Therefore, we can conclude that $\dot{v} \equiv 0$, and thus, from (\ref{consensus_double}) and the fact that $( v_{i} - v_{j} ) \equiv 0$, it follows that $\dot{v}_{i} \equiv - \sum_{j=1}^{N} \alpha_{ij} x_{ij} \equiv 0$.
As a result, we have $\sum_{i=1}^{N} x_{i} \sum_{j=1}^{N} \alpha_{ij} x_{ij} \equiv 0$, which implies from \textit{Lemma \ref{lem_double_potential}} that $\frac{1}{2} \sum_{i=1}^{N} \sum_{j=1}^{N} \alpha_{ij} ( x_{i} - x_{j} )^{2} \equiv 0$.
Since the graph is connected, we can conclude that $( x_{i} - x_{j} ) \equiv 0$, $\forall i,j \in \mathcal{V}$.

In summary, we have shown that $\dot{V} \le 0$ and $\dot{V} \equiv 0$ only when $(x_{i} - x_{j} ) \equiv 0$ and $( v_{i} - v_{j} ) \equiv 0$, $\forall i,j \in \mathcal{V}$.
Therefore, according to Lasalle Invariance Principle, we have $ \lim_{t \rightarrow \infty} (x_{i} (t) - x_{j} (t) ) = 0$ and $\lim_{t \rightarrow \infty} ( v_{i} (t) - v_{j} (t) ) = 0$, $\forall i,j \in \mathcal{V}$, which completes the proof.
\end{proof}

\subsection{Directed graph} \label{subsec_directed}

In this subsection, we consider the single-integrator modeled agents as in (\ref{ch4_sys}) with a directed graph.
Let $p = [ p_{1} ,...,p_{N}]^{T}$ be the left eigenvector of its Laplacian matrix $L$ associated with eigenvalue $\lambda_{1} = 0$, and $\sum_{i=1}^{N} p_{i} =1$.
Note that $p$ is positive \cite{Lewis:2014}, and it is clear that the weighted average of all agents' states defined by $\sum_{i=1}^{N} p_{i} x_{i} (t)$ is invariant.
Then, we have the following lemma, which can be proved from the proof of \textit{Lemma 7.7} in \cite{Lewis:2014}:
\begin{lemma}  \label{lem_directed_potential}
For a strongly connected, directed graph, and any $y_{i} \in \mathbf{R}$, $i = 1,...,N$, we have
\begin{align}
2 \sum_{i=1}^{N} \sum_{j=1}^{N} p_{i} \alpha_{ij} y_{i} ( y_{i} - y_{j} ) = \sum_{i = 1}^{N} \sum_{j=1}^{N} p_{i} \alpha_{ij} ( y_{i} - y_{j} )^{2}.
\end{align}
\end{lemma}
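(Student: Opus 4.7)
The plan is to prove the identity by direct algebraic expansion and then invoke the defining property $p^{T} L = 0$ of the left eigenvector. First, I would expand both sides: on the left,
\[
2 \sum_{i,j} p_{i} \alpha_{ij} y_{i} (y_{i} - y_{j}) = 2 \sum_{i,j} p_{i} \alpha_{ij} y_{i}^{2} - 2 \sum_{i,j} p_{i} \alpha_{ij} y_{i} y_{j},
\]
and on the right,
\[
\sum_{i,j} p_{i} \alpha_{ij} (y_{i} - y_{j})^{2} = \sum_{i,j} p_{i} \alpha_{ij} y_{i}^{2} - 2 \sum_{i,j} p_{i} \alpha_{ij} y_{i} y_{j} + \sum_{i,j} p_{i} \alpha_{ij} y_{j}^{2}.
\]
The cross term $-2\sum_{i,j} p_{i} \alpha_{ij} y_{i} y_{j}$ appears on both sides and therefore cancels upon subtraction. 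The task then reduces to showing
\[
\sum_{i,j} p_{i} \alpha_{ij} y_{i}^{2} \;=\; \sum_{i,j} p_{i} \alpha_{ij} y_{j}^{2}.
\]

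Second, I would rewrite each of these sums by separating the index that carries $y^{2}$. The left-hand side becomes $\sum_{i} p_{i} y_{i}^{2} \bigl( \sum_{j} \alpha_{ij} \bigr) = \sum_{i} p_{i} D_{ii} y_{i}^{2}$, where $D_{ii} = \sum_{j} \alpha_{ij}$ is the out-weighted degree of node $i$. The right-hand side becomes $\sum_{j} y_{j}^{2} \bigl( \sum_{i} p_{i} \alpha_{ij} \bigr)$, with the inner sum still to be identified.

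Third, here is the key step. Writing $L = \mathcal{D} - \mathcal{A}$ and reading off the $j$-th coordinate of $p^{T} L = 0$, one obtains
\[
p_{j} D_{jj} - \sum_{i=1}^{N} p_{i} \alpha_{ij} \;=\; 0, \qquad \text{i.e.,} \qquad \sum_{i=1}^{N} p_{i} \alpha_{ij} = p_{j} D_{jj}.
\]
Substituting this back, the right-hand side becomes $\sum_{j} p_{j} D_{jj} y_{j}^{2}$, which coincides with the left-hand side after renaming the dummy index. Hence the two sums are equal and the identity follows.

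I do not expect a real obstacle here; the proof is essentially bookkeeping plus the left-eigenvector relation. The role of strong connectivity is indirect: it guarantees that $0$ is a simple eigenvalue of $L$, so that a positive left eigenvector $p$ normalized by $\sum_{i} p_{i} = 1$ exists; the identity itself is purely algebraic and uses only $p^{T} L = 0$.
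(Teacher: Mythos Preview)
Your proof is correct: the expansion, cancellation of the cross term, and the use of the $j$-th coordinate of $p^{T}L=0$ to identify $\sum_{i} p_{i}\alpha_{ij}=p_{j}D_{jj}$ are exactly what is needed. The paper itself does not give a self-contained proof of this lemma but only points to the proof of Lemma~7.7 in \cite{Lewis:2014}; your direct algebraic argument is precisely the standard computation behind that reference, so there is nothing to compare.
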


\begin{theorem} \label{the_directed}
Suppose the graph is directed and strongly connected.
Then, the group of $N$ agents (\ref{ch4_sys}) achieves the consensus,
if and only if 
\begin{align}
x (t_{0}) \in \mathcal{X} := \left\{ x (t_{0}) \in \mathbf{R}^{N} : \left| \sum_{i=1}^{N} p_{i} x_{i} (t_{0}) \right| \le \min_{i \in \mathcal{V}} \{ s_{i} \} \right\}.
\end{align}
\end{theorem}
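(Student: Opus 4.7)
The proof closely parallels Theorem \ref{ch4_the_undirected_fixed}, with the unweighted average $\tfrac{1}{N}\sum_i x_i$ replaced everywhere by the $p$-weighted average $x^{*} := \sum_{i=1}^{N} p_i x_i$. This quantity is invariant along (\ref{ch4_sys}) because $p^{T} L = 0$, and $p_i > 0$ for every $i$ follows from strong connectivity. I write $i' := \argmin_{i \in \mathcal{V}} \{ s_i \}$. The key observation is that the only place the directed structure alters the bookkeeping is in this replacement of uniform averaging by positive-weight averaging against $p$.

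For necessity, I argue by contradiction. Assume consensus is achieved with decision value $C = \sum_i p_i x_i(t_0) > s_{i'}$. Picking $T > t_0$ (at which Assumption \ref{ch4_assumption} still prevents consensus) and $\epsilon > 0$ with $C - \epsilon \ge s_{i'}$ so that $|x_i(t) - C| < \epsilon$ for all $t \ge T$, one has $y_{i'}(t) = s_{i'}$ and $y_j(t) \ge s_{i'}$ for every $t \ge T$, $j \in \mathcal{V}$. In the homogeneous case all $y_i = s$, hence $\dot{x}_i \equiv 0$ for $t \ge T$ and the agents cannot evolve into consensus. In the heterogeneous case $\dot{x}_{i'} = \sum_j \alpha_{i'j}(y_j - s_{i'}) \ge 0$ vanishes only when $y_j = s_{i'}$ for every in-neighbor $j$ of $i'$; iterating this along paths in the strongly connected graph forces $x_k = s_{i'}$ for every $k$, so $C = s_{i'}$, a contradiction. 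The case $C < -s_{i'}$ is symmetric.

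For sufficiency, under $|x^{*}| \le s_{i'}$ I propose the $p$-weighted integral Lyapunov candidate
\[
V \;=\; 2 \sum_{i=1}^{N} p_i \int_{x^{*}}^{x_i} \bigl( \mbox{sat}_i(\omega) - x^{*} \bigr)\, d\omega ,
\]
which, by Lemma \ref{lem_integral_positive} together with $|x^{*}| \le s_{i'} \le s_i$ and $p_i > 0$, satisfies $V \ge 0$ with equality iff $x_i = x^{*}$ for all $i$. Invariance of $x^{*}$ kills the $x^{*}\sum_i p_i \dot{x}_i$ term in $\dot{V}$, and Lemma \ref{lem_directed_potential} collapses the remainder into a sum of squares:
\[
\dot{V} \;=\; 2 \sum_{i=1}^{N}\sum_{j=1}^{N} p_i \alpha_{ij}\, y_i ( y_j - y_i ) \;=\; - \sum_{i=1}^{N}\sum_{j=1}^{N} p_i \alpha_{ij} (y_i - y_j)^2 \;\le\; 0.
\]
LaSalle then drives trajectories into the largest invariant set in $\{\dot{V}=0\}$, which by strong connectivity and positivity of $p$ coincides with $\{\mathbf{sat}(x) \in \spand\{\mathbf{1}\}\}$. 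The candidate equilibria therefore split into the achievable set $\Omega_a$ and the unachievable set $\Omega_u$ defined exactly as in Theorem \ref{ch4_the_undirected_fixed}, and invariance of $x^{*}$ together with $|x^{*}| \le s_{i'}$ rules out $\Omega_u$, leaving only consensus in $\Omega_a$.

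The main obstacle is verifying that the two directed-graph identities $p^{T} L = 0$ and Lemma \ref{lem_directed_potential} correctly substitute for the undirected identities $\mathbf{1}^{T} L = 0$ and Lemma \ref{lem_double_potential} that drove the proof of Theorem \ref{ch4_the_undirected_fixed}; once this substitution is justified, the positivity $p_i > 0$ makes the classification of $\Omega_u$ routine, since any $x \in \Omega_u$ automatically has $|\sum_i p_i x_i| > s_{i'}$ by the same sign reasoning used there with uniform weights.
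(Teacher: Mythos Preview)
Your proposal is correct and follows essentially the same approach as the paper: the same $p$-weighted integral Lyapunov function, the same use of Lemma~\ref{lem_directed_potential} to obtain $\dot V = -\sum_{i,j} p_i\alpha_{ij}(y_i-y_j)^2$, and the same LaSalle argument combined with invariance of the weighted average to rule out $\Omega_u$. Your necessity argument is slightly more explicit than the paper's (which simply defers to the undirected case), but the underlying idea is identical.
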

\begin{proof}
Since the weighted average of all agents' states is invariant, the necessity can be proved similar to the case of the undirected graph.
Therefore, we will prove the sufficiency only.

Let $x^{*} = \sum_{i=1}^{N} p_{i} x_{i} (t_{0})$, and assume that $|x^{*} | \le \min_{i \in \mathcal{V}} s_{i}$.
Consider the following Lypuanov function candidate:
\begin{align} \label{direct_Lyapunov}
V = 2 \sum_{i=1}^{N} p_{i} \int_{x^{*}}^{x_{i}} ( \mbox{sat}_{i} ( \omega ) - x^{*} ) d  \omega.
\end{align}
Since $p_{i} > 0$, $i = 1,...,N$, from \textit{Lemma \ref{lem_integral_positive}}, we know that $V \ge 0$.
Note that the weighted average is invariant, i.e., $\frac{1}{N} \sum_{i=1}^{N} p_{i} \dot{x}_{i} = \dot{x}^{*} = 0$.
Therefore, the time derivative of $V$ given by
\begin{align}
\dot{V} =& 2 \sum_{i=1}^{N} p_{i} \mbox{sat}_{i} ( x_{i} ) \dot{x}_{i} - 2 x^{*} \sum_{i=1}^{N} p_{i} ( \dot{x}_{i} - \dot{x}^{*}) \nonumber\\
=& 2 \sum_{i=1}^{N}  \sum_{j=1}^{N} \alpha_{ij} p_{i}  y_{i} ( y_{j} - y_{i} ).
\end{align}
Then, from \textit{Lemma \ref{lem_directed_potential}}, it follows that
\begin{align}
\dot{V} = - \sum_{i=1}^{N} \sum_{j=1}^{N} p_{i} \alpha_{ij} ( y_{i} - y_{j} )^{2},
\end{align}
which implies $\dot{V} \le 0$.
Let $\mathcal{M} := \{ x \in \mathbf{R}^{N} : \dot{V} = 0 \}$.
Then, since the graph is strongly connected, $\dot{V} \equiv 0$ implies that $(y_{i} - y_{j}) \equiv ( \mbox{sat}_{i} ( x_{i} ) - \mbox{sat}_{j} ( x_{j} ) ) \equiv 0$, $\forall i,j \in \mathcal{V}$.
Then, similar to the proof of \textit{Theorem \ref{ch4_the_undirected_fixed}}, we can prove that $\dot{V} \equiv 0$ only when $(x_{i} - x_{j}) \equiv 0$, $\forall i,j \in \mathcal{V}$.
Therefore, applying LaSalle Invariance Principle gives $\lim_{t \rightarrow \infty} ( x_{i} (t) - x_{j} (t) ) = 0$, $\forall i,j \in \mathcal{V}$, which completes the proof.
\end{proof}

\section{Simulation results} \label{simulation}

\subsection{Fixed Graph}

We consider a group of $50$ agents whose topology is fixed, undirected and connected, whose second smallest and largest eigenvalues are given by $\lambda_{2} =     0.5327$ and $\lambda_{50} =12.3631$, respectively.

We first consider the homogeneous agents with $s = 1$.
The initial conditions are uniformly distributed on the interval $[-10 , 10]$.
Fig.~\ref{Fig:fixed_homo} shows the simulation results with the average values are (a) $-0.9821$ and (b) $1.3060$.
Then, the case (a) satisfies the condition in \textit{Theorem \ref{ch4_the_undirected_fixed}}, and thus the agents achieve the consensus.
However, the case (b) does not satisfy the condition in \textit{Theorem \ref{ch4_the_undirected_fixed}}, and consequently, the consensus is not reached.

We next consider the heterogeneous agents.
We choose the saturation levels on the interval $s_{i} \in [ 1, 7]$, $\forall i \in \mathcal{V}$ and $\min_{i \in \mathcal{V}} \{s_{i}\} = 1$.
With the same initial conditions as used in the homogeneous case, 
the simulation result is given in Fig.~\ref{Fig:fixed_hetero}.
From \textit{Theorem \ref{ch4_the_undirected_fixed}}, it is clear that the case (a) achieves the consensus, but the case (b) is not.
Moreover, in the case (b), there are $3$ agents whose saturation levels are $1$.
As we discussed in Section~\ref{subsec_unach}, the agents except for $3$ agents, whose saturation levels are $1$, converge to $1$.

\begin{figure}[!tb]
\begin{center}
\subfloat[$\frac{1}{N} \sum_{i=1}^{N} x_{i} (0) = -0.9821$]{\includegraphics[width=0.75\textwidth]{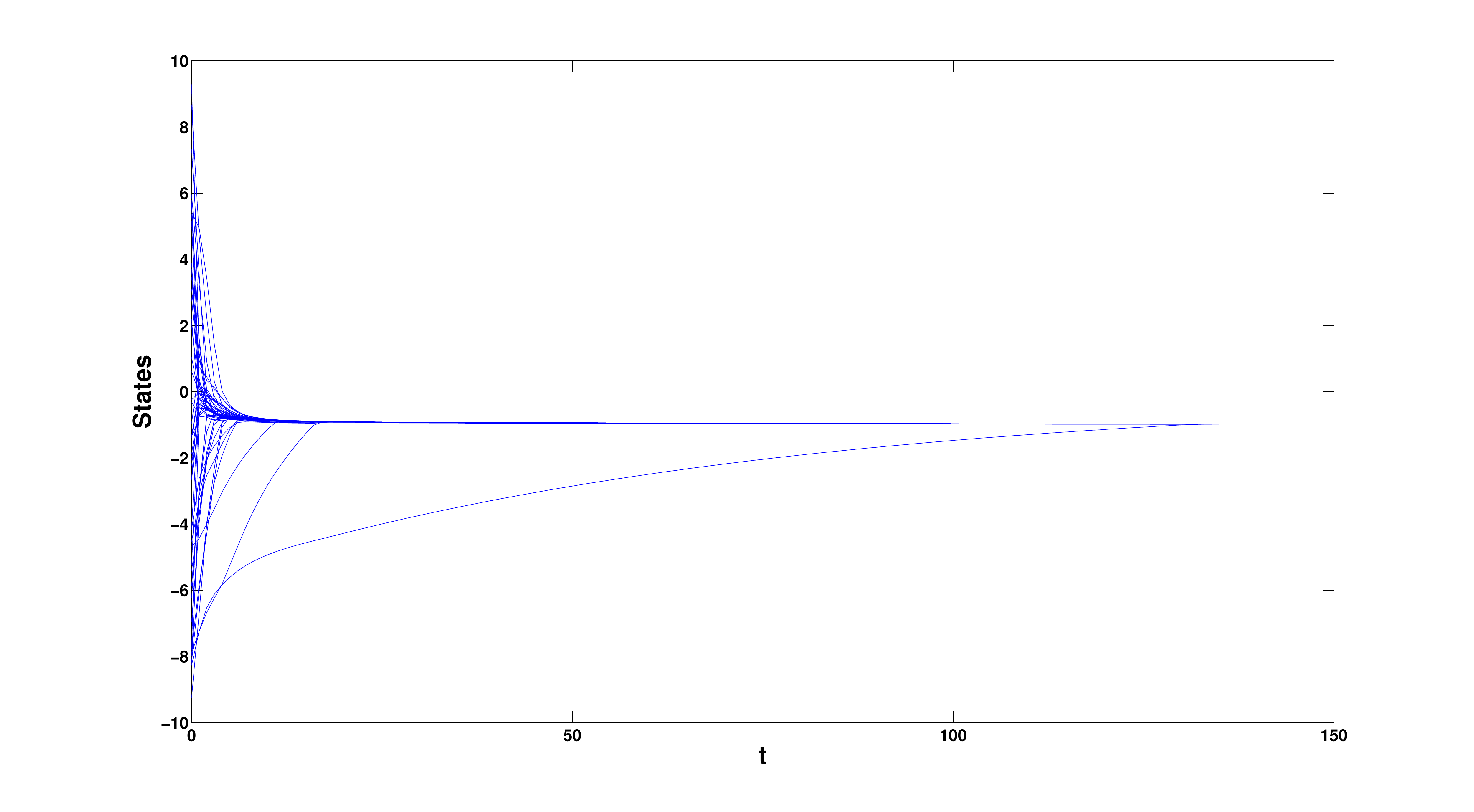}}\\
\subfloat[$\frac{1}{N} \sum_{i=1}^{N} x_{i} (0) = 1.3060$]{\includegraphics[width=0.75\textwidth]{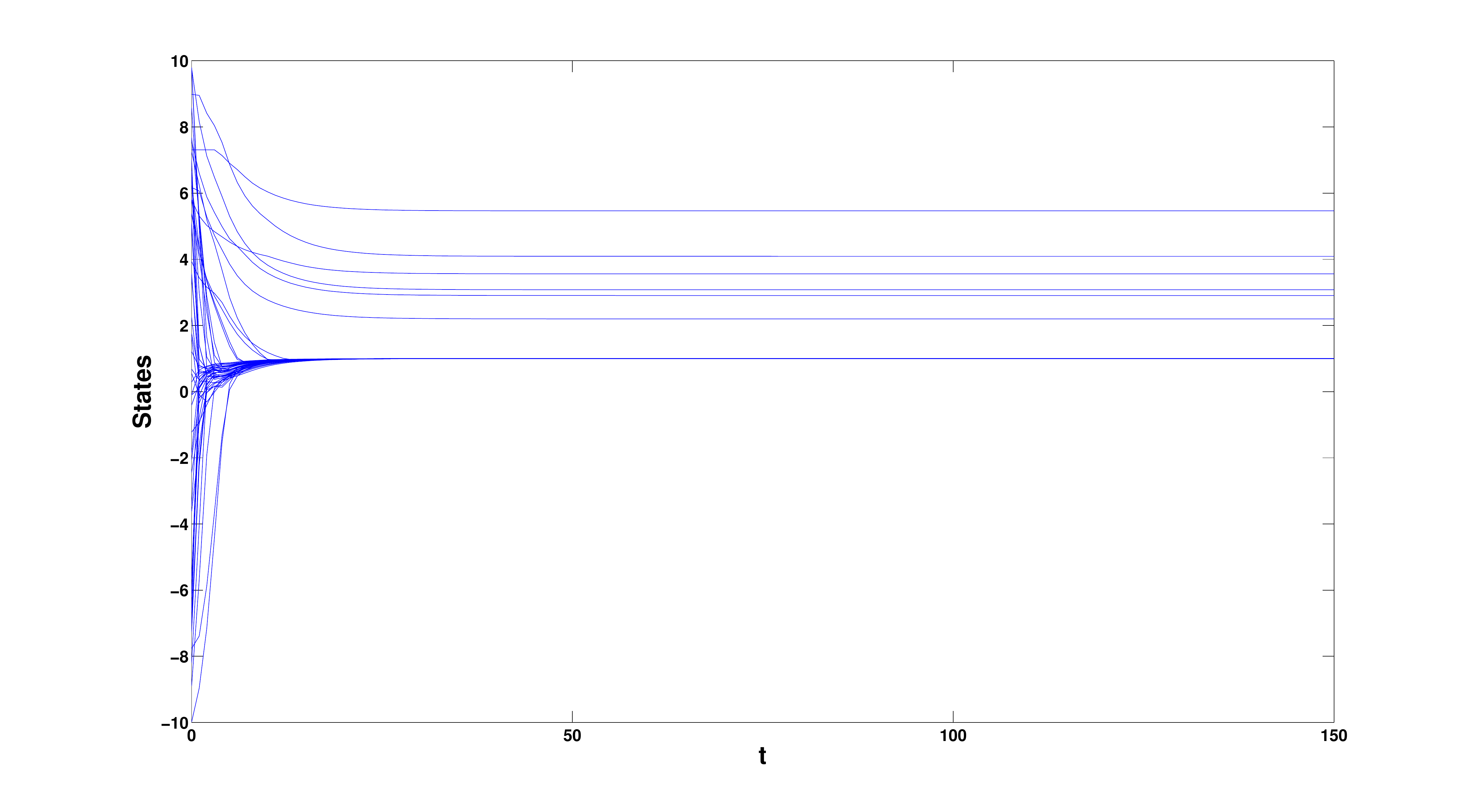}}
\caption{\label{Fig:fixed_homo} Homogeneous agents with fixed graph.}
\end{center}
\end{figure}

\begin{figure}[!tb]
\begin{center}
\subfloat[$\frac{1}{N} \sum_{i=1}^{N} x_{i} (0) = -0.9821$]{\includegraphics[width=0.75\textwidth]{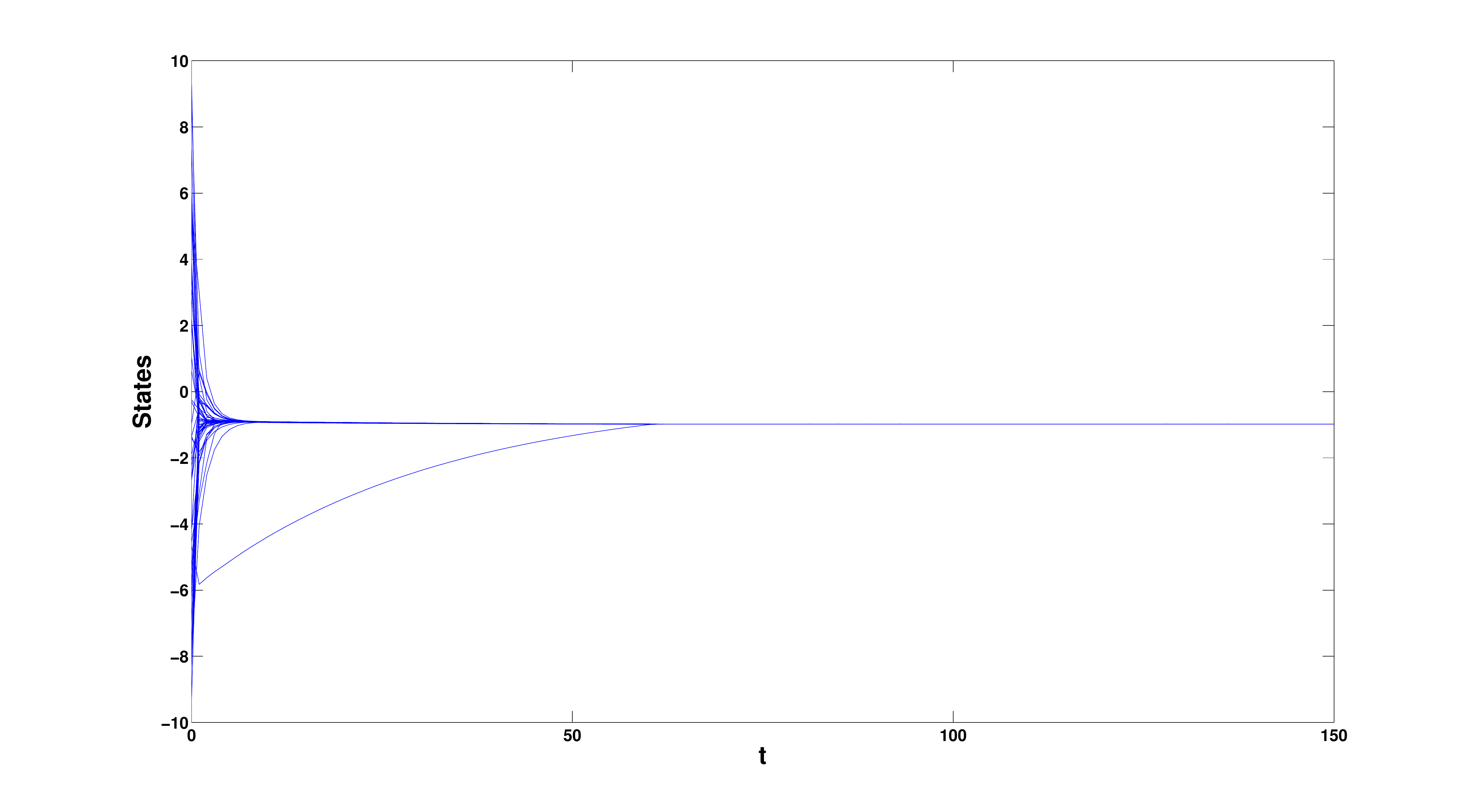}}\\
\subfloat[$\frac{1}{N} \sum_{i=1}^{N} x_{i} (0) = 1.3060$]{\includegraphics[width=0.75\textwidth]{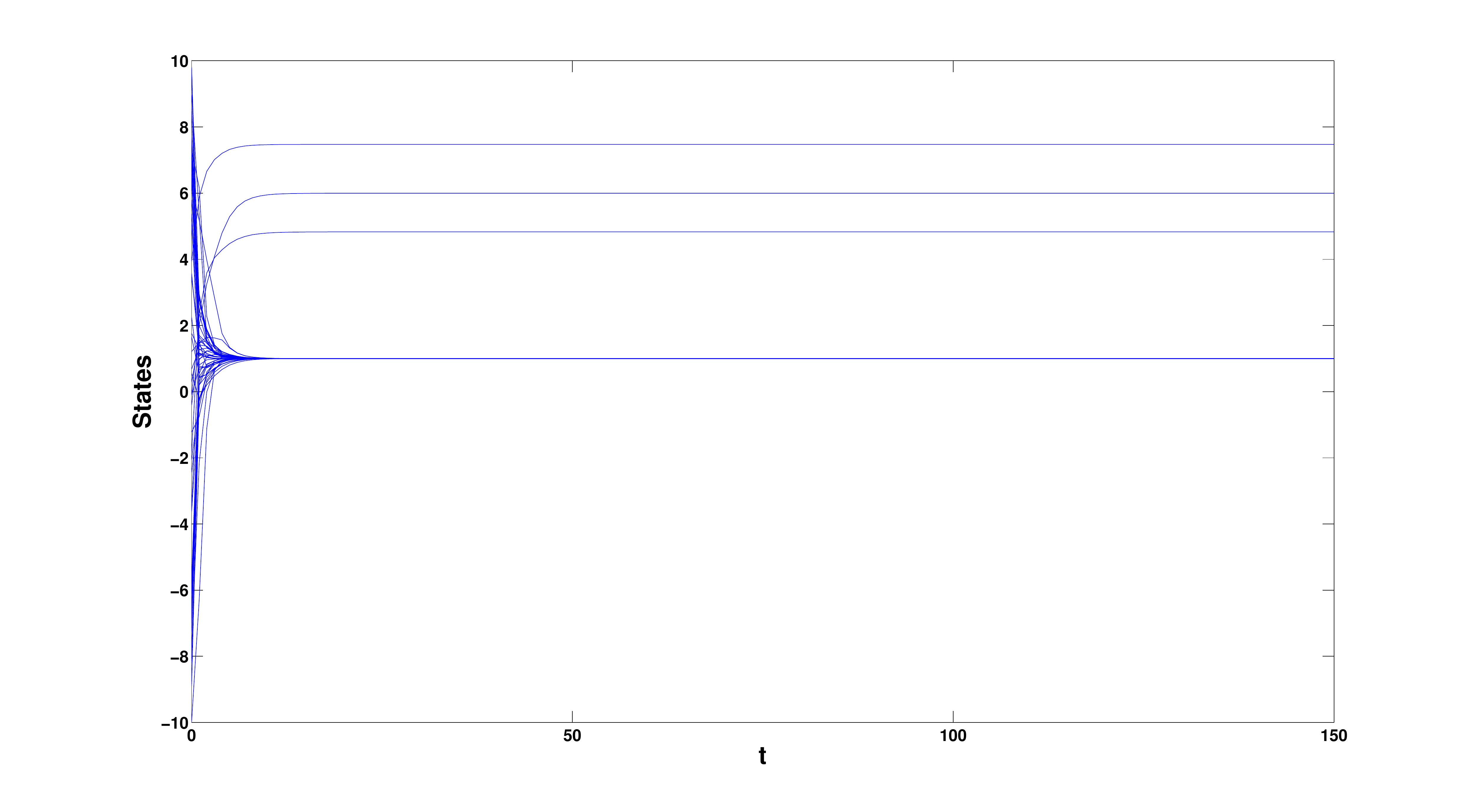}}
\caption{\label{Fig:fixed_hetero} Heterogeneous agents with fixed graph.}
\end{center}
\end{figure}

\subsection{Time-Varying Graph} \label{ch4_sim_time-varying}

We consider a group of $4$ agents whose graph topology is time-varying.
We assume that the network is changed between three graphs in Fig.~\ref{Fig_ch4_sim_time-varying_graphs} over a sequence $0 = t_{0} < t_{1} < \cdots < t_{k} < \cdots$
with $t_{k+1} - t_{k} = 10 (s)$, and $\delta_{1} = 3$, $\delta_{2} = 6$.
Note that this network is disconnected all the time.

Then, we first consider the homogeneous agents with $s = 1$.
Fig.~\ref{Fig:varying_homo} shows the simulation results with the average values as (a) $-0.75$ and (b) $1.25$.
Since the graph is integrally connected over $[0, \infty)$, it is clear that from \textit{Theorem \ref{ch4_the_time-varying_homo}}, the case (a) achieves the consensus, but the case (b) is not.

We next consider the heterogeneous agents with $s_{i} = i$ $\forall i \in \mathcal{V}:= \{1,2,3,4\}$.
With the same initial conditions as used in the homogeneous case, the simulation result is given in Fig.~\ref{Fig:varying_hetero}.
From \textit{Theorem \ref{ch4_the_time-varying_hetero}}, it is clear that the case (a) achieves the consensus, but the case (b) is not.
Moreover, from Section~\ref{subsec_unach}, the agents except for $1$ agent converge to $1$.

\begin{figure}[!tb]
\centering
\subfloat[$t \in [ t_{k}, t_{k} + \delta_{1})$]{%
\begin{tikzpicture}[-,>=stealth',shorten >=1pt,auto,node distance=1.5cm, thick,main node/.style={circle,fill=white!20,draw,font=\sffamily\small}]
  \node[main node] (1) {1};
  \node[main node] (2) [below of=1] {2};
  \node[main node] (3) [right of=2] {3};
  \node[main node] (4) [right of=1] {4};

  \path[every node/.style={font=\sffamily\small}]
    (1) edge node[right] {$3 + \sin (t)$} (2)
    ;
\end{tikzpicture}
}
\quad
\subfloat[$t \in [ t_{k} + \delta_{1}, t_{k} + \delta_{2})$]{%
\begin{tikzpicture}[-,>=stealth',shorten >=1pt,auto,node distance=1.5cm, thick,main node/.style={circle,fill=white!20,draw,font=\sffamily\small}]
  \node[main node] (1) {1};
  \node[main node] (2) [below of=1] {2};
  \node[main node] (3) [right of=2] {3};
  \node[main node] (4) [right of=1] {4};

  \path[every node/.style={font=\sffamily\small}]
    (1) edge node[right] {$2 - \cos (t)$} (3)
    ;
\end{tikzpicture}
}
\quad
\subfloat[$t \in [ t_{k} + \delta_{2}, t_{k + 1})$]{%
\begin{tikzpicture}[-,>=stealth',shorten >=1pt,auto,node distance=1.5cm, thick,main node/.style={circle,fill=white!20,draw,font=\sffamily\small}]
  \node[main node] (1) {1};
  \node[main node] (2) [below of=1] {2};
  \node[main node] (3) [right of=2] {3};
  \node[main node] (4) [right of=1] {4};

  \path[every node/.style={font=\sffamily\small}]
    (2) edge node[right] {$1.5 - \sin (t)$} (4)
    ;
\end{tikzpicture}
}
\caption{Three graphs in \textit{Section~\ref{ch4_sim_time-varying}}} \label{Fig_ch4_sim_time-varying_graphs}
\end{figure}
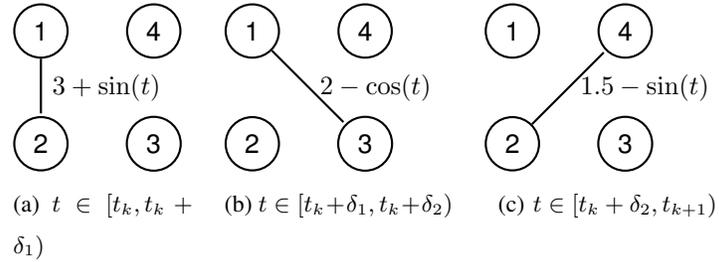

\begin{figure}[!tb]
\begin{center}
\subfloat[$\frac{1}{N} \sum_{i=1}^{N} x_{i} (0) = -0.75$]{\includegraphics[width=0.75\textwidth]{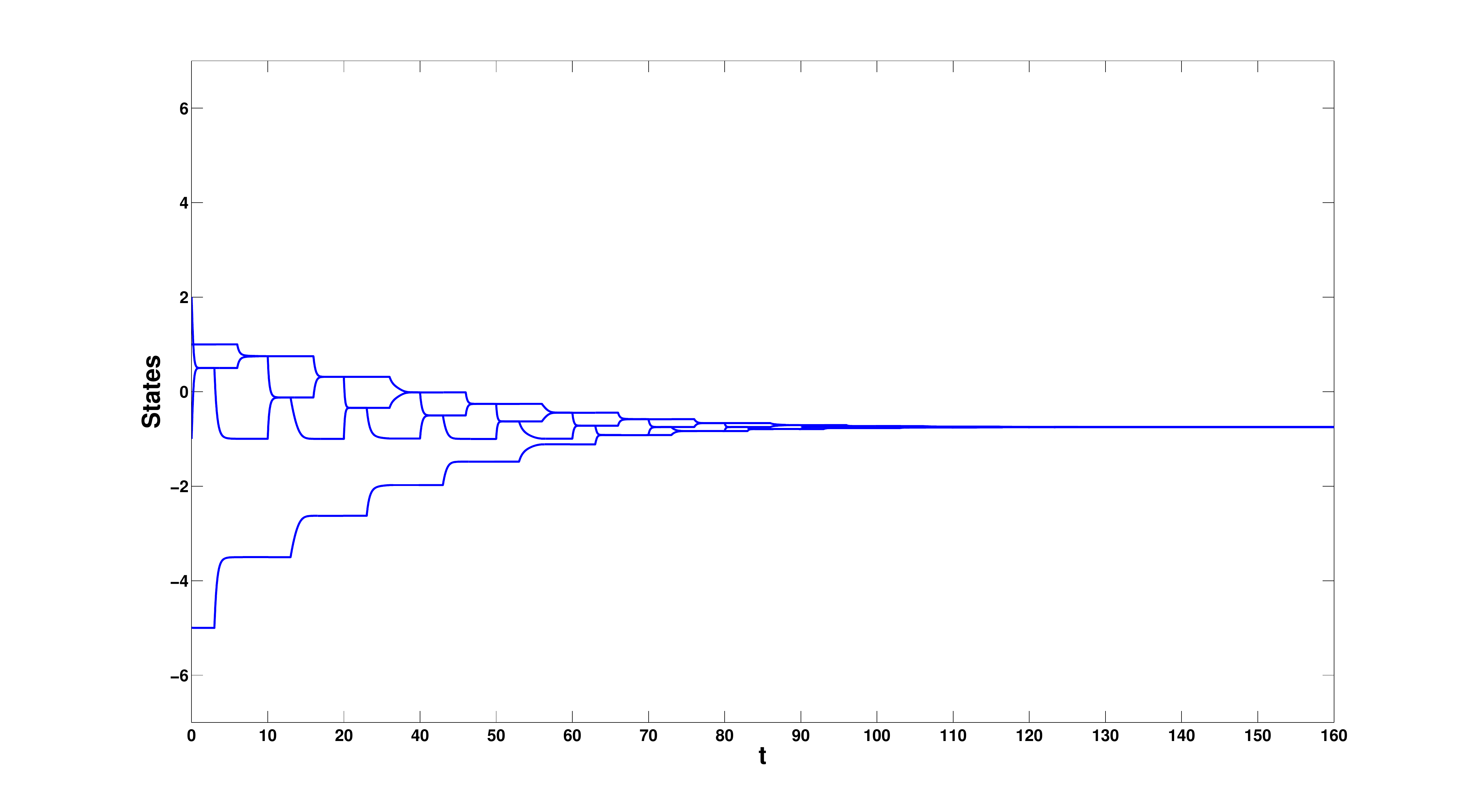}}\\
\subfloat[$\frac{1}{N} \sum_{i=1}^{N} x_{i} (0) = 1.25$]{\includegraphics[width=0.75\textwidth]{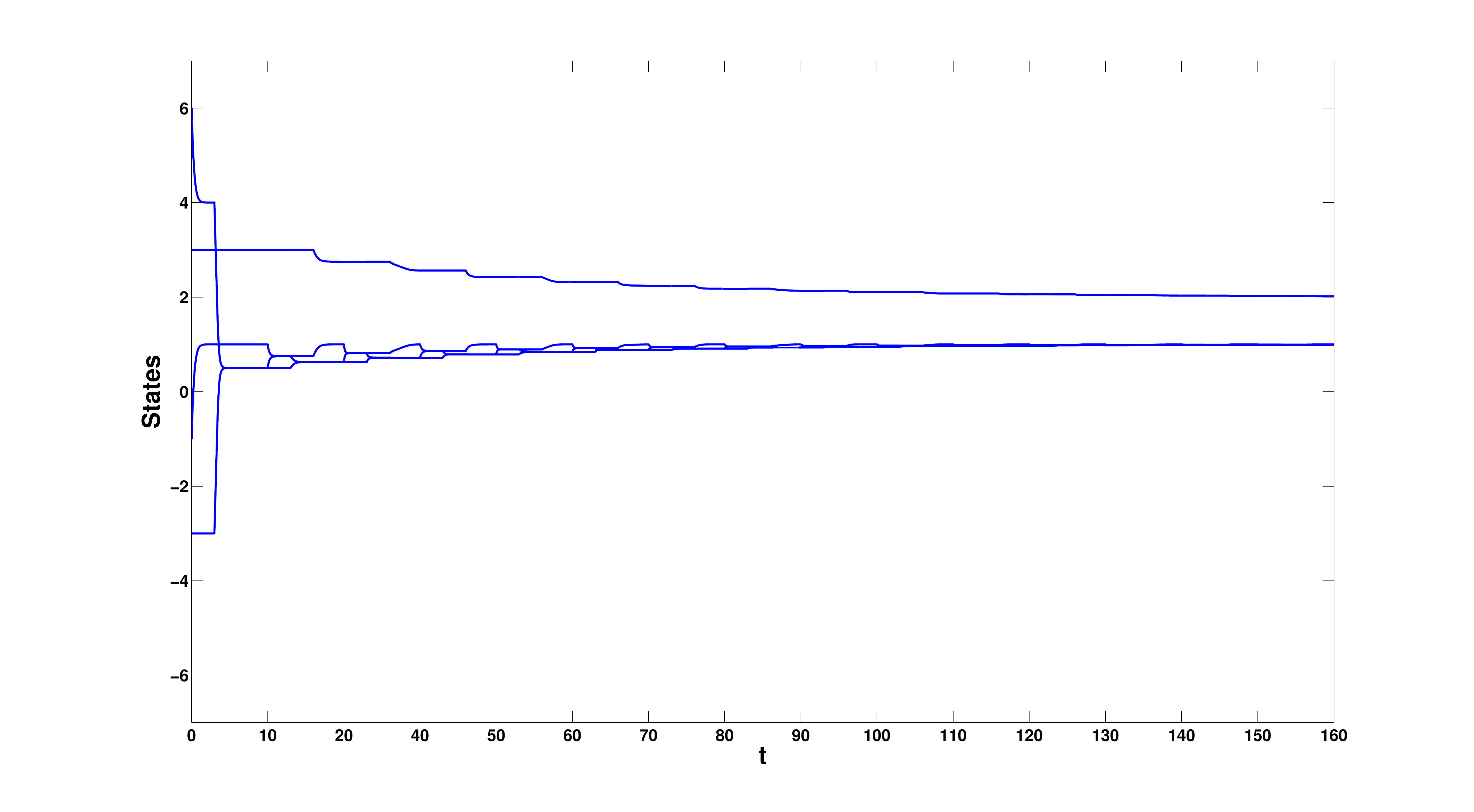}}
\caption{\label{Fig:varying_homo} Homogeneous agents with time-varying graph.}
\end{center}
\end{figure}

\begin{figure}[!tb]
\begin{center}
\subfloat[$\frac{1}{N} \sum_{i=1}^{N} x_{i} (0) = -0.75$]{\includegraphics[width=0.75\textwidth]{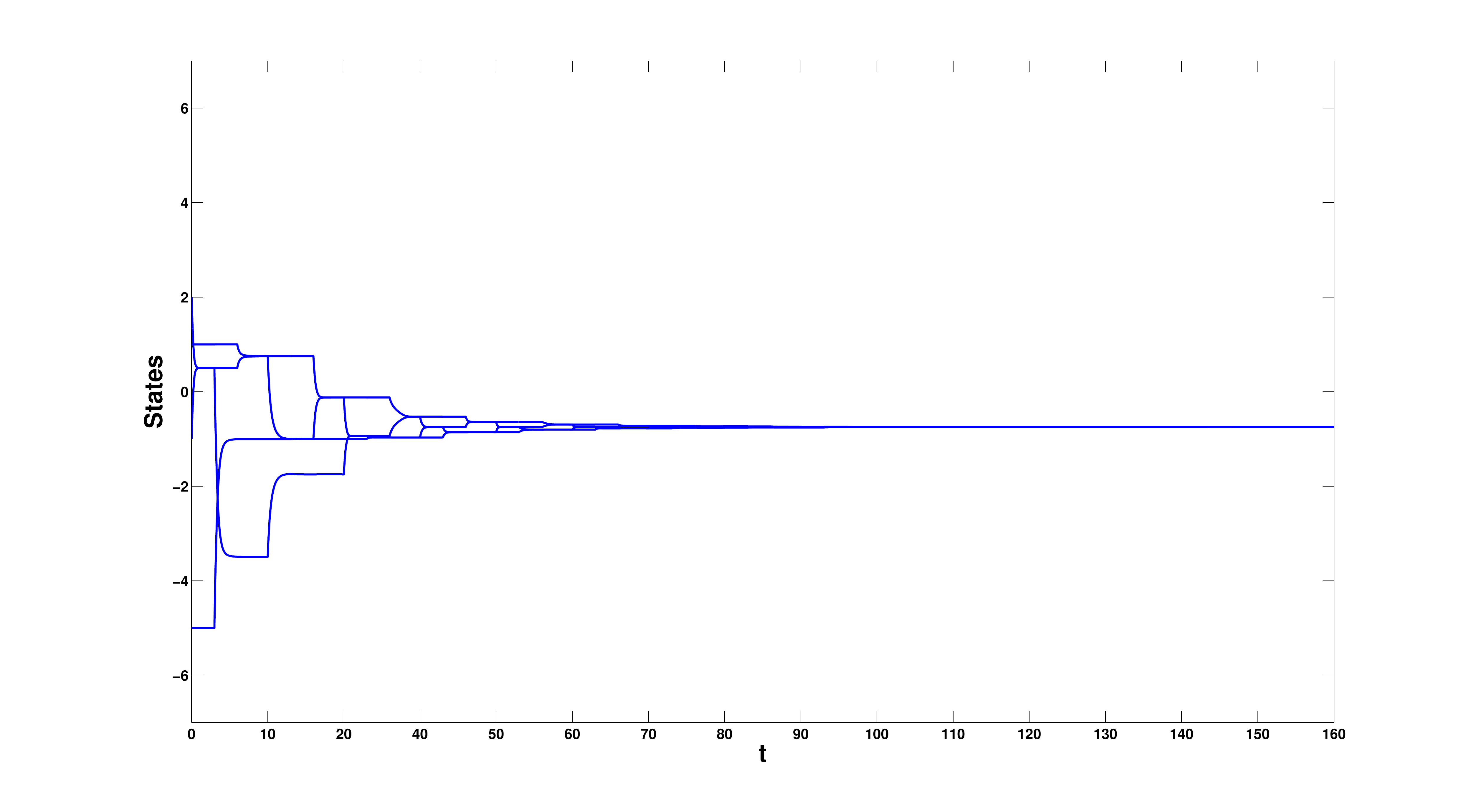}}\\
\subfloat[$\frac{1}{N} \sum_{i=1}^{N} x_{i} (0) = 1.25$]{\includegraphics[width=0.75\textwidth]{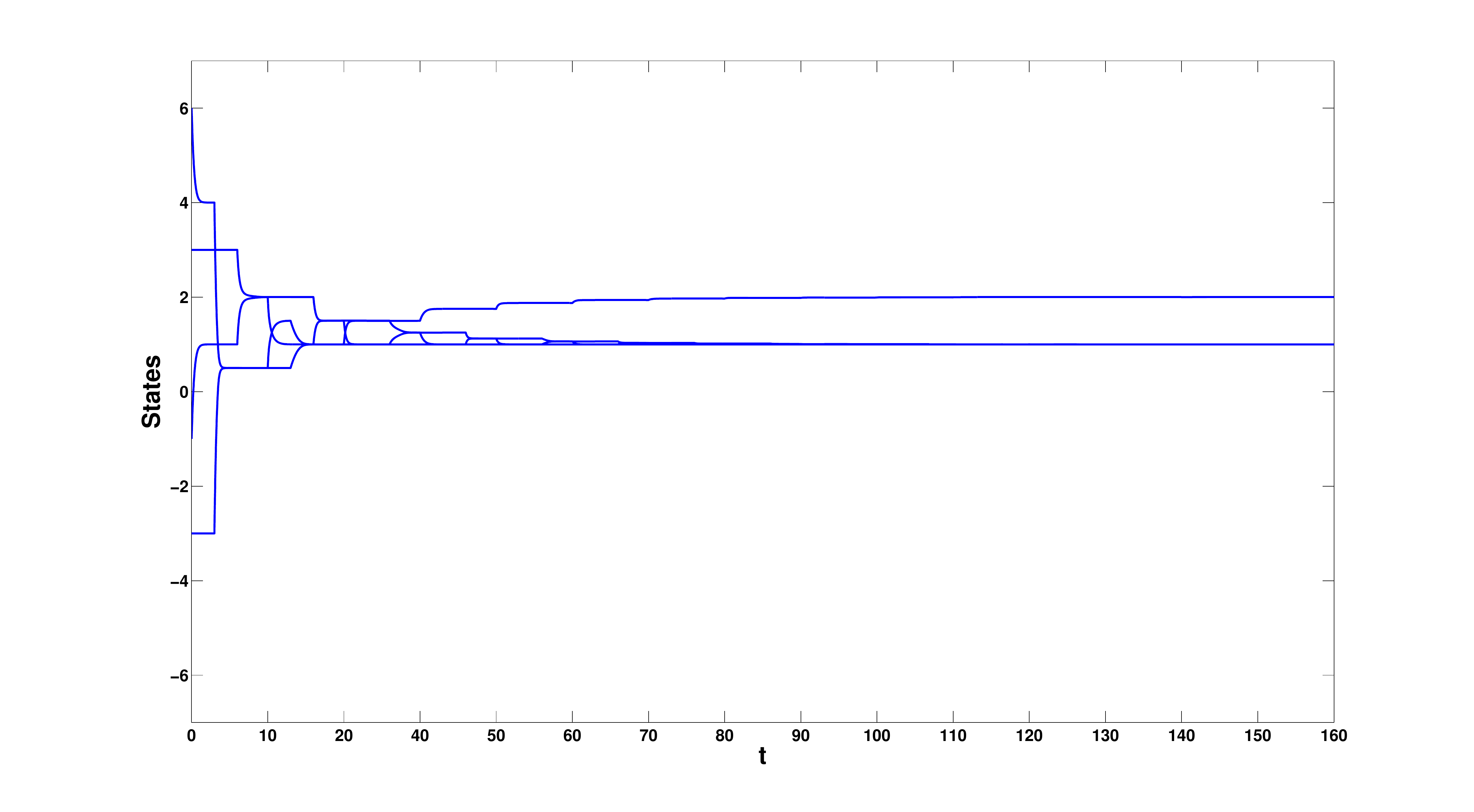}}
\caption{\label{Fig:varying_hetero} Heterogeneous agents with time-varying graph.}
\end{center}
\end{figure}


\subsection{Double-Integrator}

We consider a group of $10$ double-integrator modeled agents whose topology is fixed, undirected and connected, and the homogeneous saturation level with $s = 1$.
The initial conditions are uniformly distributed on the interval $[-10,10]$.
Fig.~\ref{Fig:double_1} and Fig.~\ref{Fig:double_2} show the simulation results with the average of all velocities are (a) $-0.85$ and (b) $1.85$, respectively.
As we can see from the simulation results, the agents achieve the consensus for the case (a), but not for the case (b).

\begin{figure}[!tb]
\begin{center}
\includegraphics[width=0.75\textwidth]{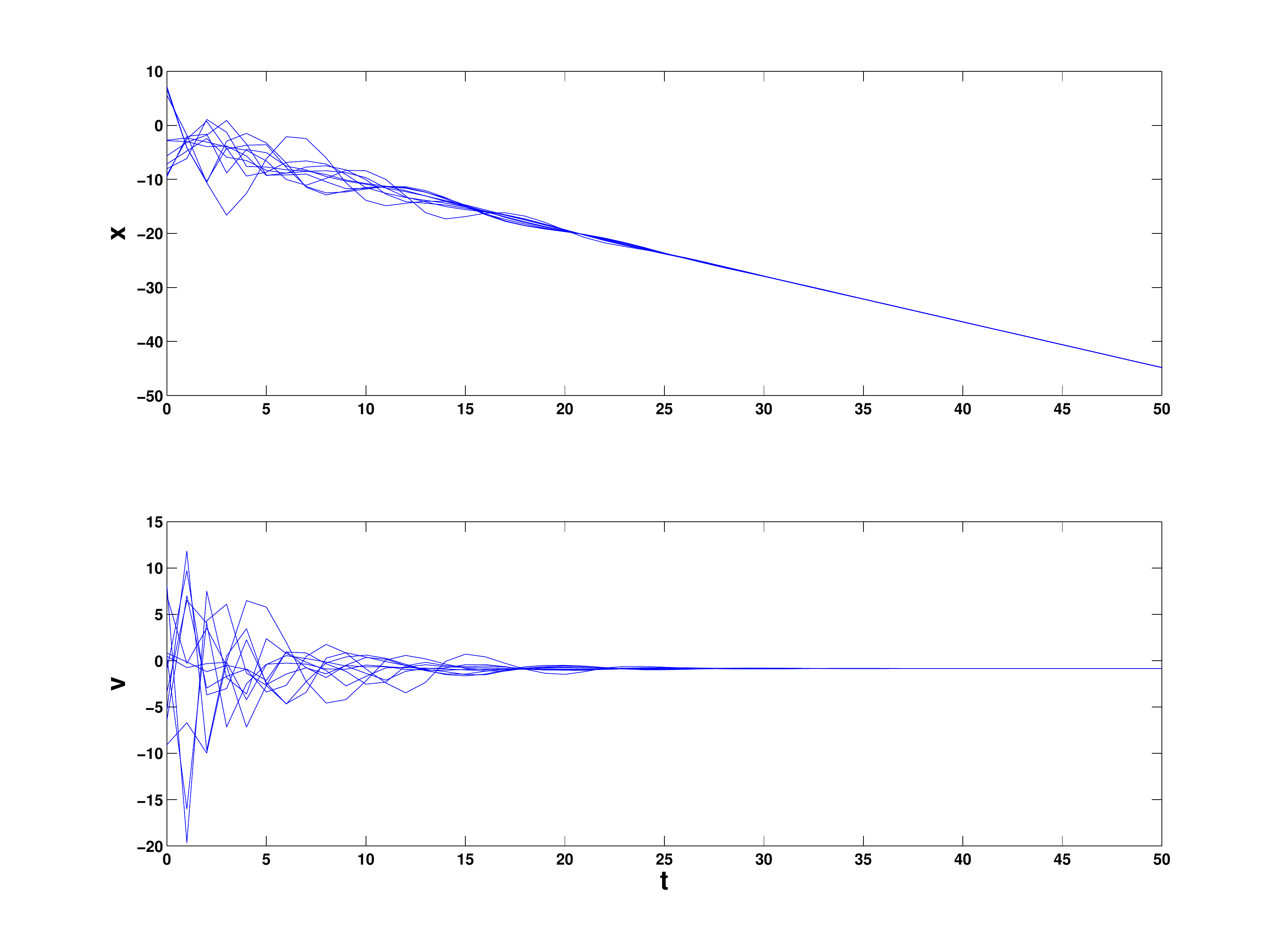}
\caption{\label{Fig:double_1}Double-integrators with $\frac{1}{N} \sum_{i=1}^{N} v_{i} (0) = - 0.85$.}
\end{center}
\end{figure}

\begin{figure}[!tb]
\begin{center}
\includegraphics[width=0.75\textwidth]{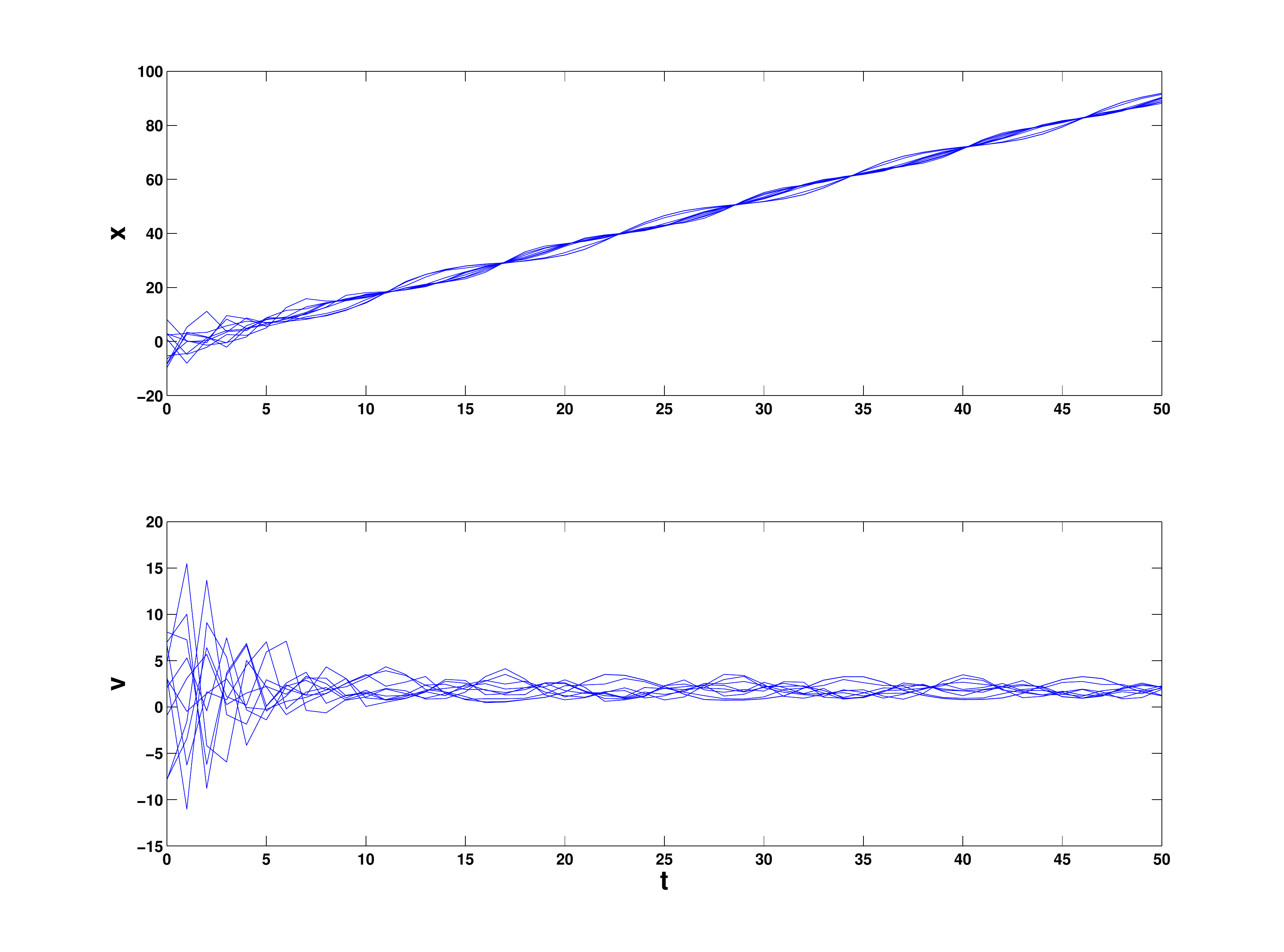}
\caption{\label{Fig:double_2}Double-integrators with $\frac{1}{N} \sum_{i=1}^{N} v_{i} (0) = 1.85$.}
\end{center}
\end{figure}

\subsection{Directed graph} \label{subsec_sim_directed}

We consider a group of $6$ agents whose graph topology is fixed, directed and strongly connected as depicted in Fig.~\ref{Fig_subsec_sim_directed_topology}, and the homogeneous saturation level with $s = 1$.
From the Laplacian matrix as in Fig.~\ref{Fig_subsec_sim_directed_topology} (b), its left eigenvector is given by
$p = [0.0678,0.0339,0.2373,0.1186,0.2712,0.2712]^{T}$.
Fig.~\ref{Fig:fixed_directed} shows simulation results with the weighted averages, $\sum_{i=1}^{N} p_{i} x_{i} (0)$, as (a) $0.3455$ and (b) $-1.8450$, respectively.
From \textit{Theorem \ref{the_directed}}, it is clear that the case (a) achieves the consensus, but the case (b) is not.



\begin{figure}[!tb]
\centering
\subfloat[Graph topology]{
\begin{tikzpicture}[->,>=stealth',shorten >=1pt,auto,node distance=1.5cm, thick,main node/.style={circle,fill=white!20,draw,font=\sffamily\scriptsize}]
  \node[main node] (1) {1};
  \node[main node] (2) [right of=1] {2};
  \node[main node] (3) [below right of=2] {3};
  \node[main node] (4) [below left of=3] {4};
  \node[main node] (5) [left of=4] {5};
  \node[main node] (6) [above left of=5] {6};
    
  \path[every node/.style={font=\sffamily\small}]
    (1) edge node[above] {} (2)
    (1) edge node[above right] {} (3)  
    (2) edge node[below right] {} (5)
    (5) edge node[above right] {} (1)
    (3) edge node[below right] {} (4)
    (4) edge node[below] {} (5) 
    (5) edge node[below left] {} (6)
    (6) edge node[above left] {} (3)                   ;
        
\end{tikzpicture}
}
\quad
\subfloat[Laplacian]{
\begin{tabular}{ll}
$L = \left[ 
\begin{array}{*{20}{c}} 
4 & -1 & -3 & 0 & 0 & 0 \\
0 & 2 & 0 & 0 & -2 & 0 \\
0 & 0 & 2 & -2 & 0 & 0 \\
0 & 0 & 0 & 4 & -4 & 0 \\
-1 & 0 & 0 & 0 & 2 & -1 \\
0 & 0 & -1 & 0 & 0 & 1
\end{array} \right]$
\end{tabular}
}
\caption{Graph topology and its Laplacian matrix in \textit{Section~\ref{subsec_sim_directed}}} \label{Fig_subsec_sim_directed_topology}
\end{figure}

\begin{figure}[!tb]
\begin{center}
\subfloat[$\sum_{i=1}^{N} p_{i} x_{i} (0) = 0.3455$]{\includegraphics[width=0.75\textwidth]{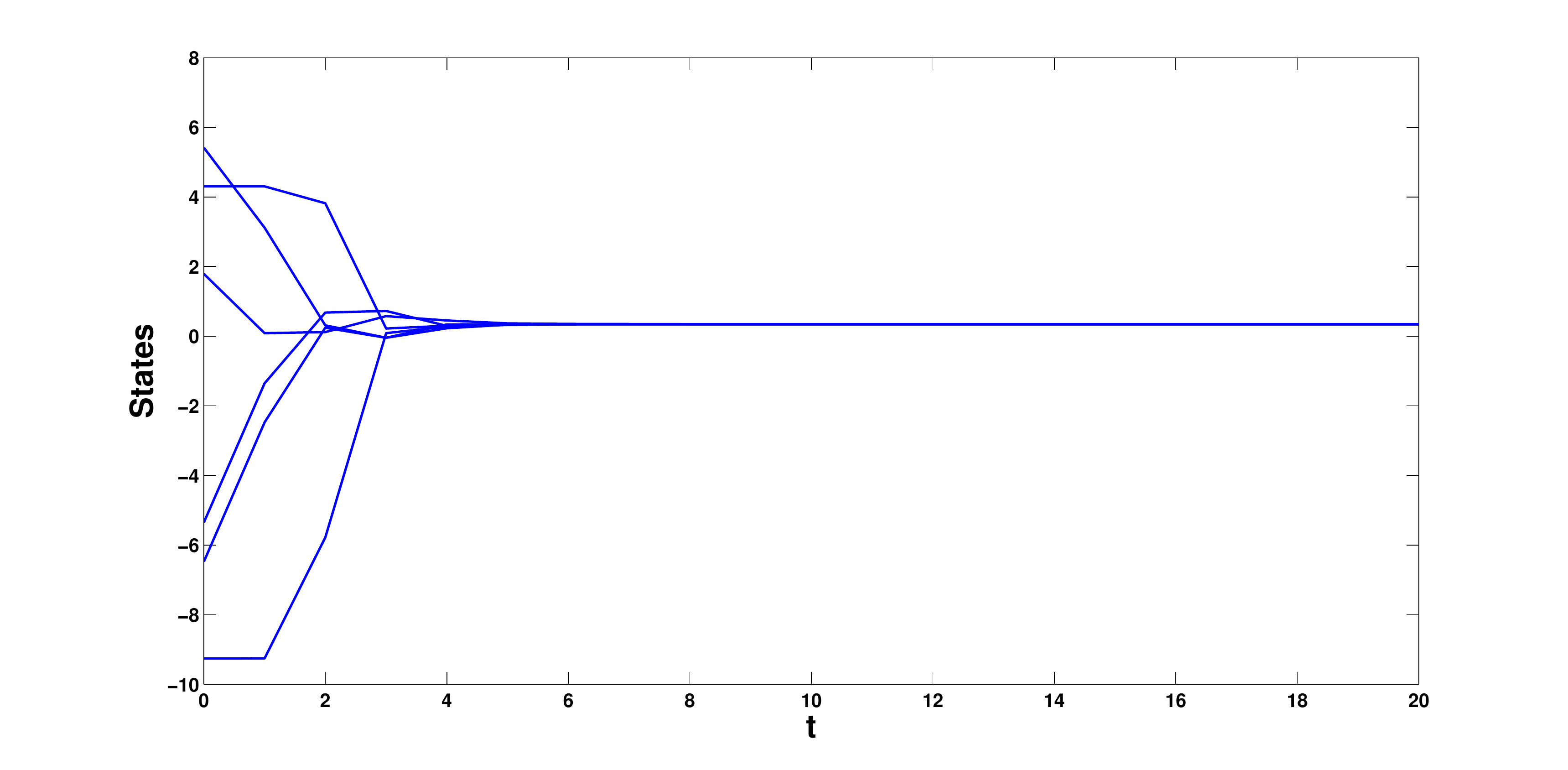}}\\
\subfloat[$\sum_{i=1}^{N} p_{i} x_{i} (0) = -1.8450$]{\includegraphics[width=0.75\textwidth]{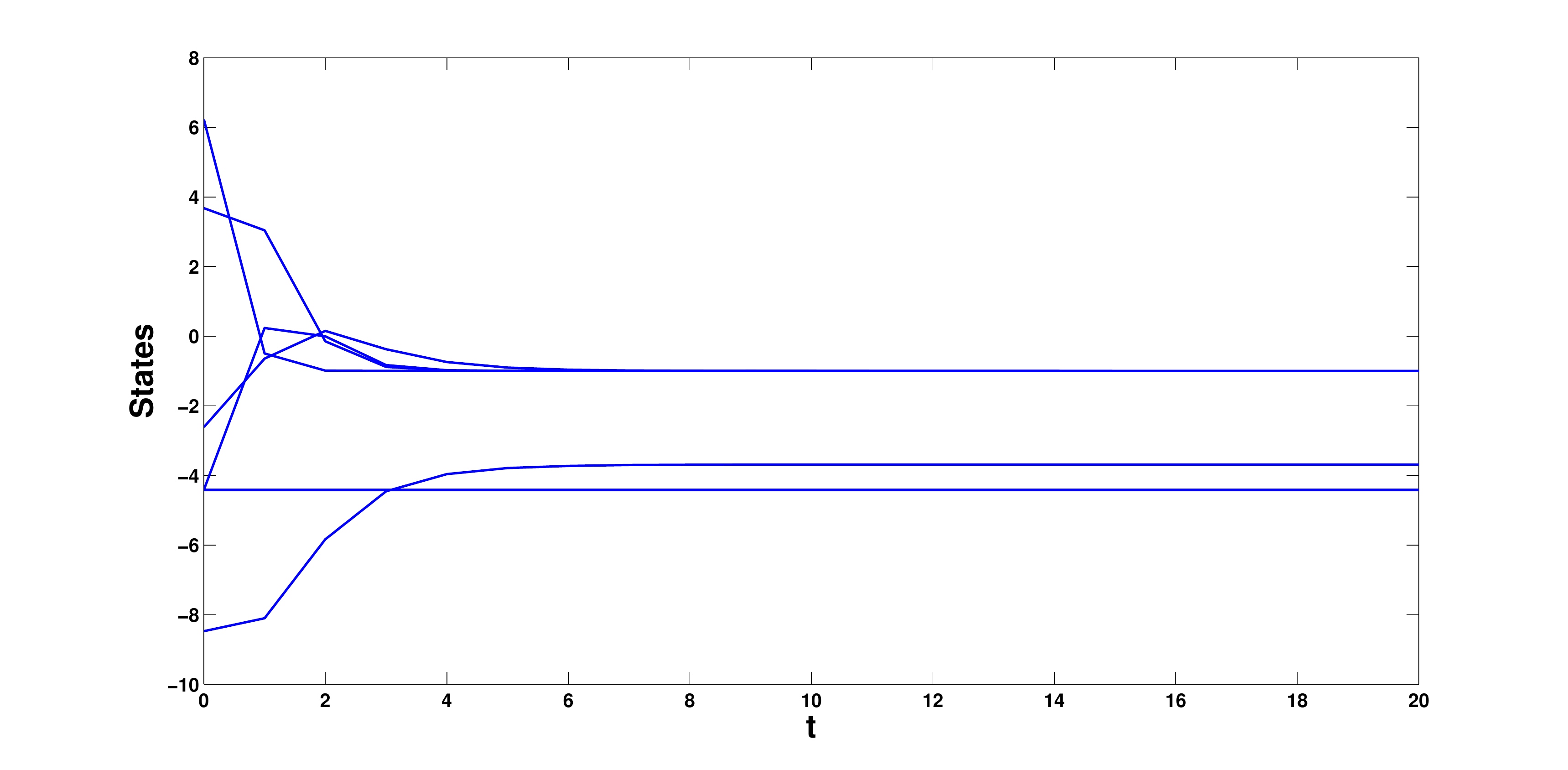}}
\caption{\label{Fig:fixed_directed} Heterogeneous agents with directed graph.}
\end{center}
\end{figure}


\section{Conclusions and Future Work} \label{conclusion}
In this paper, we have studied the consensus problem with output saturations.
Due to the existence of unachievable equilibrium for the consensus, the agents can not achieve the global consensus in the presence of output saturations.
Therefore, we have investigated the conditions for achieving the consensus, that is the exact domain of attraction.
We have discussed both homogeneous and heterogeneous saturation levels, and fixed and time-varying graphs.
To find the consensus conditions, we have analyzed the attractivity of equilibrium.
Then, by investigating the equilibrium, the necessary and sufficient conditions for achieving the consensus have been derived.

There are some issues, not addressed in this paper:
1) in Section~\ref{subsec_directed}, we have dealt with the fixed and directed graph.
Due to the invariance of the weighted average $\sum_{i=1}^{N} p_{i} x_{i} (t)$, we have proved the consensus by extending the result of the undirected graph.
However, for the time-varying directed graph, the weighted average is not invariant,
and this problem appears quite challenging in a technical sense.
2) as mentioned in \textit{Remark \ref{rem_nonlinearity}}, the analysis of this paper can be applied to any bounded nonlinearities, which are strictly increasing within the bounds.
However, the nonlinearities should be componentwise with respect to the state vector.
It would be worthwhile to extend the results of this paper to general multi-dimensional systems for real applications.
3) This paper does not address the speed of convergence.
Due to the existence of saturations, the state trajectories become nonlinear outside the saturation limits,
which makes the problem challenging.

\appendix%

\section*{Proof of Lemma \ref{ch4_lem_time-varying_hetero_conv_N-1}} 


The proof of \textit{Lemma \ref{ch4_lem_time-varying_hetero_conv_N-1}} is outlined as follows.
We first show that for any $x_{i} (t_{0}) \in \mathbf{R}$ $\forall i \in \mathcal{V}$, $x_{1} (t)$ will converge to its linear region in finite time and remain in it, that is $\exists T_{1} \ge t_{0}$ such that $|x_{1} (t) | \le s_{1}$, $\forall t \ge T_{1}$.
We next show that $|x_{1} (t) |$ for $t \ge T_{1}$ will be converge to $s_{2}$ faster than exponential.
By repeating this process for $\forall i \in \mathcal{V}_{N-1}$, we will prove \textit{Lemma \ref{ch4_lem_time-varying_hetero_conv_N-1}}.
To complete this process, we need the following lemmas.

\begin{lemma} \label{ch4_lem_time-varying_hetero_invariant}
If $x_{i} (t^{*}) \in [ - s_{k} , s_{k} ]$, $\forall i \in \mathcal{V}_{k} := \{ 1,2,...,k \}$, $t^{*} \ge t_{0}$, then $x_{i} (t) \in [ - s_{k} , s_{k} ]$, $\forall i \in \mathcal{V}_{k}$ and $\forall t \ge t^{*}$. 
\end{lemma}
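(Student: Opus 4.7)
The plan is to prove invariance of the box $[-s_k,s_k]^{|\mathcal{V}_k|}$ by a maximum-principle/comparison argument on the envelope $W(t) := \max_{i\in\mathcal{V}_k}|x_i(t)|$. Under \textit{Assumption \ref{ch4_assumption_time-varying}} the Caratheodory solutions are absolutely continuous and the upper Dini derivative of $W$ is well defined; moreover, for a finite pointwise maximum of absolutely continuous functions, $D^+W(t)$ equals the maximum of $D^+|x_i(t)|$ over the active index set $I(t) := \{i\in\mathcal{V}_k : |x_i(t)| = W(t)\}$. Since $W(t^*) \le s_k$ by hypothesis, it suffices to prove that $D^+W(t) \le 0$ whenever $W(t) = s_k$; a standard comparison lemma then forces $W(t) \le s_k$ for all $t \ge t^*$, which is the claim.

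First I would fix $t$ with $W(t) = s_k$, pick any $i \in I(t)$, and, without loss of generality, treat the case $x_i(t) = s_k$ (the case $x_i(t) = -s_k$ is symmetric). Using the ordering $s_1 > s_2 > \cdots > s_N$, every $i \in \mathcal{V}_k$ satisfies $s_i \ge s_k$, so $|x_i(t)| = s_k \le s_i$ lies in the linear region of $\mbox{sat}_i(\cdot)$ and $y_i(t) = x_i(t) = s_k$. Next I would compare $y_j(t)$ with $y_i(t) = s_k$ for each neighbor $j$:
\begin{itemize}
\item If $j \in \mathcal{V}_k$, then $|x_j(t)| \le W(t) = s_k \le s_j$, so $y_j(t) = x_j(t) \le s_k$.
\item If $j \notin \mathcal{V}_k$, then $s_j < s_k$ and $y_j(t) = \mbox{sat}_j(x_j(t)) \le s_j < s_k$.
\end{itemize}
In either case $y_j(t) - y_i(t) \le 0$, and since $\alpha_{ij}(t) \ge 0$, the dynamics (\ref{ch4_sys}) give
\begin{align}
D^+ x_i(t) = \sum_{j=1}^N \alpha_{ij}(t)\bigl(y_j(t) - y_i(t)\bigr) \le 0,
\end{align}
hence $D^+|x_i(t)| \le 0$. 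Taking the maximum over $i \in I(t)$ yields $D^+W(t) \le 0$.

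Combining the two ingredients, $W(t^*) \le s_k$ together with $D^+W(t) \le 0$ at every boundary crossing, a direct application of the comparison result for Dini derivatives (cf.\ the lemma from \cite{Rouche:1975} cited before Section~\ref{subsec_homo}) implies $W(t) \le s_k$ for all $t \ge t^*$, which is exactly $x_i(t) \in [-s_k,s_k]$ for all $i \in \mathcal{V}_k$ and $t \ge t^*$.

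The main obstacle I expect is the book-keeping at the boundary: I need the key inequality $y_j(t) \le s_k$ to hold for \emph{every} neighbor, including those in $\mathcal{V}_k$ whose states could in principle be anywhere. Packaging the two one-sided bounds into a single envelope $W(t) = \max_i|x_i(t)|$ avoids any circular dependence between the upper bound $M(t) \le s_k$ and the lower bound $m(t) \ge -s_k$, because within the active set $I(t)$ the ordering $|x_j| \le W(t)$ is immediate. The rest of the argument is standard differential-inequality reasoning adapted to Caratheodory solutions and Dini derivatives.
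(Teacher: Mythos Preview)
Your proposal is correct and follows essentially the same approach as the paper: a maximum-principle argument showing that the Dini derivative of the envelope is nonpositive at the boundary $s_k$. The only cosmetic difference is that the paper treats $V_M(t)=\max_{i\in\mathcal{V}_k}x_i(t)$ and $V_m(t)=\min_{i\in\mathcal{V}_k}x_i(t)$ separately, whereas you fold both into $W(t)=\max_{i\in\mathcal{V}_k}|x_i(t)|$; the neighbor bounds $y_j\le s_k$ for $j\in\mathcal{V}_k$ and $y_j\le s_j<s_k$ for $j\notin\mathcal{V}_k$ are identical in both arguments.
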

\begin{proof}
Let $V_{M} ( x(t)) = \max_{i \in \mathcal{V}_{k}} \{ x_{i} (t) \}$.
Then, we will show that $D^{+} V_{M} ( x(t)) \le 0$ when $V_{M} ( x(t)) = s_{k}$.
Let $\mathcal{I} (t) = \{ i \in \mathcal{V}_{k} : x_{i} (t) = \max_{i \in \mathcal{V}_{k}} \{ x_{i} (t) \} \}$ be the index set where the maximum is reached at $t$, and consider the upper Dini derivative of $V_{M}$ as follows:
\begin{align}
D^{+} V_{M} ( x(t)) = \max_{i \in \mathcal{I} (t)} \dot{x}_{i} = \max_{i \in \mathcal{I}(t)} \sum_{j=1}^{N} \alpha_{ij} (t) ( y_{j} - y_{i} ).
\end{align}
Then, for $V_{M} = s_{k}$ and $t \ge t^{*}$, it follows that
\begin{align}
D^{+} V_{M}  =& \max_{i \in \mathcal{I}(t)} \left( \sum_{j=1}^{k} \alpha_{ij} (t) ( x_{j} - x_{i} )   + \sum_{k+1}^{N} \alpha_{ij} (t) ( y_{j} - x_{i} ) \right) \nonumber\\
\le& 0.
\end{align}
We next define $V_{m} (x(t)) = \min_{i \in \mathcal{V}_{k}} \{ x_{i} (t) \}$.
Then, we can similarly show that for $V_{m} = - s_{k}$, $D^{+} V_{m} \ge 0$, which completes the proof.
\end{proof}

We next consider the following group of $N$ agents:
\begin{align} \label{ch4_sys_time-varying_diagonal}
\dot{x}_{i} = \sum_{j=1}^{N} \alpha_{ij} (t) ( x_{j} - x_{i} ) - d_{i} (t) x_{i},
\end{align}
where $d_{i} (t)$ is continuous except for a set with measure zero, and satisfies $d_{i} (t) \ge 0$, $\forall t \ge t_{0}$ $\forall i \in \mathcal{V}$.

\begin{definition} \label{ch4_def_diagonal_exponential}
The agents (\ref{ch4_sys_time-varying_diagonal}) is said to be exponentially converge to its equilibrium $x^{*}$ with respect to $k$ if there exist two constants $\Delta, \delta > 0$ such that $|| x (t_{k}) - x^{*} || \le \Delta e^{- \delta k} || x ( t_{0}) - x^{*}  ||$
\end{definition}

\begin{lemma} \label{ch4_lem_time-varying_diagonal}
Suppose that the graph $\mathcal{G} (t)$ is integrally connected with \textit{Assumption \ref{ch4_assumption_hetero}}.
Then,\\
1) the agents (\ref{ch4_sys_time-varying_diagonal}) exponentially achieve the consensus with respect to $k$.\\
2) if there exists at least one agent such that $\int_{t_{0}}^{\infty} d_{i} (t) = \infty$, then the equilibrium point is given by the origin.
\end{lemma}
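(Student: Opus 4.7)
My plan is to prove the two parts separately.

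For Part~1, I would rewrite (\ref{ch4_sys_time-varying_diagonal}) in matrix form as $\dot{x}=-(L(t)+D(t))x$ with $D(t)=\diag(d_{i}(t))\ge 0$. Since $-(L(t)+D(t))$ is Metzler with row-sum vector $-D(t)\mathbf{1}\le 0$, the state transition matrix $\Phi(t,s)$ is entrywise nonnegative and row-sub-stochastic. I would first establish boundedness by showing $\max_{i}|x_{i}(t)|$ is nonincreasing: for $i^{*}\in\argmax_{i}|x_{i}|$,
\begin{align*}
D^{+}|x_{i^{*}}|=\sgn(x_{i^{*}})\dot{x}_{i^{*}}=\sum_{j=1}^{N}\alpha_{i^{*}j}(t)\bigl(\sgn(x_{i^{*}})x_{j}-|x_{i^{*}}|\bigr)-d_{i^{*}}(t)|x_{i^{*}}|\le -d_{i^{*}}(t)|x_{i^{*}}|\le 0,
\end{align*}
using $\sgn(x_{i^{*}})x_{j}\le|x_{j}|\le|x_{i^{*}}|$. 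Next, on each interval $[t_{k-1},t_{k})$ on which the integrated graph is connected, I would establish a uniform discrete-time contraction of the spread $V_{k}:=\max_{i}x_{i}(t_{k})-\min_{i}x_{i}(t_{k})$ by a factor $\rho\in(0,1)$ independent of $k$. \textit{Assumption~\ref{ch4_assumption_hetero}} supplies the uniform lower bound $\alpha_{ij}\ge\alpha_{\min}$ on active edges, and integral connectivity guarantees a spanning structure in $\bar{L}_{[t_{k-1},t_{k})}$; these combine in a Moreau/Cao-Morse-Anderson-style information-propagation argument to yield a uniformly bounded Dobrushin-type ergodicity coefficient on the sub-stochastic $\Phi(t_{k},t_{k-1})$. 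Iterating gives $V_{k}\le\rho^{k}V_{0}$; combining with the envelope provided by $\max_{i}|x_{i}|$ yields the exponential contraction $\|x(t_{k})-x^{*}\|\le\Delta e^{-\delta k}\|x(t_{0})-x^{*}\|$ required by \textit{Definition~\ref{ch4_def_diagonal_exponential}}.

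For Part~2, I would argue by contradiction. Suppose the agents reach consensus at some $C\neq 0$ while $\int_{t_{0}}^{\infty}d_{i_{0}}(\tau)\,d\tau=\infty$ for some $i_{0}\in\mathcal{V}$. Consider the average $\bar{x}(t):=\frac{1}{N}\sum_{i=1}^{N}x_{i}(t)$, so $\dot{\bar{x}}(t)=-\frac{1}{N}\sum_{i=1}^{N}d_{i}(t)x_{i}(t)$. Fix $\epsilon\in(0,|C|/2)$ and choose $T\ge t_{0}$ such that $|x_{i}(s)-C|\le\epsilon$ for all $s\ge T$ and $i\in\mathcal{V}$. Writing $x_{i}(s)=C+(x_{i}(s)-C)$ and integrating from $T$ to $t$,
\begin{align*}
\bar{x}(t)-\bar{x}(T)=-\frac{C}{N}\int_{T}^{t}\sum_{i=1}^{N}d_{i}(s)\,ds+R(t),\qquad |R(t)|\le\frac{\epsilon}{N}\int_{T}^{t}\sum_{i=1}^{N}d_{i}(s)\,ds.
\end{align*}
Since $\int_{T}^{t}\sum_{i}d_{i}(s)\,ds\ge\int_{T}^{t}d_{i_{0}}(s)\,ds\to\infty$ as $t\to\infty$ and $|C|-\epsilon>|C|/2>0$, the right-hand side diverges in absolute value while $\bar{x}(t)\to C$ is bounded, a contradiction. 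Hence $C=0$.

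I expect Part~1 to be the main obstacle. Unlike the pure consensus case $\dot{x}=-L(t)x$, the damping $-D(t)x$ can temporarily increase the spread $V=\max_{i}x_{i}-\min_{i}x_{i}$ when all states share the same sign and the $d_{i}$'s are heterogeneous (for instance, with $x_{1}>x_{2}>0$, $d_{1}=0$, $d_{2}>0$ and momentary Laplacian inactivity on the $(1,2)$ edge, the minimum decreases while the maximum is stationary). Overcoming this will likely require either decomposing $x$ into its nonnegative and nonpositive parts and applying Metzler monotonicity to each separately, or working directly with the product structure of the sub-stochastic $\Phi(t_{k},t_{k-1})$ via a Hajnal-type coefficient strictly less than one; both approaches are compatible with the Moreau/Hendrickx-style machinery already invoked in the proofs of \textit{Theorems~\ref{ch4_the_time-varying_homo}} and \textit{\ref{ch4_the_time-varying_hetero}}.
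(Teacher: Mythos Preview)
Your Part~2 argument is essentially identical to the paper's: both differentiate the (unnormalized) sum $S(t)=\sum_i x_i(t)$, obtain $\dot S=-\sum_i d_i x_i$, and then use convergence of $S$ together with $\int_{t_0}^\infty d_{i_0}=\infty$ to force the consensus value to be zero. Your $\epsilon$-splitting $x_i=C+(x_i-C)$ is in fact a cleaner way to close the contradiction than the paper's somewhat informal step $\int\sum_i d_i|x_i|<\infty$.

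For Part~1 the paper does \emph{not} give an argument at all: it simply invokes Theorem~5.2 of Shi--Johansson (2013), observing that $d_i(t)\ge 0$ puts the system in the scope of that result. Your plan to prove the contraction from scratch is therefore strictly more ambitious than what the paper does. The obstacle you identify is real: the spread $\max_i x_i-\min_i x_i$ is \emph{not} a Lyapunov function for $\dot x=-(L(t)+D(t))x$ when the $d_i$ are heterogeneous and all states share a sign, exactly as in your counterexample. Of your two proposed fixes, the sub-stochastic/Hajnal route is the one that actually works and is what underlies the Shi--Johansson result the paper cites; the ``split into $x^+$ and $x^-$'' idea is not obviously viable because the dynamics do not decouple along that decomposition. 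If you want a self-contained argument, a cleaner path than tracking the spread is to note that $\max_i|x_i(t)|$ is nonincreasing (which you already have), then show that on each $[t_{k-1},t_k)$ the row-sub-stochastic transition matrix $\Phi(t_k,t_{k-1})$ has a uniformly positive column (from the $\delta$-connectivity and \textit{Assumption~\ref{ch4_assumption_hetero}}), which gives a uniform Dobrushin contraction of the oscillation of $\Phi$ and hence of the spread at the sampling times $t_k$; this is precisely the content of the cited external theorem.
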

\begin{proof}
Since $d_{i} (t) \ge 0$, $\forall t \ge t_{0}$, 
the proof of the condition 1) directly follows from \textit{Theorem 5.2} in \cite{Shi:2013}.
We next prove the condition 2).
Let $S(t) = \sum_{i=1}^{N} x_{i} (t)$ and then the derivative of $S(t)$ is given by
\begin{align}
D^{+} S(t) 
=& \sum_{i=1}^{N} \sum_{j=1}^{N} \left( \alpha_{ij} (t) ( x_{j} (t) - x_{i} (t) ) - d_{i} (t) x_{i} (t) \right) \nonumber\\
=& - \sum_{i=1}^{N} d_{i} (t) x_{i} (t),
\end{align}
and thus the solution $S(t)$ is
\begin{align}
S(t) = S(t_{0}) - \int_{t_{0}}^{t} \sum_{i=1}^{N} d_{i} (\tau) x_{i} ( \tau ) d \tau.
\end{align}
From the condition 1), we know that the agents (\ref{ch4_sys_time-varying_diagonal}) achieve the consensus, and thus $S(t)$ converges to some $S^{*}$.
Therefore, it follows that
\begin{align} \label{lem_time-varying_diagonal_proof_eq}
\int_{t_{0}}^{\infty} \sum_{i=1}^{N} d_{i} ( \tau ) | x_{i} (\tau ) | d \tau = | S ( t_{0}) - S^{*} | < \infty.
\end{align}
If there exists at least one agent such that $\int_{t_{0}}^{\infty} d_{i} ( \tau) d \tau  = \infty$, 
(\ref{lem_time-varying_diagonal_proof_eq}) implies that $|x_{i} (t) |$ must converge to the origin, which completes the proof.
\end{proof}

Then, now we are going to prove \textit{Lemma \ref{ch4_lem_time-varying_hetero_conv_N-1}}.



\begin{proofof} \textit{Lemma \ref{ch4_lem_time-varying_hetero_conv_N-1}}:

Step 1.
As mentioned above, we will first show that, for any $x_{i} (t_{0}) \in \mathbf{R}$ $\forall i \in \mathcal{V}$, $x_{1} (t)$ will enter the interval $(-s_{1}, s_{1} )$ in finite time, and remain in it.

Consider the time derivative of $|x_{1} (t)|$ as follows:
\begin{align}
D^{+} | x_{1} (t) | \le& \sum_{j=1}^{N} \alpha_{1j} (t) ( | y_{j} (t) | - | y_{1} (t) | ) \nonumber\\
\le& \sum_{j=1}^{N} \alpha_{1j} (t) ( s_{2} - | y_{1} (t) | ).
\end{align}
Then, the solution is given by
\begin{align}
| x_{1} (t) | \le | x_{1} (t_{0}) | + \int_{t_{0}}^{t} \sum_{j=1}^{N} \alpha_{1j} ( \tau ) ( s_{2} - y_{1} ( \tau ) ) d \tau.
\end{align}
If $|x_{1} (t) | \ge s_{1}$, $\forall t \ge t_{0}$, it follows that
\begin{align}
| x_{1} (t) | \le& | x_{1} (t_{0}) | + \int_{t_{0}}^{t} \sum_{j=1}^{N} \alpha_{1j} ( \tau ) ( s_{2} - s_{1} ) d \tau \nonumber\\
\le& | x_{1} (t_{0}) | - s_{1,2} \int_{t_{0}}^{t} \sum_{j=1}^{N} \alpha_{1j} (\tau ) d \tau,
\end{align}
where $s_{i,j} = s_{j} - s_{i}$.
Since the graph $\mathcal{G} (t)$ is integrally connected, i.e., $\int_{t_{0}}^{\infty} \sum_{j=1}^{N} \alpha_{1j} ( \tau) d \tau = \infty$, it follows that $\lim_{t \rightarrow \infty} | x_{1} (t) | = - \infty$, which contradicts $|x_{1} (t) | \ge 0$, $\forall t \ge t_{0}$.
Moreover, from \textit{Lemma \ref{ch4_lem_time-varying_hetero_invariant}}, we can conclude that there exists $T > 0$ such that for any $|x_{1} (t_{0})| \ge s_{1}$ and $\forall t \ge T$, it holds $x_{1} (t) \in ( - s_{1} , s_{1} )$.
Moreover, since the consensus algorithm is bounded and the average value is invariant, the remaining states remain bounded for any finite time (see, \cite{Marchand:2005}).

Step $p$, $p = 2,...,N-1$.
In this step, we will show that, for any $x_{p} (t_{0} ) \in \mathbf{R}$ and $p = 2,...,N-1$, $x_{p} (t)$ will enter the interval $(-s_{p}, s_{p} )$ in finite time, and remains in it.

In the previous step, we have shown that $\forall i \in \mathcal{V}_{p-1}$, $x_{i}$ will enter and remain in the interval $(-s_{i}, s_{i})$ in finite time. Thus, the resulting dynamics of agent $i$ for $i \in \mathcal{V}_{p-1}$ is given by
\begin{align}
\dot{x}_{i} (t) = \sum_{j=1}^{p-1} \alpha_{ij} (t) ( x_{j} (t) - x_{i} (t) ) + \sum_{j=p}^{N} \alpha_{ij} (t) ( y_{j} (t) - x_{i} (t) ).
\end{align}
We next consider the upper Dini derivative of $|x_{i} (t) |$ for $i \in \mathcal{V}_{p-1}$ as follows:
\begin{align}
D^{+} | x_{i} (t) | \le& \sum_{j=1}^{p-1} \alpha_{ij} (t) ( | x_{j} (t) | - | x_{i} (t) | )  + \sum_{j=p}^{N} \alpha_{ij} (t) ( | y_{j} (t) | - | x_{i} (t) | ) \nonumber\\
\le& \sum_{j=1}^{p-1} \alpha_{ij} (t) ( | x_{j} (t) | - | x_{i} ( t) | )  + \sum_{j=p}^{N} \alpha_{ij} (t) ( s_{p} - | x_{i} (t) | ).
\end{align}
Let $p(x,t) = [ |x_{1} (t) | , ..., |x_{p-1} (t) | ]^{T}$, and $L_{p-1} (t) \in \mathbf{R}^{p -1 \times p-1}$ be the Laplacian of the subgraph $\mathcal{G}_{p-1} (t) = ( \mathcal{V}_{p-1}, \mathcal{E}_{p-1} (t) , \mathcal{A}_{p-1} (t) ) \subset \mathcal{G} (t)$, and define a diagonal matrix $D_{p-1} (t) = \diag \left( \sum_{j=p}^{N} \alpha_{1j} (t) ,..., \sum_{j=p}^{N} \alpha_{p-1 j} (t) \right)$.
Then, we have
\begin{align}
D^{+} p ( x,t) \le - \left( L_{p-1} (t) + D_{p-1} (t) \right) p (x,t) + D_{p-1} (t) s_{p} \mathbf{1}.
\end{align}
We next consider the following comparison system:
\begin{align}
\dot{z} (t) = - \left( L_{p-1} (t) + D_{p-1} (t) \right) z(t) + D_{p-1} (t) s_{p} \mathbf{1},
\end{align}
where $z \in \mathbf{R}_{+}^{p-1}$.
By denoting the error vector $\bar{z} = z - s_{p} \mathbf{1}$, we have
\begin{align}
\dot{\bar{z}} (t) = - \left( L_{p-1} (t) + D_{p-1} (t) \right) \bar{z} (t).
\end{align}
Since the graph $\mathcal{G} (t)$ is integrally connected over $[0, \infty)$, without loss of generality, we assume that there are $m$ integrally connected subgraph in $\mathcal{G}_{p-1} (t)$ over $[0, \infty )$, where $p-1 \ge m \ge 1$.
Then, by rearranging the order of the nodes, the Laplacian matrix $L_{p-1} (t)$ can be written in the block matrix form as $L_{p-1} (t) = \blkdiag \left( L_{p-1}^{1} (t) , ..., L_{p-1}^{m} (t) \right)$, where $L_{p-1}^{i} (t)$ for $i = 1,...,m$ is the Laplacian matrix of the corresponding integrally connected subgraph of $\mathcal{G}_{p-1} (t)$.
We can similarly rewrite the diagonal matrix $D_{p-1} (t)$ as $D_{p-1} (t) = \blkdiag \left( D_{p-1}^{1} (t) ,..., D_{p-1}^{m} (t) \right)$ with $D_{p-1}^{i} (t) = \diag \left( d_{1}^{i} (t) ,..., d_{m_{i}}^{i} (t) \right)$.
Then, 
there exists at least one element $q \in [ 1,...,m_{i}]$ for each $i = 1,...,m$ such that $\int_{t_{0}}^{\infty} d_{q}^{i} (t) dt = \infty$.
Therefore, according to \textit{Lemma \ref{ch4_lem_time-varying_diagonal}}, $\bar{z} (t)$ converges exponentially fast to the origin with respect to $k$, that implies $z_{i} (t)$ converges exponentially fast to $s_{p}$ with respect to $k$.
Finally, according to the comparison lemma, we can conclude that, for any $|x_{i} (t_{0}) | \ge s_{p}$, $i \in \mathcal{V}_{p-1}$, $x_{i}$ will enter the interval $[-s_{p}, s_{p}]$ faster than exponential with respect to $k$, that is, there exist two constants $\Delta, \delta > 0$ such that for $i\in \mathcal{V}_{p-1}$ and $t \in [ t_{k-1} , t_{k} )$,
\begin{align}
| x_{i} (t) | \le s_{p} + \Delta (t),
\end{align}
where $\Delta (t) = \Delta e^{-\delta k}$.

To complete the proof of Step $p$, we will next prove that for any $|x_{p} (t_{k} ) | \ge s_{p}$, $x_{p}$ will converge to the interval $(-s_{p}, s_{p} )$ in finite time.
Since the graph is integrally connected, the proof is divided as the following two cases depending on the graph topology of $\bar{\mathcal{G}}_{[0,\infty)}$:

1) $\exists j \in [ p+1,..., N]$ such that $(p,j) \in \bar{\mathcal{E}}$.

Consider $|x_{p} (t) |$ and its upper Dini derivative as follows:
\begin{align}
D^{+} | x_{p} (t) | \le& \sum_{j=1}^{p-1} \alpha_{p j } (t) ( | x_{j} (t) | - | y_{p} (t) | ) + \sum_{j=p}^{N} \alpha_{pj} (t) ( | y_{j} (t) | - | y_{p} (t) | ) \nonumber\\
\le& \sum_{j=1}^{p-1} \alpha_{pj} (t) ( s_{p} + \Delta (t) - | y_{p} (t) | ) + \sum_{j=p}^{N} \alpha_{pj} (t) ( s_{p+1} - | y_{p} (t) | ).
\end{align}
We next assume that $|x_{p} (t) | \ge s_{p}$, $\forall t \ge t_{k}$.
Then, we have
\begin{align}
|x_{p} (t) |  \le  | x_{p} (t_{k} ) | 
+  \int_{t_{k}}^{t} \left(  \sum_{j=1}^{p-1} \alpha_{pj} (  \tau  ) \Delta (  \tau  )  - \sum_{j=p}^{N} \alpha_{pj} (  \tau ) s_{p,p+1}  \right)  d \tau.
\end{align}
Since $\alpha_{ij} (t)$ is upper-and lower-bounded from \textit{Assumption \ref{ch4_assumption_hetero}} and continuous over each time interval, and $\lim_{t \rightarrow \infty} \Delta (t) = 0$, it follows that $\lim_{t \rightarrow \infty} | x_{p} (t) | = - \infty$, which is a contradiction.
Therefore, for any $|x_{p} (t_{0})| \ge s_{p}$, there exists $T> 0$ such that it holds $x_{p} (t) \in ( - s_{p}, s_{p} )$, $\forall t \ge T$.

2) for  $\forall j \in [p+1,...,N]$, $(p,j) \notin \bar{\mathcal{E}}$.

In this case, there exists at least one agent $i \in \mathcal{V}_{p-1}$ such that $(i,j) \in \bar{\mathcal{E}}$, $j \in [ p+1,...,N]$.
Then, we consider the agent $i$, $i \in \mathcal{V}_{p-1}$, and its upper Dini derivative as follows:
\begin{align}
D^{+} | x_{i} (t) | \le& \sum_{j=1}^{p-1} \alpha_{ij} (t) ( | x_{j} (t) | - | x_{i} (t) | ) + \sum_{j=p}^{N} \alpha_{ij} (t) ( | y_{j} (t) | - | x_{i} (t) | ).
\end{align}
We assume that $|x_{i} (t) | \ge s_{p}$, $\forall t \ge t_{k}$.
Then, we have
\begin{align}
D^{+} | x_{i} (t) | \le \sum_{j=1}^{p} \alpha_{ij} (t) \Delta (t) - \sum_{j=p+1}^{N} \alpha_{ij} (t) s_{p,p+1},
\end{align}
which gives
\begin{align}
| x_{i} (t) |  \le | x_{i} (t_{k})|  +  \int_{t_{k}}^{t}  \left( \sum_{j=1}^{p} \alpha_{ij} ( \tau ) \Delta (  \tau  )  - \sum_{j=p+1}^{N} \alpha_{ij} ( \tau  ) s_{p,p+1}  \right) \! \! d \tau .
\end{align}
Then, following case 1), we can conclude that there exists $T' > 0$ such that it holds $x_{i} (t) \in ( - s_{p} , s_{p} )$, $\forall t \ge T'$.
Repeating this argument for every $i \in \mathcal{V}_{p}$, we can conclude that since $\bar{\mathcal{G}}_{[0, \infty)}$ is connected, there exists $T \ge T' \ge 0$ such that for any $x_{i} (t_{0}) \ge s_{p}$, $\forall i \in \mathcal{V}_{p}$, it holds $x_{i} (t) \in ( -s_{p} , s_{p} )$ $\forall t \ge T$.

Step $N$. We will show that, for any $x_{N} (t_{0}) \in \mathbf{R}$ and $|x_{i} (t) | \le s_{N-1}$, $\forall i \in \mathcal{V}_{N-1}$, $\forall t \ge T$, we have $\lim_{t \rightarrow \infty} |x_{i} | \le s_{N}$, $\forall i \in \mathcal{V}_{N-1}$.

Since we have shown in Step 1-to-(N-1) that $|x_{i} ( t )| \le s_{N-1}$, $\forall i \in \mathcal{V}_{N-1}$, $\forall t \ge T$, we assume that $|x_{i} (t_{0}) | \le s_{N-1}$, $\forall i \in \mathcal{V}_{N-1}$.
Then, for $i \in \mathcal{V}_{N-1}$, we have
\begin{align}
\dot{x}_{i} (t) = \sum_{j=1}^{N-1} \alpha_{ij} (t) ( x_{j} (t) - x_{i} (t) ) + a_{iN} (t) ( y_{N} (t) - x_{i} (t) ).
\end{align}
Then, with the same argumentation as above, we have $\lim_{t \rightarrow \infty} | x_{i} (t) | \le s_{N}$, $\forall i \in \mathcal{V}_{N-1}$, which completes the proof. 
\end{proofof}


%
%

%

\bibliographystyle{IEEEtran}
\bibliography{reference}             

\end{document}